\definecolor{darkgreen}{rgb}{0.0, 0.5, 0.0}
\newcommand{\journalversion}[1]{#1}
\newcommand{\blindversion}[1]{}
\newtheorem{theorem}{Theorem}
\newtheorem{lemma}{Lemma}
\newtheorem{definition}{Definition}
\newtheorem{proposition}{Proposition}
\numberwithin{equation}{section}
\DeclareMathAlphabet{\pazocal}{OMS}{zplm}{m}{n}
\newcommand{\IN}{\mathbb{A}}
\DeclareMathOperator*{\argmax}{arg\,max}
\newcommand{\qm}[1][s]{q_{\frac{r_F}{2}}}
\newcommand{\qtss}[1][s]{q_{t_s}}
\newcommand{\qs}[1][s]{q^*}
\newcommand{\qw}[1][s]{q_w}
\newcommand{\qmin}[1][s]{\alpha^{\frac{1}{\alpha-1}}}
\newcommand{\rft}[1][s]{\frac{r_F}{2}}
\newcommand{\qsht}[1][s]{q_{\hat{s}}}
\newcommand{\Gad}[1]{\Gamma\left(#1 \right)}
\newcommand{\Gainv}[1]{\Gamma^{-1}\left(#1 \right)}
\newcommand{\qga}[1][s]{q_{\gamma r_F}}
\newcommand{\qg}[1][s]{q_{\frac{r_F}{\gamma}}}
\newcommand{\qgg}[1][s]{q_{\frac{r_F}{\gamma^2}}}
\newcommand{\qss}[1][s]{q_{s}}
\newcommand{\qip}{q_{i+1}}
\newcommand{\bF}{\overline{F}}
\newcommand {\bear}{\begin{eqnarray}}
\newcommand {\eear}{\end{eqnarray}}
\newcommand {\bearn}{\begin{eqnarray*}}
\newcommand {\eearn}{\end{eqnarray*}}
\newcommand {\opt}{\mbox{\normalfont opt}}
\newcommand{\Fb}{\pazocal{F}}
\newcommand{\Cb}{\pazocal{C}}
\newcommand{\Gb}{\pazocal{G}}
\newcommand{\Lb}{\pazocal{L}}
\newcommand{\Expect}{\mathbb{E}}
\def\rs{{p^*}}
\def\qs{{q^*}}
\newcommand{\distSet}{\pazocal{D}_{[\lb, \ub]}}
\newcommand{\MechSet}{\pazocal{P}}
\newcommand{\aDistSetqa}[2][]{\pazocal{F}_{#1}( #2)}
\DeclareRobustCommand*\cal{\@fontswitch\relax\mathcal}
\newcommand{\pvect}{\mathbf{p}}
\newcommand{\qvect}{\mathbf{q}}
\newcommand{\Iset}{\mathcal{I}_{\pvect, \qvect}}
\newcommand{\IsetPlus}{\mathcal{I}_{\pvect, \qvect}^{+}}
\newcommand{\Rev}{\mathrm{Rev}}
\newcommand{\lb}{\underline{v}}
\newcommand{\ub}{\overline{v}}
\newcommandx{\rl}[2][1=\Cb, 2=\Iset]{\underline{r}_{#1}\left(#2\right)}
\newcommandx{\rh}[2][1=\Cb, 2=\Iset]{\overline{r}_{#1}\left(#2\right)}
\newcommand{\Ratio}[1][\Psi,F]{\mathrm{Ratio}\left(#1\right)}
\newcommand{\MMRatio}[1][\MechSet,\Cb(\Iset)]{\mathfrak{R}\left(#1\right)}
\newcommandx{\LRegret}[2][1=\lambda, 2={\Psi,F}]{R_{#1}\left(#2\right)}
\numberwithin{equation}{section}
\pgfplotsset{compat=1.18}
\pgfplotsset{
  discard if not/.style 2 args={
    x filter/.code={
      \edef\temp{\thisrow{#1}}%
      \edef\val{#2}%
      \ifx\temp\val
      \else
      \fi
    }
  }
}
\DeclareRobustCommand*\cal{\@fontswitch\relax\mathcal}
\begin{document}

\setstretch{1.5}
	
	\title{Fast Revenue Maximization}

	\journalversion{
		\date{first version: July 11, 2024; last revised: May 15, 2026}
		\author[1]{Achraf Bahamou}
		\author[2]{Omar Besbes}
		\author[3]{Omar Mouchtaki}
		\affil[1]{Columbia University, IEOR department, \texttt{achraf.bahamou@columbia.edu}}
		\affil[2]{Columbia University, Graduate School of Business, \texttt{obesbes@columbia.edu}}
		\affil[3]{NYU Stern School of Business, \texttt{om2166@nyu.edu}}
		\maketitle
	}
	
	\blindversion{
		\author{}
		\date{}
		\maketitle
		\vspace{0cm}
	}

\def\gqi{{\Gainv{q_{i}}}}
\def\gqim{{\Gainv{q_{i-1}}}}
\def\gqip{{\Gainv{q_{i+1}}}}
\def\gqimm{{\Gainv{q_{i-2}}}}

\def\pii{{p_{i}}}
\def\pim{{p_{i-1}}}
\def\pip{{p_{i+1}}}
\def\pimm{{p_{i-2}}}

\def\qi{{q_{i}}}
\def\qim{{q_{i-1}}}
\def\qip{{q_{i+1}}}
\def\qimm{{q_{i-2}}}

\def\mi{{m_{i}}}
\def\mim{{m_{i-1}}}
\def\mimm{{m_{i-2}}}
\def\ri{{r_{i-1,i}}}

\newcommand{\pars}[1]{\left(#1 \right)}
\newcommand{\acols}[1]{\left[#1 \right]}

\begin{abstract}
\textbf{Problem definition:} We study a data-driven pricing problem in which a seller sets a price for a single item based on demand observed at a limited number of historical prices. Our goal is to quantify the value of such information and to guide efficient price experimentation under practical constraints. 
\textbf{Methodology/results:} Our main methodological contribution is an exact reduction that characterizes the maximin revenue ratio, defined as the worst-case revenue achievable using only past data relative to the optimal revenue under full information. This reduction transforms an infinite-dimensional problem into a tractable one-dimensional optimization problem, allowing us to compute near-optimal pricing policies with explicit guarantees and to precisely quantify the value of historical data. 
\textbf{Managerial implications:} Motivated by practical constraints that limit price changes, we first evaluate the value of local information and show that the sign of the revenue gradient at a single price can provide significant guidance. We then use our framework to design efficient price experiments: we develop a method to select the next price to test so as to maximize future robust performance, and show how to substantially reduce the number of experiments needed to achieve target revenue guarantees in dynamic pricing. Finally, we show that our approach remains effective with noisy demand data, achieving near-optimal performance with as few as $25$ to $100$ samples per price.		

\noindent
\textbf{Keywords.}  pricing, experimentation, data-driven algorithms, value of data, distributionally robust optimization, conversion rate, limited information. 
\end{abstract}

\section{Introduction} \label{sec:intro_problem_formulation}
In this work, we investigate the monopoly pricing problem in which a decision-maker must select a price to sell a single good to a buyer. This operational problem crystallizes a fundamental trade-off between sales volume and profit margin: setting a price too low increases demand but reduces margins, whereas setting it too high improves margins at the cost of lost sales.

An important practical challenge in pricing is that the distribution of buyers’ values is typically unknown to the decision-maker. As a result, pricing decisions must rely on historical information or transaction data, which commonly consist of prices previously offered and the corresponding purchase decisions observed at those prices. While one could, in principle, experiment with a wide range of prices to learn buyer behavior, in practice firms often restrict themselves to a relatively small set of distinct prices.
There are several reasons for this constraint. From the customer side, frequent and unpredictable price changes may undermine perceptions of fairness, discourage loyalty, or induce strategic behavior such as delaying purchases in anticipation of future discounts. From the firm side, implementing a large menu of prices entails significant operational overhead: pricing systems must be updated and coordinated across channels. Moreover, excessive price variation can blur product positioning and dilute brand value. These considerations imply that, although rich experimentation may be theoretically appealing, effective pricing in practice must be designed around a limited number of price points.

Motivated by this setting, our goal in this paper is twofold. First, we seek to \textit{quantify the value of historical pricing data} in the form of prices and their associated conversion rates by characterizing what level of revenue performance can be guaranteed based solely on such information. Second, we use this framework to \textit{guide efficient price experimentation} by identifying when additional price observations are most valuable and how they should be chosen.  To this end, we derive precise guarantees on the informativeness of these data structures, enabling decision-makers to effectively and efficiently learn the revenue curve and, ultimately, achieve ``fast" revenue maximization.

To study these questions, we consider a stylized monopoly pricing environment in which a seller offers a single indivisible good to a buyer whose willingness to pay is drawn from an unknown distribution supported on a bounded interval. The seller does not observe buyers’ values directly and instead relies on conversion rates observed at a finite number of previously offered prices, where the conversion rate at a given price corresponds to the fraction of buyers whose willingness to pay exceeds that price. Because many distinct demand distributions may be consistent with the same observations and can lead to very different revenue outcomes, we evaluate pricing decisions through a robust perspective: we measure the fraction of optimal revenue that can be guaranteed across all distributions consistent with the observed data. This worst-case performance criterion allows us to interpret historical pricing data through the lens of information quality and provides a principled way to assess the value of past price observations.

\subsection{Main Contributions}
\noindent\textbf{Performance evaluation and a tractable reduction.}
A central question in this setting is to evaluate the performance of a pricing policy given only a finite set of observed conversion rates. This requires optimizing over all demand distributions that are consistent with the observed data and asking how the policy performs under the most adverse such distribution. This evaluation problem is inherently challenging. The set of demand distributions consistent with a given collection of prices and conversion rates is infinite-dimensional and nonparametric, and may additionally be subject to constraints. The constraints arise in the form of shape constraints (e.g., distributions have to be regular) or constraints on quantiles.

Our first main contribution is to show that this problem admits a tractable reduction. 
In \Cref{thm:nature_reduction}, we establish that for any pricing policy and any information set consisting of $N$ prices and conversion rates, the worst-case ratio can be obtained by considering a \emph{one-dimensional family} of carefully constructed candidate demand distributions. This reduction transforms the original infinite-dimensional problem into a one-dimensional optimization problem.
Conceptually, the proof proceeds by explicitly constructing extremal demand distributions that are consistent with the observed conversion rates and that minimize revenue for a given policy. While the argument builds on ideas developed in \cite{ABBSinglePoint}, our analysis generalizes prior work along several dimensions. First, the reduction applies not only to regular demand distributions but also to the class of all distributions supported on the value range, allowing us to study settings without shape restrictions. Second, our results hold for information sets containing an arbitrary number of observed prices, whereas existing characterizations are limited to a single observed price. Finally, although the exposition focuses on revenue ratios, the underlying arguments extend naturally to absolute regret objectives.

\noindent\textbf{Near-optimal policies and the value of information.}
In \Cref{sec:value_information} we turn to the question of characterizing the exact value of a given information set, as well as the construction of pricing policies achieving near-optimal worst-case performance. A priori, this problem introduces an additional layer of complexity, as it requires solving a maximin optimization problem over two infinite-dimensional spaces: the space of randomized pricing policies and the space of value distributions.

Leveraging \Cref{thm:nature_reduction}, we provide in \Cref{sec:value_information} a tractable linear program whose solution yields a provably near-optimal pricing policy.
Our approach builds on the discretization framework of \cite{ABBSinglePoint} for the case $N=1$. When multiple price observations are available, however, the structural properties underlying that discretization no longer hold, which prevents a direct extension.
We address this issue by introducing a refined discretization of the price space that restores the required structure. This leads to a finite linear program whose value provides a certified lower bound on the maximin worst-case ratio and whose solution directly yields a near-optimal randomized pricing policy. Moreover, we show that the discretization error scales as $O(1/M)$, where $M$ is the number of grid points, improving over the $O(1/\sqrt{M})$ rate obtained in \cite{ABBSinglePoint} when $N=1$.

This characterization plays a central conceptual role in the paper. It allows us to interpret the maximin worst-case ratio as a quantitative measure of the \emph{intrinsic value of information}: a higher ratio indicates that the observed prices and conversion rates are sufficiently informative to support near-optimal pricing, whereas a lower ratio reflects that the data are not informative enough, regardless of the pricing policy used. This perspective enables direct comparisons across different information structures and serves as the foundation for our analysis of price experimentation, where we study how additional observations improve guaranteed revenue performance.

\noindent \textbf{New insights on price experimentation.} 
 We consider in \Cref{sec:gradient} the practical setting in which the seller cannot substantially vary the offered price and instead observes demand realizations at two neighboring prices. Leveraging our theoretical characterization of the value of information, we assess the value of such local experiments and quantify the impact of knowing the gradient of the revenue curve at a given price as a function of its magnitude. Our results show that when demand is regular, the mere sign of the gradient may already provide substantial information to the seller. More importantly, our framework enables a rigorous cost–benefit analysis by identifying scenarios in which the sign of the gradient (which can be estimated through cheaper experiments) already provides sufficient information, and scenarios in which the numerical value of the gradient (which requires longer and more expensive experiments) is needed to achieve better robust performance.

We then use our framework as a way to develop or enrich algorithms for price experimentation. In \Cref{sec:extra_point}, we consider a setting in which the seller observed the demand at a single historical point and then may select a new price to experiment with before committing to a future price. We show that by judiciously selecting the second price to experiment with, the seller can significantly increase the worst-case ratio,  by $10$ to $20\%$ across all possible conversion rates observed at the first price.
In \Cref{sec:ternary}, we demonstrate that our framework can be used to improve dynamic pricing algorithms. We illustrate that, by better quantifying the value of data, one can  reduce the extent of price experimentation needed to reach a certain revenue guarantee.
We consider the case in which the value distribution is regular (and hence the revenue curve is unimodal), and compare our stopping criterion with the commonly used one when experimenting with the Ternary search algorithm. In our experiments, our refined stopping criterion allows to divide by $2$ to $5$ the number of prices offered to ensure a ratio larger than $99\%$.

\noindent \textbf{Noisy observations.} 
Finally, we numerically investigate robust pricing with noisy demand feedback and analyze how sampling noise affects the maximin pricing policy computed via the LP formulation in \Cref{sec:value_information}. We introduce a projection-based procedure that maps empirical conversion-rate estimates to quantiles that are consistent with the underlying class of value distribution. We show that policies computed from projected noisy quantiles achieve near-optimal worst-case performance once the number of observations per price is moderately large. For instance, our proposed policy is within $10\%$ of the optimal performance with only $25$ to $100$ noisy buy/no-buy decisions at each price.

\subsection{Literature Review}\label{sec:lit_rev}

Pricing when facing an unknown value distribution has been considered in a distributionally robust framework with support information \citep{Bergemann08,Eren2009}, with moment information \citep{azar2013parametric,carrasco2018optimal,chen2022distribution,chen2023screening,wang2024minimax} and with  knowledge of quantiles \citep{azar2013optimal,ABBSinglePoint}. Variants of this problem were also analyzed in the presence of strategic customers \citep{caldentey2016intertemporal}.

The pricing problem has also been studied in the offline data-driven setting where the decision-maker has past data about the value distribution and makes a single pricing decision. 
In the case where past data takes the form of i.i.d. values sampled from the value distribution, \cite{huang2015making} quantifies the number of samples required to be within a factor $(1-\epsilon)$ from the optimal revenue achievable with full knowledge of the distribution.
\cite{fu2015randomization,Babaioff2018,Daskalakis2020,ABBSamples} 
provide performance guarantees when the seller has access to a small number of samples. \cite{chen2023modelfree} investigate the performance of data-driven pricing algorithms with transactional data and \cite{biggs2022convex} proposes pricing policies based on surrogate loss functions to incorporate contextual information. \cite{hu2019data} study a robust joint inventory–pricing problem with demand samples at different prices, characterizing optimal strategies when the ambiguity set includes non-decreasing, piecewise linear, convex (or concave) demand functions that fit historical data within a prescribed mean square error.

In the online setting, \cite{aviv2002pricing,araman2009dynamic,FariasRoy2010Dynamic} examine dynamic pricing within a Bayesian framework, in which the decision-maker possesses prior knowledge about the parameters of the value distribution.
The data structure we consider in this work is related to the one studied in the dynamic pricing and learning literature where the decision-maker observes noisy demand observation at an offered price. \cite{Kleinberg2003,Besbes2009,BroderRusmevichientong2012Dynamic,kerkinzeevi,singla2023bandit} 
study the performance of dynamic pricing policies trading-off between exploration and exploitation in parametric and non-parametric settings.
\cite[Chapter 3]{broder2011online};\cite{cheung2017dynamic,chen2020data,perakis2023dynamic} develop and analyze data-driven pricing policies with limited price changes.
\cite{bu2022online} investigates a dynamic pricing problem with linear demand, in which the decision-maker uses both offline and online data. They analyze the value of historical (offline) data by quantifying the achievable regret rate as a function of three statistics of the offline transactional data (size, location, and dispersion). 
Our work differs from this previous literature along various dimensions: we measure the performance of pricing algorithms with respect to a single-shot pricing decision as opposed to the commonly studied cumulative regret, hence there is no tradeoff between exploration and exploitation in our work. Furthermore, we 
focus on the setting where the decision-maker observes the exact conversion rate at the offered prices and we
derive an \textit{exact} characterization of the worst-case performance of algorithms that allows to exactly quantify the value of past data. In contrast to \cite{bu2022online}, our quantification of the value of data is a function of the whole dataset and therefore enables to provide data-dependent insights. The insights we derive also relate to how a decision-maker should choose prices at which to experiment before committing to a final action. In this sense, our work connects to the literature on best-arm identification and sequential experimentation, which studies how to allocate samples to identify the best action under uncertainty; see, e.g., \cite{chernoff1959sequential,araman2022diffusion,garivier2016optimal,kaufmann2016complexity,russo2020simple}. Relative to this literature, our setting features a continuum of related actions, and our focus is on quantifying the exact value of a given set of historical demand observations. We believe that our results and insights may motivate new algorithmic principles for dynamic pricing and price experimentation.

Our work relates to \cite{leme2023pricing}, which compares the informational value of transactional data and actual observations of each buyer's value. They show that, asymptotically, observing buyers' values provides significantly more information for revenue maximization. Our theoretical results differ by assuming that the decision-maker has a perfect estimate of the conversion rate at a few historical prices and characterizing the \textit{exact} value of such historical data.
It is more closely related to \cite{ABBSinglePoint}, which provides a tractable procedure to arbitrarily approximate the maximin ratio when the number of conversion rates $N=1$. Our work extends their results to settings in which the decision-maker has access to demand information across several prices. This allows us to derive new insights and prescriptions regarding efficient price experimentation. In parallel developments, that appeared after we posted an initial draft of this paper, \cite{daei2024robust} solve the problem when $N=2$ and when the distribution is assumed to be regular. They provide a closed-form characterization of the optimal deterministic price experiment in the same setting as the one we numerically investigate in \Cref{sec:extra_point}. Their focus on deterministic prices, closed forms, and associated insights, complements the understanding of this fundamental problem. \cite[Section 4]{Eren2009} also analyzes the pricing problem with limited experimentation and characterizes the worst-case ratio for general distributions with information about $N$ conversion rates under the assumption that the value distribution is discrete. Our work generalizes this analysis to arbitrary distributions (both continuous and discrete) and provides a different and unified proof which applies not only to  general but also to regular distributions.

Finally, we note that our work broadly relates to \cite{chehrazi2010monotone} which considers a general robust decision problem with limited measurements and applies their methods in the context of optimal debt-settlement, and to  \cite{meister2021learning,leme2023description,okoroafor2023non}  which study the sample complexity required to estimate the cumulative distribution functions with binary feedback. We also contribute to the broad literature which aims at deriving an exact characterization of the value of data in the small to moderate data regime \citep{schlag2006eleven,stoye2009minimax,besbes2023big,besbes2023contextual}.

\section{Problem Formulation}\label{sec:formulation}

We consider the problem of a seller trying to sell one indivisible good to one buyer. We assume that the buyer's value $v$ is drawn from some \textit{unknown} distribution $F$ represented by its cumulative distribution function (cdf) and with support included in $[\lb,\ub]$, where $0 \le \lb < \ub < \infty$. The seller knows that the distribution of values belongs to some non-parametric class of distributions $\Cb$. 
The seller also has information about conversion rates observed at historical prices offered. Formally,  let $\pvect = (p_i)_{i\in \{0,\ldots N+1\}}$ be a non-decreasing sequence of prices such that $p_0 = \lb$ and $p_{N+1} = \ub$ and let $\qvect = (q_i)_{i\in \{0,\ldots N+1\}}$ be a sequence of historical conversion rates observed such that $q_0 =1$ and $q_{N+1} = 0$. The decision-maker information set is defined as
\begin{equation*}
\Iset =  \{(p_i, q_i); \; i \in \{0, \ldots, N+1\} \}.
\end{equation*}
The problem we are interested in is the following: how can the seller leverage the information contained in $\Iset$ to ``robustly'' maximize her revenue, and how should one guide price experimentation, given that such information has already been collected.

\noindent \textbf{Pricing mechanisms and performance.} In this problem, the decision-maker chooses a (potentially) randomized pricing mechanism characterized by the cdf of prices that the seller posts. Formally, we let $\MechSet = \{\Psi \in \distSet \}$ be the set of randomized mechanisms, where $\distSet$ is the set of probability measures on $[\lb,\ub]$. Then, for every $\Psi \in \MechSet$ and any distribution of consumer value $F$, the expected revenue generated from the randomized mechanism is defined as,
 \begin{equation*}
\Rev\left( \Psi,  F \right) = \int_{\lb}^{\ub}\left[\int_{\lb}^{\ub} p \mathbbm{1}\{v \geq p\} d F(v)\right]  d \Psi(p) = \int_{\lb}^{\ub} p \cdot \mathbb{P}_{v \sim F}\left( v \geq p \right) d \Psi(p) = \int_{\lb}^{\ub} \Rev\left(p\vert  F \right) d \Psi(p),
\end{equation*}
where we let  $\Rev \left( p \vert  F\right) = p \cdot \mathbb{P}_{v \sim F}\left( v \geq p \right).$ To ease notation, let $\overline F(p)=1-F(p)$ denote the complementary cumulative distribution function. Since a distribution may have atoms, the sale probability at price $p$ is given by the left limit $\mathbb{P}_{v\sim F}(v\ge p)=\overline F(p-)$. Thus, $\Rev(p\vert F)=p \cdot \overline F(p-)$

In an ideal scenario where the seller has \textit{full knowledge} of the distribution of values $F$, the optimal revenue is obtained by a posted price mechanism \citep{Riley1983} and we denote it by $\opt(F)= \sup_{p \in [\lb,\ub]} \Rev\left(p\vert F\right).$ We also define the set of optimal prices as $\mathcal{O}(F) = \arg \max_{p \in [\lb,\ub]} \Rev\left(p\vert F\right)$\footnote{We note that $\mathcal{O}(F)$ is non-empty as $\Rev\left( \cdot \vert  F\right)$ is upper semi-continuous on a compact segment.}. In this work, we aim at understanding the revenue gap incurred by the seller due to not knowing the actual distribution. Given a mechanism $\Psi \in \MechSet$ and a value distribution $F$, we define this gap through the following ratio, 
\begin{equation*}
\label{eq:gap}
\Ratio = \frac{\int_{\lb}^{\ub} \Rev\left(p\vert F\right) d \Psi(p) }{\operatorname{opt}(F)}.
\end{equation*}

\noindent \textbf{Performance for a given information set.} 
Our  goal  is to understand the spectrum of achievable performances for various information sets $\Iset$. In particular, we assume that the seller is given the information set $\Iset$. We then define the performance of a pricing mechanism $\Psi \in \MechSet$ as the worst-case ratio, where the worst-case is taken over a class of admissible value distributions denoted as $\Cb(\Iset)$ and defined as,
\begin{equation*}
\Cb(\Iset) = \left \{ F \in \Cb \text{ s.t. }  \bF(p_i-)= q_i \text{ for all } (p_i, q_i) \in \Iset \right \}.
\end{equation*}
We note that the set $\Cb(\Iset)$ contains all value distributions in $\Cb$ which are consistent with the information set $\Iset$ observed by the decision-maker.

Consequently, the worst-case performance of any pricing mechanism $\Psi \in \MechSet$ given $\Iset$ is the value of the following optimization problem.
\begin{equation}
\label{eq:problem_nature}
\inf_{F \in \Cb(\Iset)} \Ratio.
\end{equation}
In what follows, we will refer to \eqref{eq:problem_nature} as Nature's problem. We note that a high value of this worst-case ratio indicates that the information set $\Iset$ is highly informative in terms of revenue maximization and that the policy $\Psi \in \MechSet$ used by the decision-maker effectively takes advantage of the information in $\Iset$. 

Although  \eqref{eq:problem_nature} quantifies the performance of a \textit{fixed} policy, we also study the intrinsic value of certain information sets by computing the best achievable worst-case ratio across all pricing mechanisms. This quantity is formally defined as,
\begin{equation}
\tag{Maximin}
\label{eq:minimax}
\MMRatio = \sup_{\Psi \in \MechSet} \inf_{F \in \Cb(\Iset)} \Ratio.
\end{equation}

\noindent \textbf{Distribution classes.}
In this work, we derive results on the performance of pricing mechanisms for two classes of distributions. 
The class of general distributions denoted as $\Gb$ that contains all distributions supported on $[\lb,\ub]$. 
 We also consider the subset of $\Gb$ denoted as $\Fb$ and which denotes the set of regular distributions, i.e., distributions that admit a density, except potentially at the upper boundary of their support and that have a non-decreasing virtual value $v - \frac{1-F(v)}{f(v)}$. We note that the  class of regular distributions contains a wide variety of distributions; see, e.g., \cite{ewerhart2013regular}. This class is central to the pricing and mechanism design literature and can be alternatively described as the class of distributions that induce a concave revenue function in the quantity space.

\section{Reduction of Nature's Problem}\label{sec:Noptpricing}

To evaluate the performance of a pricing mechanism, one needs to solve Nature's optimization problem in \eqref{eq:problem_nature}. This task is \textit{a priori} challenging, as \eqref{eq:problem_nature} is an infinite-dimensional optimization problem over the nonparametric class $\Cb(\Iset)$. The constraints defining $\Cb(\Iset)$ combine (i) compatibility with the historical information set $\Iset$ through the equalities $\overline F(p_i-)=q_i$, and (ii) potential structural restrictions on the value distribution, such as regularity when $\Cb=\Fb$. 

In this section, we focus on the worst-case ratio objective in \eqref{eq:problem_nature} and show that Nature’s infinite-dimensional optimization problem can be reduced to a one-dimensional search over a family of extremal distributions parameterized by a feasible optimal price. Our proof generalizes the reasoning developed in \citet{ABBSinglePoint} for one quantile and proceeds in three steps. First, in \Cref{sec:shape_distribution}, we characterize the shape of distributions that are feasible given the information set $\Iset$ and derive envelope bounds on their cumulative distribution functions. Second, in \Cref{sec:candidate_worst_dist}, we leverage these bounds to identify a family of candidate worst-case distributions indexed by feasible optimal price-conversion pairs. Third, in \Cref{sec:main_reduction}, we combine these ingredients to show that Nature’s problem reduces to a one-dimensional optimization over feasible optimal prices. While our exposition focuses on the ratio objective for clarity, the same argument extends to regret-based performance measures. In particular, in \Cref{sec:apx_Proof_Thm1}  we present a more general formulation  that also captures the absolute regret metric.

\paragraph{Notation.}
For any $F \in \distSet$ define the generalized inverse
$F^{-1}(q) = \inf \left\{ v \in [\lb,\ub] : F(v) \ge q \right\},$ for every $q \in [0,1].$
For a distribution with a positive density $f$ on its support $[a,b]$, where $0 \le a < \infty$ and $a \le b \le \infty$, we define the virtual value function for $v \in [a,b]$ by
$\phi_F(v) = v - \frac{\overline F(v)}{f(v)}.$
Finally, for $v \ge 0$, define $\Gamma(v) = \frac{1}{1+v}$ and its inverse $\Gamma^{-1}(q) = \frac{1}{q} - 1$ for $q \in (0,1]$.

\subsection{Shape of Feasible Distributions}
\label{sec:shape_distribution}
Our construction relies on identifying extremal distributions that are consistent with a given pair of price–conversion observations. Consider two prices $s \le s'$ with associated conversion rates $q=\overline F(s-)$ and $q'=\overline F(s'-)$. We next show that any distribution $F \in \Cb$ that is consistent with these observations must satisfy structural bounds between the two prices. In particular, there exist extremal ccdfs that represent the smallest and largest possible values of $\overline F(v)$ compatible with these two observations and with the distribution class $\Cb$. These extremal objects will serve as local building blocks for constructing global envelope distributions that bound all feasible distributions in $\Cb(\Iset)$. To formalize this idea, for any pair $(s,s')$ such that $0 \le s < s'$ and $1 \ge q \ge q' >0$, we define on $[\lb,\ub]$ the following functions
\begin{align} 
\overline{G}_{\Gb}(v \vert  (s,q),(s',q')) &= \mathbbm{1} \{v \in [\lb,s)\} + \mathbbm{1} \{v \in [s,\ub)\} \cdot q' \label{eq:GG} \\
\overline{G}_{\Fb}(v \vert  (s,q),(s',q')) &=  \Gad{\Gainv{q} + \frac{\Gainv{q'}-\Gainv{q}}{s'-s}(v-s)}. \label{eq:GF}
\end{align}
Furthermore, we set the convention that $\overline{G}_{\Fb}(v \vert  (s,q),(s',0)) = q \cdot \mathbbm{1} \left \{ v \leq s \right\}$ when $q' =0$.

These functions are motivated by the fact that they allow us to derive the following bounds on distributions in $\Cb$ by deriving a so-called single crossing property. 
\begin{lemma}[Local Bounds]\label{lemma:singlecross}
	Fix $\Cb \in \{\Fb,\Gb\}$. Let $F$ in $\Cb$ and values $(s,s')$ such that $\lb \le s < s'$ and $0 < q' = \bF(s'-) \le q = \bF(s-)$. For every $v \in [\lb, \ub]$, we have that,
	\begin{align*}
	\overline{F}(v)&\ge  \overline{G}_{\Cb}(v \vert  (s,q),(s',q'))  
	 \qquad \mbox{ if } \:  v \in[s, s') \\
	\overline{F}(v) &\le \overline{G}_{\Cb}(v \vert  (s,q),(s',q')) 
	 \qquad \mbox{ if } \:  v \geq s' \text{ or } v < s.
	\end{align*}	
\end{lemma}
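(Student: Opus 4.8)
The plan is to establish the two inequalities separately for $\Cb=\Gb$ and for $\Cb=\Fb$, in each case collapsing the claim to a statement about a single scalar function of $v$.

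For $\Cb=\Gb$ the bounds follow purely from monotonicity of the ccdf $\bF$ together with the normalization $\bF(s'-)=q_{s'}$. Recall that $\bG_{\Gb}(v\,|\,(s,q_s),(s',q_{s'}))$ equals $1$ on $[\lb,s)$, equals $q_{s'}$ on $[s,\ub)$, and equals $0$ on $[\ub,\infty)$. For $v\in[s,s')$ we have $v<s'$, and since $\bF$ is non-increasing with $\bF(u)\ge\bF(s'-)=q_{s'}$ for every $u<s'$, we get $\bF(v)\ge q_{s'}=\bG_{\Gb}(v\,|\,\cdots)$. For $v<s$, trivially $\bF(v)\le 1=\bG_{\Gb}(v\,|\,\cdots)$. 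For $v\in[s',\ub)$, $\bF(v)\le\bF(s')\le\bF(s'-)=q_{s'}=\bG_{\Gb}(v\,|\,\cdots)$, and for $v\ge\ub$ both sides vanish. This disposes of the case $\Cb=\Gb$.

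For $\Cb=\Fb$, set $h(v):=\Gainv{\bF(v)}=\tfrac{1}{\bF(v)}-1$, with the conventions $h(v)=0$ where $\bF(v)=1$ and $h(v)=+\infty$ where $\bF(v)=0$. The crux is that \emph{$h$ is convex on $[\lb,\ub)$}. To see this, recall that regularity of $F$ means exactly that the revenue in quantity space $R(q):=q\,w(q)$ is concave, where $w:=\bF^{-1}$ denotes the (decreasing) inverse ccdf; parametrizing by $q=\bF(v)$ and differentiating twice yields, on the interior of the support, $h''(v)=R''(q)/\big(q^{3}w'(q)^{3}\big)$, which is nonnegative since $q>0$, $w'(q)<0$ and $R''(q)\le 0$. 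On the leftmost stretch where $\bF\equiv 1$ one has $h\equiv 0$, which glues convexly to the increasing piece, and beyond the support $h=+\infty$ dominates every finite affine function; so $h$ is convex throughout. (Equivalently, one may invoke the standard characterization that regular distributions are precisely those for which $v\mapsto\Gainv{\bF(v)}$ is convex.)

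Given convexity of $h$, I would conclude by a chord argument. Since a regular distribution has no atom in the interior of its support, $\bF$ is continuous at $s$ and at $s'$, so $h(s)=\Gainv{q_s}$ and $h(s')=\Gainv{q_{s'}}$ (with left-limit conventions if $s'$ sits at the top of the support). Let $L(v):=\Gainv{q_s}+\frac{\Gainv{q_{s'}}-\Gainv{q_s}}{s'-s}(v-s)$ be the affine interpolant of $h$ at $s$ and $s'$, so that $\bG_{\Fb}(v\,|\,(s,q_s),(s',q_{s'}))=\Gamma(L(v))$ on $[\lb,\ub)$. The three-point inequality for the convex function $h$ gives $h(v)\le L(v)$ for $v\in[s,s']$ and $h(v)\ge L(v)$ for $v\le s$ or $v\ge s'$. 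Applying the strictly decreasing map $\Gamma$ reverses both inequalities, which is precisely $\bF(v)\ge\bG_{\Fb}(v\,|\,\cdots)$ on $[s,s')$ and $\bF(v)\le\bG_{\Fb}(v\,|\,\cdots)$ for $v<s$ or $v\ge s'$.

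The step I expect to require the most care is the convexity of $h=\Gainv{\bF}$: one has to connect the two descriptions of regularity (non-decreasing virtual value versus concave revenue in quantity space), handle points where $F$ need not be differentiable (arguing with difference quotients, or by a density/approximation argument), and treat the boundary regimes $\bF\in\{0,1\}$, the possible atom at the top of the support, and the associated left-limit conventions at $s$ and $s'$. Once convexity is in hand, the chord step is routine, and the $\Gb$ case needs only monotonicity of $\bF$.
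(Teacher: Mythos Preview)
Your proposal is correct and rests on the same underlying fact the paper uses: for $\Cb=\Gb$ both you and the paper invoke only monotonicity of $\bF$, and for $\Cb=\Fb$ the paper simply cites \cite[Lemma~2]{ABBSamples} (itself a generalization of a Barlow--Proschan single-crossing result), whose content is precisely the convexity of $v\mapsto\Gainv{\bF(v)}$ that you establish directly and then combine with the chord inequality. So your route is a self-contained unpacking of the cited lemma rather than a genuinely different argument; the only extra work you do beyond the paper is the explicit verification of convexity (your identity $h''(v)=R''(q)/(q^{3}w'(q)^{3})$ is correct and indeed equals $\phi_F'(v)\,f(v)^{2}/\bF(v)^{3}\ge 0$), and your caveats about boundary behavior and the possible atom at the top of the support are exactly the points one must handle to make that step fully rigorous.
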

\Cref{lemma:singlecross} shows that $\overline{G}_{\Cb}$ provides extremal bounds for any distribution in $\Cb$ sharing the same conversion rates at $s$ and $s'$. Given that all distributions in $\Cb(\Iset)$ share the same conversion rates at the prices $\mathbf{p}$, we combine these local bounds across price intervals to construct global envelope distributions that bound all feasible ccdfs.
In particular, given a feasible information set $\Iset = \{(p_i, q_i); \; i \in \{0, \ldots, N+1 \} \}$ we define, $G_{\Cb,i}(v) = {G}_{\Cb}(v \vert  (p_{i},q_{i}),(p_{i+1},q_{i+1}))$ for every $i \in \{ 0, \ldots, N \}$. 
For both the regular and general classes of distributions, we define on $[\lb, \ub)$: 
\begin{equation}
\label{eq:L}
\overline{L}_{\Cb}(v \vert  \Iset) = \overline{G}_{\Cb,i}(v), \quad \mbox{if } v \in [p_{i},p_{i+1}), \mbox{ for } i \in \{ 0, \ldots, N\}.
\end{equation}
Furthermore, we let
\begin{align*}
\overline{U}_{\Gb}(v \vert  \Iset) &= \overline{G}_{\Gb,i-1}(v ) = q_i \quad \mbox{if } v \in [p_{i}, p_{i+1}), \mbox{for } i \in \{0, \ldots, N\}.\\
\overline{U}_{\Fb}(v \vert  \Iset) &=\begin{cases}
    \min\{1, \overline{G}_{\Fb,1}(v) \} &\quad \mbox{if } v \in [p_0,p_{1})\\
	 \min\{\overline{G}_{\Fb,{i-1}}(v),\overline{G}_{\Fb,{i+1}}(v) \} &\quad \mbox{if } v \in[p_{i},p_{i+1}), \mbox{for } i \in \{ 1, \ldots, N-2\}\\
	  \overline{G}_{\Fb,{i-1}}(v) &\quad \mbox{if } v \in[p_{i},p_{i+1}), \mbox{ for } i \in \{N-1, N\}.
	  \end{cases}  
\end{align*}
Intuitively, \Cref{lemma:singlecross} implies that the ccdf $\overline{L}_{\Cb}(\cdot \vert  \Iset)$ is a lower bound of the ccdf of any distribution in $\Cb(\Iset)$. Similarly, the ccdf $\overline{U}_{\Cb}(\cdot \vert  \Iset)$ provides an upper bound. We illustrate this in \Cref{fig_feasible}. The information set is $\{(p_i,q_i), \, i \in \{1,\ldots, 4\} \} = \{ (0.47,0.25), (0.70,0.055), (0.89,0.017), (0.98,0.006)   \}$ and we set $\lb = 0, \, \ub =1$.
\begin{figure}[h!]
\centering
\begin{subfigure}{0.48\textwidth}
\centering
\begin{tikzpicture}
\begin{axis}[
    xmin=0, xmax=1.1,
    xlabel={$p$},
    ylabel={ccdf $\overline{F}(p)$},
    ylabel style={font=\scriptsize},
    grid=both,
    unbounded coords=jump,
    table/col sep=comma,
    legend style={
    font=\tiny,
    row sep=2pt,
    draw=none,
    fill=white,
    fill opacity=0.8,
    text opacity=1,
},
]

\addplot[name path=ccdfUreg, draw=none, forget plot]
  table[x=v, y=ccdf_upper_reg] {Data/figure1_feasible_bounds.csv};
\addplot[name path=ccdfLreg, draw=none, forget plot]
  table[x=v, y=ccdf_lower_reg] {Data/figure1_feasible_bounds.csv};

\addplot[name path=ccdfUgen, draw=none, forget plot]
  table[x=v, y=ccdf_upper_gen] {Data/figure1_feasible_bounds.csv};
\addplot[name path=ccdfLgen, draw=none, forget plot]
  table[x=v, y=ccdf_lower_gen] {Data/figure1_feasible_bounds.csv};

\addplot[fill=green!50!black, fill opacity=0.20, draw=none, forget plot]
  fill between[of=ccdfUreg and ccdfLreg];
\addplot[fill=cyan!50, fill opacity=0.18, draw=none, forget plot]
  fill between[of=ccdfUgen and ccdfLgen];

\addplot[thick, green!60!black, forget plot]
  table[x=v, y=ccdf_lower_reg] {Data/figure1_feasible_bounds.csv};
\addplot[thick, dashed, green!60!black, forget plot]
  table[x=v, y=ccdf_upper_reg] {Data/figure1_feasible_bounds.csv};

\addplot[thick, cyan!70!black, forget plot]
  table[x=v, y=ccdf_lower_gen] {Data/figure1_feasible_bounds.csv};
\addplot[thick, dashed, cyan!70!black, forget plot]
  table[x=v, y=ccdf_upper_gen] {Data/figure1_feasible_bounds.csv};

\addplot[
  only marks,
  mark=*,
  mark size=1.5pt,
  draw=red,
  fill=red,
  forget plot
] table[x=v, y=info_ccdf] {Data/figure1_feasible_bounds.csv};

\addlegendimage{area legend, fill=green!50!black, fill opacity=0.20, draw=none}
\addlegendentry{Regular class}
\addlegendimage{area legend, fill=cyan!50, fill opacity=0.18, draw=none}
\addlegendentry{General class}
\addlegendimage{thick, green!60!black}
\addlegendentry{$\overline L_{\Fb}(\cdot\vert \Iset)$}
\addlegendimage{thick, dashed, green!60!black}
\addlegendentry{$\overline U_{\Fb}(\cdot\vert \Iset)$}
\addlegendimage{thick, cyan!70!black}
\addlegendentry{$\overline L_{\Gb}(\cdot\vert \Iset)$}
\addlegendimage{thick, dashed, cyan!70!black}
\addlegendentry{$\overline U_{\Gb}(\cdot\vert \Iset)$}
\addlegendimage{only marks, mark=*, mark options={draw=red, fill=red}}
\addlegendentry{Information set}

\end{axis}
\end{tikzpicture}
\caption{Bounds on the CCDF values.}
\end{subfigure}
\hfill
\begin{subfigure}{0.48\textwidth}
\centering
\begin{tikzpicture}
\begin{axis}[
    xmin=0, xmax=1.1,
    xlabel={$p$},
    ylabel={Revenue $p\,\overline{F}(p-)$},
    ylabel style={font=\scriptsize},
    grid=both,
    table/col sep=comma,
    unbounded coords=jump,
    legend style={
    font=\tiny,
    row sep=2pt,
    draw=none,
    fill=white,
    fill opacity=0.8,
    text opacity=1,
},
]

\addplot[name path=revUreg, draw=none, forget plot]
  table[x=v, y=rev_upper_reg] {Data/figure1_feasible_bounds.csv};
\addplot[name path=revLreg, draw=none, forget plot]
  table[x=v, y=rev_lower_reg] {Data/figure1_feasible_bounds.csv};
\addplot[fill=green!50!black, fill opacity=0.35, draw=none, forget plot]
  fill between[of=revUreg and revLreg];

\addplot[name path=revUgen, draw=none, forget plot]
  table[x=v, y=rev_upper_gen] {Data/figure1_feasible_bounds.csv};
\addplot[name path=revLgen, draw=none, forget plot]
  table[x=v, y=rev_lower_gen] {Data/figure1_feasible_bounds.csv};
\addplot[fill=cyan!50, fill opacity=0.25, draw=none, forget plot]
  fill between[of=revUgen and revLgen];

\addplot[
  only marks,
  mark=*,
  mark size=1.5pt,
  draw=red,
  fill=red,
  forget plot
] table[x=v, y=info_rev] {Data/figure1_feasible_bounds.csv};

\addlegendimage{area legend, fill=green!50!black, fill opacity=0.35, draw=none}
\addlegendentry{Feasible Revenue (Regular)}
\addlegendimage{area legend, fill=cyan!50, fill opacity=0.25, draw=none}
\addlegendentry{Feasible Revenue (General)}
\addlegendimage{only marks, mark=*, mark options={draw=red, fill=red}}
\addlegendentry{Information Set}

\end{axis}
\end{tikzpicture}
\caption{Bounds on the revenue values.}
\end{subfigure}

\caption{ \textbf{Illustration of feasible ccdf and revenue bounds implied by an information set.}
The red dots correspond to the observed conversion rates $(p_i,q_i)$.
Panel (a) shows the set of ccdfs $\overline F$ consistent with the information set.
Panel (b) shows the corresponding set of feasible revenue curves.
The shaded regions indicate the envelopes induced by the regular (green) and general (blue) classes of distributions.}
\label{fig_feasible}
\end{figure}

\Cref{fig_feasible} illustrates how the information set and structural assumptions jointly restrict the set of admissible distributions.
Regularity substantially tightens the feasible region relative to the general class, both at the level of ccdfs and revenue curves.
This characterization is made possible by the single-crossing property established in \Cref{lemma:singlecross}, which allows local bounds to be aggregated into global envelopes.
Importantly, under regularity the resulting feasible sets exhibit a nontrivial geometry, as the envelopes need not be piecewise constant and may interact across adjacent price intervals.
As a consequence, characterizing extremal revenue behavior requires careful construction of envelope distributions rather than pointwise optimization.
The envelope constructions $\overline L_{\Cb}$ and $\overline U_{\Cb}$ therefore characterize the extremal behavior within each class and serve as natural candidates for worst-case distributions.

\subsection{Candidate Worst-case Distributions}
\label{sec:candidate_worst_dist}

Building on this characterization, we now turn to Nature’s Problem.
In particular, the envelope constructions allow us to sharply describe the distributions that are optimal responses to a given price choice.
We begin by formalizing which price–conversion pairs can arise as optimal under distributions consistent with the information set.

\begin{definition}[Feasible optimal price-conversion pair]
\label{def:feasible_pair}
Given an information set $\Iset$, a pair $(\rs,\qs) \in [\lb,\ub]\times[0,1]$ is said to be \emph{feasible} if there exists a distribution $F \in \Cb(\Iset)$ such that $\rs \in \mathcal{O}(F)$ and $\overline F(\rs-)=\qs$. We denote by ${\cal B}(\Iset)$ the set of all feasible price--conversion pairs.
\end{definition}

This notion allows us to decompose Nature’s optimization problem according to the optimal price and its associated conversion rate. Indeed, for any pricing mechanism $\Psi$, Nature’s objective can be rewritten as
\begin{equation}
\label{eq:nature_ratio_decomp}
\inf_{F \in \Cb(\Iset)} 
\frac{\Expect_{\Psi}\!\left[\Rev(p\vert F)\right]}{\opt(F)}
=
\inf_{(\rs,\qs) \in {\cal B}(\Iset)}
\left\{
\frac{1}{\rs \cdot \qs}
\inf_{\substack{F \in \Cb(\Iset):\\ \rs \in \mathcal{O}(F),\; \overline F(\rs-)=\qs}}
\Expect_{\Psi}\!\left[\Rev(p\vert F)\right]
\right\}.
\end{equation}
The denominator $\rs \cdot \qs$ corresponds to the optimal revenue of any distribution whose optimal price is $\rs$ with associated conversion rate $\qs$. The numerator is the smallest revenue that Nature can induce among all distributions consistent with $\Iset$ and this fixed optimal price-conversion pair.
The following proposition shows that this inner minimization problem admits a sharp solution. Fix a feasible pair $(\rs,\qs) \in \mathcal{B}(\Iset)$, i.e., a price-conversion pair that can arise as an optimal solution for some distribution $F \in \Cb(\Iset)$ consistent with the historical data, and consider the extended information set $\Iset \cup \{(\rs,\qs)\}$ obtained by augmenting the original dataset with this new pair. Among all distributions in $\Cb(\Iset)$ that share $(\rs,\qs)$ as an optimal price-conversion pair, the worst-case choice for the expected revenue under mechanism $\Psi$ is given by the lower-envelope distribution $L_{\Cb}(\cdot)$ associated to the extended information set.

\begin{proposition}[Worst-case revenue under a fixed optimal price]
\label{prop:worst_case_revenue}
Assume that $\Cb(\Iset)$ is non-empty. For any feasible pair $(\rs,\qs) \in {\cal B}(\Iset)$ and any pricing mechanism $\Psi \in \MechSet$,
\begin{equation*}
\inf_{\substack{F \in \Cb(\Iset): \\ \rs \in \mathcal{O}(F),\; \overline F(\rs-)=\qs}}
\Expect_{\Psi}\!\left[\Rev(p\vert  F)\right]
=
\Expect_{\Psi}\!\left[\Rev\!\left(p \vert  L_{\Cb}(\cdot \vert  \Iset \cup (\rs,\qs))\right)\right].
\end{equation*}
\end{proposition}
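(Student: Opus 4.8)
The plan is to prove the identity by a two-sided argument. First I would establish the ``$\le$'' direction: I would show that the candidate distribution $L_{\Cb}(\cdot \mid \Iset \cup (\rs,\qs))$ is itself a feasible competitor in the inner infimum, i.e.\ that it belongs to $\Cb(\Iset)$, that $\rs \in \mathcal{O}(L_{\Cb})$, and that $\overline{L_{\Cb}}(\rs-) = \qs$. Membership in $\Cb(\Iset)$ requires checking two things: that the ccdf takes the prescribed values $q_i$ at the prices $p_i$ (which follows from the construction of $\overline{L}_{\Cb}$ in \eqref{eq:L} together with the boundary behavior of $\overline{G}_{\Cb}$, since each $\overline{G}_{\Cb,i}$ interpolates between $q_i$ at $p_i$ and $q_{i+1}$ at $p_{i+1}$), and that $L_{\Cb}$ lies in the shape class $\Cb$ (trivial for $\Gb$; for $\Fb$ one must verify that the piecewise construction obtained by splicing $\overline{G}_{\Fb}$-arcs at the enlarged grid $\Iset \cup (\rs,\qs)$ still has non-decreasing virtual value, which is where regularity of each individual arc and the fact that we only \emph{refine} the grid come in). That $\rs$ is revenue-optimal for this distribution and carries conversion rate $\qs$ should follow from how $L_{\Cb}$ is defined relative to the augmented information set — the arcs are precisely the ones that make the revenue curve ``flat at its max'' over the relevant cell. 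Granting this, $L_{\Cb}$ is one particular feasible $F$, so the infimum is $\le \Expect_\Psi[\Rev(p \mid L_{\Cb})]$.

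For the ``$\ge$'' direction, I would fix an arbitrary $F \in \Cb(\Iset)$ with $\rs \in \mathcal{O}(F)$ and $\overline{F}(\rs-)=\qs$, and show pointwise that $\Rev(p \mid F) \ge \Rev(p \mid L_{\Cb}(\cdot \mid \Iset \cup (\rs,\qs)))$ for \emph{every} $p \in [\lb,\ub]$; integrating against $\Psi$ then gives the inequality for the expectations. The pointwise claim is where \Cref{lemma:singlecross} does the work: apply it on each consecutive pair of grid points of the \emph{augmented} grid $\Iset \cup (\rs,\qs)$. On cells of the form $[s,s')$ the lemma gives $\overline{F}(p) \ge \overline{G}_{\Cb}(p \mid \dots)$, hence $\Rev(p\mid F) = p\,\overline{F}(p-) \ge p\,\overline{G}_{\Cb}(p-\mid\dots) = \Rev(p\mid L_{\Cb})$ directly. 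On cells where the lemma only gives the reverse inequality $\overline{F}(p) \le \overline{G}_{\Cb}(p\mid\dots)$, I would instead use optimality of $\rs$ for $F$: there $\Rev(p\mid F) \le \opt(F) = \Rev(\rs\mid F) = \rs\,\qs$, and by construction the upper-bounding arc $\overline{U}_{\Cb}$ has been chosen so that $\Rev(p\mid L_{\Cb})$ does not exceed $\rs\,\qs$ on exactly these cells (this is the purpose of the $\min\{\cdot,\cdot\}$ in the definition of $\overline{U}_{\Fb}$ and of the flat piece $q_i$ in $\overline{U}_{\Gb}$). So on every cell either the ``$L$ is a lower bound, compare directly'' argument or the ``both revenues are below the peak'' argument applies. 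Care is needed at the cell containing $\rs$ itself, where $\overline{L}_{\Cb}$ switches from the lower-bound arc $\overline{G}_{\Cb,i}$ on $[p_i,\rs)$ to the upper-bound behavior above $\rs$; here I would verify that $\Rev(\rs\mid L_{\Cb}) = \rs\qs$ exactly, so the peak is matched.

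The main obstacle I anticipate is the regular case: I must verify that splicing $\overline{G}_{\Fb}$-arcs across the refined grid genuinely produces a distribution in $\Fb$ with $\rs$ revenue-optimal and conversion rate $\qs$ — in particular that the $\min$ of two regular arcs (as appears in $\overline{U}_{\Fb}$) is still regular, and that at the junction points the one-sided density limits are consistent with non-decreasing virtual value (i.e.\ no upward jumps in $\phi_F$ at the grid points). This likely reduces to a monotonicity fact about the family $v \mapsto \Gad{a + bv}$ and the location of crossings between consecutive arcs, and it is presumably where the bulk of the technical work — deferred to a lemma — lives. The general case is comparatively immediate since only monotonicity of ccdfs is used, matching the last line of the proof of \Cref{lemma:singlecross}.
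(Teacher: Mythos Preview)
Your two-sided structure matches the paper's proof, but there is a conceptual confusion about what $L_{\Cb}(\cdot \mid \Iset \cup (\rs,\qs))$ actually is, and this misplaces where the optimality hypothesis $\rs \in \mathcal{O}(F)$ is used.

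The distribution $L_{\Cb}(\cdot \mid \IsetPlus)$ is built \emph{only} from the lower-bound arcs $\overline{G}_{\Cb,i}$ on each cell of the \emph{augmented} grid $\IsetPlus = \Iset \cup \{(\rs,\qs)\}$; it never involves $\overline{U}_{\Cb}$ or any $\min$ of arcs. In particular, on $[\rs,p_{i+1})$ it uses the lower arc $\overline{G}_{\Cb}(\cdot \mid (\rs,\qs),(p_{i+1},q_{i+1}))$, not any ``upper-bound behavior''. Consequently your ``$\ge$'' direction is much simpler than you describe: every $p$ lies in exactly one augmented cell, and on that cell \Cref{lemma:singlecross} gives $\overline{F}(p)\ge \overline{L}_{\Cb}(p\mid\IsetPlus)$ directly. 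There is no ``reverse inequality'' case, and optimality of $\rs$ for $F$ is \emph{not} used here at all. This is the paper's \Cref{lem:lower_revenue}, applied to $\IsetPlus$ rather than $\Iset$.

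Where $\rs \in \mathcal{O}(F)$ \emph{is} needed is in your ``$\le$'' direction, and you gloss over it. To show $\rs$ is optimal for $L_{\Cb}(\cdot\mid\IsetPlus)$ the paper argues (\Cref{lem:max_L}) that the revenue curve of $L_{\Cb}$ is monotone on each cell (because each $\overline{G}_{\Cb,i}$ has constant virtual value, \Cref{lem:const}), so its maximum is attained at an augmented grid point; then one must still check $\rs\qs \ge p_i q_i$ for every $i$. This last inequality is exactly where the hypothesis $(\rs,\qs)\in\mathcal{B}(\Iset)$ enters: pick any $F$ witnessing it, and use $\rs\qs = \rs\overline{F}(\rs-) \ge p_i \overline{F}(p_i-) = p_i q_i$. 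Your proposal says optimality of $\rs$ for $L_{\Cb}$ ``should follow from how $L_{\Cb}$ is defined'', but it does not follow from the construction alone.

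Finally, your anticipated obstacle (``the $\min$ of two regular arcs is still regular'') is a red herring: that concerns $\overline{U}_{\Fb}$, which plays no role in this proposition. Regularity of $L_{\Fb}(\cdot\mid\IsetPlus)$ is instead the statement that the piecewise-constant virtual value is non-decreasing across cells, which reduces to monotonicity of the slope sequence $\bigl(\tfrac{\Gainv{q_{i+1}}-\Gainv{q_i}}{p_{i+1}-p_i}\bigr)_i$ on the augmented grid --- and that is precisely the feasibility condition guaranteed by $\Cb(\IsetPlus)\neq\emptyset$ (\Cref{lemma:feasible-set}, \Cref{lem:L_belongs}).
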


\noindent
Proposition~\ref{prop:worst_case_revenue} implies that the inner minimization problem in \eqref{eq:nature_ratio_decomp} admits an explicit solution. As a result, Nature’s problem reduces to an optimization over the two-dimensional set of feasible price-conversion pairs $(\rs,\qs) \in \mathcal{B}(\Iset)$. In particular, the worst-case distribution associated with any feasible pair is characterized by the lower-envelope construction.

\subsection{Reduction of Nature’s problem and Main Characterization}
\label{sec:main_reduction}

In what follows, we leverage the reduction developed in \Cref{prop:worst_case_revenue} and complete the reduction of Nature’s problem to a one-dimensional optimization problem.

We begin by introducing the final objects needed to state the result. Recall the set $\mathcal{B}(\Iset)$ of feasible price-conversion pairs defined in \Cref{def:feasible_pair}. We define the associated set of feasible prices
\begin{equation*}
\mathcal{S}(\Iset)
=
\left\{
\rs \in [\lb,\ub]
\;\vert  \;
\text{there exists } \qs \in [0,1] \text{ such that } (\rs,\qs) \in \mathcal{B}(\Iset)
\right\},
\end{equation*}
that is, the set of prices that can arise as optimal under at least one value distribution consistent with the information set $\Iset$.

Furthermore, for any $\rs \in \mathcal{S}(\Iset)$, we define the distribution $F_{\Cb}(\cdot \vert  \rs,\Iset)$ by its complementary cumulative distribution function
\begin{equation}
\label{eq:worst_case_F}
\overline{F}_{\Cb}(v \vert  \rs,\Iset)
=
\overline{L}_{\Cb}\!\left(
v \;\vert  \;
\Iset \cup \bigl(\rs,\overline{U}_{\Cb}(\rs- \vert  \Iset)\bigr)
\right).
\end{equation}
Equation~\eqref{eq:worst_case_F} defines a family of distributions parametrized by a candidate optimal price $\rs$. Our main technical result shows that this one–parameter family of extremal distributions is sufficient to evaluate the worst-case ratio defined in \Cref{eq:problem_nature}.

 \begin{theorem}[Reduction for policy evaluation]\label{thm:nature_reduction} 
Assume $\Cb$ is equal to $\Gb$ or $\Fb$. Fix $N \geq 1$ and let $\Iset$ be an information set including $N$ historical prices and conversion rates.  Assume $\Cb(\Iset)$ is non-empty. Then, for any mechanism $\Psi$ in $\MechSet$,
\begin{equation*}
\inf_{F \in \Cb(\Iset)}  \Ratio = \inf_{\rs \in \mathcal{S}(\Iset)}  \Ratio[\Psi,F_{\Cb}( \cdot \vert  \rs, \Iset)]. 
\end{equation*}
\end{theorem}

For any pricing mechanism, \Cref{thm:nature_reduction} reduces Nature’s original infinite-dimensional optimization problem in \eqref{eq:problem_nature} to a one-dimensional search over the set $\mathcal{S}(\Iset)$ of feasible prices. This reduction provides a tractable way to evaluate and compare pricing mechanisms under arbitrary historical information sets, without requiring parametric assumptions on the value distribution. This result significantly generalizes \cite[Theorem 1]{ABBSinglePoint} along several dimensions. First, our reduction applies not only to regular distributions but also to the class of general distributions, allowing us to study settings where the decision-maker does not impose shape constraints. More importantly, our characterization holds for information sets containing an arbitrary number of historical prices $N$, whereas the analysis in \cite{ABBSinglePoint} only applies to the case $N=1$. This extension considerably enlarges the class of information sets $\Iset$ that can be analyzed and, as we will see in later sections, enables new insights and prescriptions for price experimentation. Finally, although the main exposition focuses on the ratio objective, the underlying proof technique extends to regret-based performance measures. In particular, the same reduction applies to the absolute regret objective, as shown in \Cref{thm:lambda_reduction}. 

\paragraph{Characterizing feasible optimal prices.}
We note that the reduction in \Cref{thm:nature_reduction} shows that evaluating the worst-case performance of a pricing mechanism reduces to a one-dimensional search over prices $\rs \in \mathcal{S}(\Iset)$. In order to implement this reduction, it is therefore useful to characterize when a candidate price $\rs$ belongs to the set $\mathcal{S}(\Iset)$.
Intuitively, a price $\rs$ is feasible if there exists a distribution consistent with the information set under which $\rs$ is an optimal price. The next result provides a simple characterization of this condition in terms of the upper-envelope conversion rate $\overline{U}_{\Cb}(\rs-\vert\Iset)$.

\begin{proposition}\label{prop:r_certificate}
Fix $\rs \in [\lb,\ub]$. Then $\rs \in \mathcal{S}(\Iset)$ if and only if, 
$i)$ $\Cb(\Iset \cup(\rs,\overline{U}_{\Cb}(\rs- \vert  \Iset)) )$ is non-empty, and $ii)$ $\rs \cdot \overline{U}_{\Cb}(\rs -\vert  \Iset) \geq \max_{i \in \{0,\ldots,N+1\}} p_i \cdot q_i$.
\end{proposition}

\Cref{prop:r_certificate} provides a practical certificate for testing whether a candidate price $\rs$ belongs to $\mathcal{S}(\Iset)$. In particular, condition $(ii)$ evaluates feasibility under the most favorable demand at $\rs$ compatible with the data, and therefore determines whether $\rs$ can be optimal for some admissible distribution. Leveraging this characterization, we show in \Cref{sec:apx_S} that the set $\mathcal{S}(\Iset)$ admits a simple structure. When $\Cb=\Fb$, the feasible optimal prices form a single interval around the price achieving the largest observed revenue $\max_i p_i \cdot q_i$. In contrast, when $\Cb=\Gb$, the set $\mathcal{S}(\Iset)$ may consist of a union of intervals. In both cases, we provide explicit closed-form expressions for $\mathcal{S}(\Iset)$ that can be computed directly from the information set $\Iset$, and we use these in our numerical procedure.

\section{Near-optimal Policy and Value of Information}\label{sec:value_information}

In this section, we develop a tractable framework for characterizing the intrinsic value of an information set and constructing near-optimal robust pricing policies. We first use \Cref{thm:nature_reduction} to reformulate the maximin problem as an infinite-dimensional linear program with a one-dimensional continuum of constraints indexed by the oracle price. We then exploit the structural properties of least favorable distributions to replace this continuum by a finite family of constraints defined on an admissible grid. This leads to a discretized linear program whose solution provides both a computable lower bound on the intrinsic value and a near-optimal pricing policy with an explicit approximation guarantee.

\subsection{Intrinsic Value and a Reduced Maximin Formulation}
\label{sec:continuous_epi}

Recall that the maximin value
\begin{equation*}
\sup_{\Psi \in \MechSet} \inf_{F \in \Cb(\Iset)} \frac{\Rev(\Psi,F)}{\opt(F)}
\end{equation*}
represents the best achievable worst-case performance ratio across all pricing mechanisms that have access to
the information encoded in $\Iset$. We refer to this quantity as the \emph{intrinsic value} of $\Iset$, as it
captures the maximal performance guarantee attainable under informational constraints alone, independently of
any particular pricing rule.

To analyze this quantity, it is convenient to work with an epigraph formulation based on the notion of $\lambda$-regret. 
For a mechanism $\Psi$ and a distribution $F$, define the $\lambda$-regret as
\begin{equation*}
\LRegret[\lambda][\Psi,F]
=
\lambda \cdot \opt(F)
-
\Rev(\Psi,F).
\end{equation*}
The maximin problem can then be
equivalently written as follows (see \Cref{lem:epi_ratio}).
\begin{subequations}
\label{eq:epigraph}
\begin{alignat}{2}
&\sup_{\Psi(\cdot) \in \MechSet,\; \lambda \in [0,1]} &\qquad& \lambda \\
&\text{subject to} & & \LRegret \le 0 \quad \text{for all } F \in \Cb(\Iset).
\end{alignat}
\end{subequations}
The formulation in \eqref{eq:epigraph} is an infinite-dimensional linear optimization problem. The decision
variable $\Psi(\cdot)$ ranges over an infinite-dimensional space of randomized pricing mechanisms, and the
constraint set is indexed by $\Cb(\Iset)$, an infinite family of distributions. Hence, directly
characterizing optimal mechanisms or the intrinsic value of an information set is not
tractable without further structure.

Applying \Cref{thm:nature_reduction}, we can substantially simplify the constraint set and obtain the
equivalent formulation
\begin{subequations}
\label{eq:epigraph_simplified}
\begin{alignat}{2}
&\sup_{\Psi(\cdot) \in \MechSet,\; \lambda \in [0,1]} &\qquad& \lambda \\
&\text{subject to} 
& & 
\LRegret[\lambda][\Psi, F_{\Cb}(\cdot \vert  \rs,\Iset)] \le 0 
\quad \text{for all } \rs \in \mathcal{S}(\Iset),
\end{alignat}
\end{subequations}
where $F_{\Cb}(\cdot \vert  \rs,\Iset)$ denotes the least favorable distribution associated with oracle price
$\rs$. After the reduction, the constraint takes the form of a continuum of scalar inequalities indexed by
$\rs$, which can be interpreted as candidate oracle prices selected by nature.

While \eqref{eq:epigraph_simplified} represents a considerable simplification relative to
\eqref{eq:epigraph}, it still involves infinitely many constraints and infinitely many decision variables and care is required in constructing finite-dimensional approximations that yield valid
performance guarantees. In particular, we will see in \Cref{sec:discretization} that a naive discretization of \eqref{eq:epigraph_simplified} does not, in
general, yield a certified guarantee for the resulting mechanism.

\subsection{Discrete Mechanisms and Continuum Constraints}
\label{sec:discretization}

We begin by showing that one can restrict attention to randomized mechanisms supported on the set $\mathcal{S}(\Iset)$ of feasible optimal prices (cf. \Cref{prop:support_reduction}). This reduction follows from the fact that prices outside $\mathcal{S}(\Iset)$ are dominated in terms of revenue across all distributions in $\Cb(\Iset)$.
Fix an increasing sequence of $M+1$ real numbers $\IN=\{a_i\}_{i=0}^{M}$ that forms a partition of
$\mathcal{S}(\Iset)$ (we refer the reader to \Cref{def:partition_grid} for a formal definition). 
We restrict attention to pricing mechanisms whose induced distribution over prices is discrete and supported on $\IN$. Formally, define
\begin{equation*}
\MechSet_{\IN}
=
\left\{
\Psi \in \MechSet :
\Psi(x) = \sum_{j=0}^{M} \psi_j \cdot \mathbbm{1}\{x \ge a_j\},
\quad
\psi_j \ge 0,
\quad
\sum_{j=0}^{M} \psi_j = 1
\right\}.
\end{equation*}

For any $\Psi\in\MechSet_{\IN}$ and any $\rs\in\mathcal{S}(\Iset)$, the $\lambda$-regret constraint in
\eqref{eq:epigraph_simplified} can be written as
\begin{equation}\label{eq:constraint_expanded_discrete}
\lambda \cdot \opt \big(F_{\Cb}(\cdot\vert  \rs,\Iset)\big)
\;-\;
\sum_{j=0}^{M}
\psi_j \cdot \Rev\!\big(a_j \vert  F_{\Cb}(\cdot\vert  \rs,\Iset)\big)
\;\le\; 0,
\qquad \forall \rs\in\mathcal{S}(\Iset).
\end{equation}
Discretizing the mechanism therefore yields a continuum of linear inequalities indexed by $\rs$, with
coefficients that depend on $\rs$ through the oracle benchmark
$\opt(F_{\Cb}(\cdot\vert  \rs,\Iset))$ and the revenue terms
$\Rev(a_j \vert  F_{\Cb}(\cdot\vert  \rs,\Iset))$. 
A central difficulty is therefore to replace this continuum by finitely many constraints without violating feasibility of \eqref{eq:constraint_expanded_discrete}.

\noindent \textbf{An illustrative benchmark: the general class $\Cb=\Gb$.}
We first illustrate the idea under the general class $\Cb=\Gb$. In this case, we show that both $\opt(F_{\Gb}(\cdot\vert \rs,\Iset))$ and $\Rev(a_j \vert F_{\Gb}(\cdot\vert \rs,\Iset))$ are non-decreasing as $\rs$ varies over any interval $[p_k,p_{k+1})$ induced by the information set (see  \Cref{lem:cellwise_monotonicity}). Consequently, if there exists $k$ such that
$[a_i,a_{i+1}) \subset [p_k,p_{k+1})$, then the constraint
\begin{equation}
\label{eq:discrete_constraint_naive}
\lambda \cdot \opt \big(F_{\Gb}(\cdot \vert  a_{i+1}^{+},\Iset)\big)
\;-\;
\sum_{j=0}^{M}
\psi_j \cdot \Rev\!\big(a_j \vert  F_{\Gb}(\cdot\vert  a_{i},\Iset)\big)
\;\le\; 0
\end{equation}
implies
\begin{equation*}
\lambda \cdot \opt \big(F_{\Gb}(\cdot\vert  \rs,\Iset)\big)
\;-\;
\sum_{j=0}^{M}
\psi_j \cdot \Rev\!\big(a_j \vert  F_{\Gb}(\cdot\vert  \rs,\Iset)\big)
\;\le\; 0,
\qquad \forall \rs \in [a_i,a_{i+1}).
\end{equation*}
Thus, in this setting, monotonicity ensures that verifying the constraint at appropriately chosen boundary points certifies feasibility throughout the interval. When $N=1$, this property is also true for the class of regular distributions and, this observation underlies the discrete linear programs studied in \cite{ABBSinglePoint}. However, when $N >1$, the situation is more delicate under the regular class $\Cb=\Fb$ as we see next.

\noindent \textbf{Challenges for the regular class $\Cb=\Fb$.}
When $N>1$, the monotonicity property need not hold over $[p_k,p_{k+1})$ for the class of regular distributions. To understand this phenomenon recall the construction of the least favorable distribution
$F_{\Cb}(\cdot\vert  \rs,\Iset)$. For any $\rs\in\mathcal{S}(\Iset)$, the oracle benchmark admits the
representation
\begin{equation*}
\opt \big(F_{\Cb}(\cdot\vert  \rs,\Iset)\big)
=
\rs \cdot \overline{U}_{\Cb}(\rs-\vert \Iset),
\end{equation*}
where $\overline{U}_{\Cb}(\cdot\vert \Iset)$ denotes the upper envelope of feasible tail probabilities implied by
the information set $\Iset$ and the class $\Cb$. On each interval $[p_k,p_{k+1})$, this envelope is constructed as the pointwise minimum of two candidate functions described in \Cref{sec:shape_distribution}. For instance, the two branches induced by the two candidate functions are visible in \Cref{fig_feasible}(a) on the interval $[0.47,0.7]$.

The loss of monotonicity arises because the identity of the candidate function attaining this minimum may change as $\rs$ varies. When the active candidate switches within a grid cell $[a_i,a_{i+1})$, the benchmark and revenue coefficients need not move in a single direction, and boundary constraints no longer guarantee feasibility throughout the interval. This aspects necessitates a more careful discretization than the one proposed in \cite{ABBSinglePoint}, which we develop next.

\subsection{Admissible Grids and Certificate of Near-Optimality}
\label{sec:LP_result}
We now construct a discretization that restores the monotonicity structure locally and yields a certified finite-dimensional program. The key idea is to refine the grid so that, within each cell, the identity of the candidate function defining the envelope remains fixed.

Recall that, on each interval $[p_i,p_{i+1})$, the envelope
$\overline{U}_{\Cb}(\cdot\vert \Iset)$ is obtained as the pointwise minimum of finitely many candidate
functions $\{\overline{G}_{\Cb,\ell}(\cdot)\}$.
For $\Cb=\Fb$, the relevant candidates on $[p_i,p_{i+1})$ are
$\overline{G}_{\Fb,i-1}(\cdot)$ and $\overline{G}_{\Fb,i+1}(\cdot)$, and
\begin{equation*}
\overline{U}_{\Fb}(v\vert \Iset)
=
\min\!\left\{
\overline{G}_{\Fb,i-1}(v),
\overline{G}_{\Fb,i+1}(v)
\right\},
\qquad v\in[p_i,p_{i+1}).
\end{equation*}

A change in the active candidate occurs at any solution $c_i \in [p_i,p_{i+1}]$ of the equation
\begin{equation}
\label{eq:ci_def}
\overline{G}_{\Fb,i-1}(c_i)
=
\overline{G}_{\Fb,i+1}(c_i),
\end{equation}
whenever such a solution exists. If no solution lies in $[p_i,p_{i+1}]$, then the identity of the minimizer is constant over the interval. These switching cutoffs are precisely the points at which the envelope may lose its local monotonicity properties.
This observation motivates the following structural requirement on the grid.

\begin{definition}
An increasing grid $\IN=\{a_k\}_{k=0}^{M}$ is called \emph{admissible} if it partitions
$\mathcal{S}(\Iset)$ (as defined in \Cref{def:partition_grid}) and contains:
$(i)$ all information prices $\{p_i\}\cap\mathcal{S}(\Iset)$, and $(ii)$ all switching cutoffs $\{c_i\}\cap\mathcal{S}(\Iset)$.
\label{def:admissible}
\end{definition}
By construction, if $\IN$ is admissible, then on each cell $[a_i,a_{i+1})$ the same candidate function determines $\overline{U}_{\Fb}$. We note that when $\Cb = \Fb$ and $N=1$ or when $\Cb=\Gb$ admissibility reduces to requiring that $\IN$ contains the information prices.

We show in \Cref{lem:cellwise_monotonicity} that within each such cell both the oracle benchmark
$\rs \mapsto \opt(F_{\Cb}(\cdot\vert \rs,\Iset))$
and each revenue coefficient
$\rs \mapsto \Rev(a_j\vert F_{\Cb}(\cdot\vert \rs,\Iset))$
are monotone functions of $\rs$. Although the direction of monotonicity depends on the active candidate, extremal values are attained at the cell boundaries. This permits a conservative replacement of the continuum constraints by finitely many boundary constraints, as formalized below.

Fix an admissible grid $\IN$. For each cell $(i,j) \in \{0,\ldots,M-1\} \times \{0,\ldots,M\}$ define
\begin{align*}
\widehat{\opt}^{\Cb}_i
&=
\max\!\left\{
\opt(F_{\Cb}(\cdot\vert  a_i,\Iset)),
\opt(F_{\Cb}(\cdot\vert  a_{i+1}^-,\Iset))
\right\},\\
\widehat{\Rev}^{\Cb}_{ij}
&=
\min\!\left\{
\Rev(a_j\vert  F_{\Cb}(\cdot\vert  a_i,\Iset)),
\Rev(a_j\vert  F_{\Cb}(\cdot\vert  a_{i+1}^-,\Iset))
\right\}.
\end{align*}
Consider the linear program defined as
\begin{subequations}
\label{eq:minimax_lp}
\begin{alignat}{2}
\underline{\Lb}(\Cb(\Iset), \IN)
= \;& \sup_{\boldsymbol{\psi},\, \lambda} 
&\quad& \lambda \\[2pt]
& \text{s.t.}
& & \lambda \cdot \widehat{\opt}^{\Cb}_i
-
\sum_{j=0}^{M}
\psi_j \cdot \widehat{\Rev}^{\Cb}_{ij}
\le 0,
\qquad i=0,\ldots,M-1, \\[4pt]
& &
& \sum_{j=0}^{M} \psi_j = 1, \qquad
\psi_j \ge 0,\qquad
0 \le \lambda \le 1.
\end{alignat}
\end{subequations}
The program \eqref{eq:minimax_lp} is finite-dimensional and depends only on the grid $\IN$. 
The following result establishes that this discretized program yields a valid lower bound on the intrinsic value of the information set and quantifies the discretization error.

\begin{theorem} 
\label{thm:lp_discrete_bound} 
Fix an information set $\Iset$, a class $\Cb\in\{\Gb,\Fb\}$, and an admissible grid $\IN=\{a_k\}_{k=0}^{M}$ that partitions $\mathcal{S}(\Iset)$ with mesh size $\Delta(\IN) = \max_k (a_{k+1} - a_k).$ 
Let $(\boldsymbol{\psi}^*,\lambda^*)$ be an optimal solution of the linear program \eqref{eq:minimax_lp}.
Assume that $\max_{i \in \{0,\ldots,N+1\}} p_i \cdot q_i > 0$. 
Then there exists a constant $C > 0$, independent of $\IN$, such that 
\begin{equation*} 
\underline{\Lb}(\Cb(\Iset),\IN) 
\le 
\inf_{F\in\Cb(\Iset)} \frac{\Rev(\Psi_{\IN}^*,F)}{\opt(F)} 
\le 
\sup_{\Psi\in\MechSet} \inf_{F\in\Cb(\Iset)} \frac{\Rev(\Psi,F)}{\opt(F)} 
\leq \underline{\Lb}(\Cb(\Iset),\IN) + C \cdot \Delta(\IN), 
\end{equation*} 
where $\Psi_{\IN}^*$ is the randomized posted-price mechanism supported on $\IN$ with weights $\boldsymbol{\psi}^*$. 
\end{theorem}

\Cref{thm:lp_discrete_bound} formalizes that \eqref{eq:minimax_lp} provides a tractable procedure  to characterize the value of information and to develop near-optimal pricing policies for any information set $\Iset$ and any $N \geq 1$.
Indeed, the mechanism $\Psi_{\IN}^*$ induced by the LP solution, which we refer to as the \emph{discretized maximin policy}, is a feasible robust policy whose worst-case performance over $\Cb(\Iset)$ is certified by the LP objective value. The theorem further provides a quantitative bound on the approximation error: the gap between the optimal maximin value and the LP value is at most $C \cdot \Delta(\IN)$, where $C$ is independent of the grid (its exact expression is provided in \Cref{sec:apx_proof_thm2}). Consequently, the discretized program approximates the benchmark in a controlled manner. In particular, if the grid $\IN$ is chosen uniformly over $\mathcal{S}(\Iset)$ with $M$ interior intervals, then $\Delta(\IN)=O(1/M)$, and the approximation error decreases at rate $O(1/M)$. 
In \Cref{fig:maximin-optimal-policy},  we numerically illustrate the structure of the near-optimal maximin pricing policy returned by the discretized LP under both the regular and general assumptions.

\section{Applications to Price Experimentation}\label{sec:applications}
We described in \Cref{sec:value_information}  a tractable procedure to quantify the value of any historical information set that comes in the form of prices and their associated conversion rates.
In this section, we illustrate how this procedure can be leveraged to derive insights on price experimentation. In \Cref{sec:gradient}, we study the value of \textit{local} experimentation and consider a setting in which the decision-maker cannot substantially vary prices and therefore has only an estimate of the gradient at a given price along with the conversion rate at that price. In \Cref{sec:extra_point}, we investigate the value of \textit{global} experimentation and derive an optimal (freely selected) second price to experiment at  after receiving the conversion rate  from a first price. Finally, in \Cref{sec:ternary},  we illustrate how  our exact analysis could be used to considerably reduce the number of price experiments needed to reach a certain level of performance guarantee.

\subsection{Pricing with gradient information: the value of local experimentation}\label{sec:gradient}
In many practical settings, the seller cannot drastically change the offered price and therefore can only gather ``local'' information. Motivated by this, we consider the setting in which the seller knows the support of the demand function $[\lb, \ub]$ and has observed two demand points, i.e., $\Iset = \{(p_1, q_1), (p_{\epsilon}, q_{\epsilon})\}$. The two prices are such that $p_{\epsilon} = (1 + \epsilon) \cdot p_1$ for $\epsilon >0$.

\noindent \textbf{Value of the numerical gradient.} We propose to understand the value of such observations by leveraging the procedure described in \Cref{sec:value_information} to numerically compute the maximin ratio defined in \eqref{eq:minimax} when the demand function is regular (i.e., $\Cb = \Fb$) and its support is included in $[\lb, \ub] = [0, 100]$. We consider the setting in which the seller experiments at an increment of $\epsilon = 1\%$ of the observed price $p_1$ which we fix at $10$ and we consider different initial observed demand $q_1$ in $\{0.01, 0.1, 0.25, 0.5, 0.75, 0.9, 0.99\}$. 
For each initial $q_1$, we evaluate a lower bound on the maximin ratio for a grid of 200 possible values of $q_{\epsilon}$  and report the results in \Cref{fig:grad_value}. 
For the sake of interpretability, the x-axis of \Cref{fig:grad_value} is parametrized by $g_{\epsilon}$ which is defined as,
\begin{equation*}
g_{\epsilon} = \frac{p_{\epsilon}q_{\epsilon}-p_1  q_1}{p_{\epsilon}-p_1}.
\end{equation*}
We interpret $g_{\epsilon}$ as the gradient of the revenue curve at $p_1$ given that, $\nabla_p \Rev(p_1\vert F)  \approx \frac{p_{\epsilon}q_{\epsilon}-p_1  q_1}{p_{\epsilon}-p_1}$.

\begin{figure}[h!]
\centering
\begin{subfigure}{0.48\textwidth}
\centering
\begin{tikzpicture}
\begin{axis}[
    xmode=log,
    xmin=1e-6,
    xmax=2,
    ymin=25, ymax=100,
    xlabel={$-g_{\epsilon}$},
    ylabel={Worst-case ratio (\%)},
    grid=both,
    table/col sep=comma,
    legend style={font=\footnotesize},
    legend cell align=left,
    legend pos=south west
]

\addlegendimage{thick, blue!70!black, mark=*}
\addlegendentry{$q_1=0.01$}

\addlegendimage{thick, orange!90!black, mark=square*}
\addlegendentry{$q_1=0.10$}

\addlegendimage{thick, teal!80!black, mark=triangle*}
\addlegendentry{$q_1=0.25$}

\addlegendimage{thick, red!75!black, mark=diamond*}
\addlegendentry{$q_1=0.50$}

\addlegendimage{thick, purple!80!black, mark=star}
\addlegendentry{$q_1=0.75$}

\addlegendimage{thick, brown!85!black, mark=o}
\addlegendentry{$q_1=0.90$}

\addlegendimage{thick, black, mark=x}
\addlegendentry{$q_1=0.99$}

\addplot[
  thick, blue!70!black,
  mark=*,
  mark indices={20},
  restrict expr to domain={\thisrow{g_eps}}{-1000:0},
  restrict expr to domain={\thisrow{q1}}{0.009:0.011},
] table[x expr={-\thisrow{g_eps}}, y=ratio_pct] {Data/figure2_gradient_value.csv};

\addplot[
  thick, orange!90!black,
  mark=square*,
  mark indices={20},
  restrict expr to domain={\thisrow{g_eps}}{-1000:0},
  restrict expr to domain={\thisrow{q1}}{0.099:0.101},
] table[x expr={-\thisrow{g_eps}}, y=ratio_pct] {Data/figure2_gradient_value.csv};

\addplot[
  thick, teal!80!black,
  mark=triangle*,
  mark indices={20},
  restrict expr to domain={\thisrow{g_eps}}{-1000:0},
  restrict expr to domain={\thisrow{q1}}{0.249:0.251},
] table[x expr={-\thisrow{g_eps}}, y=ratio_pct] {Data/figure2_gradient_value.csv};

\addplot[
  thick, red!75!black,
  mark=diamond*,
  mark indices={20},
  restrict expr to domain={\thisrow{g_eps}}{-1000:0},
  restrict expr to domain={\thisrow{q1}}{0.499:0.501},
] table[x expr={-\thisrow{g_eps}}, y=ratio_pct] {Data/figure2_gradient_value.csv};

\addplot[
  thick, purple!80!black,
  mark=star,
  mark indices={15},
  restrict expr to domain={\thisrow{g_eps}}{-1000:0},
  restrict expr to domain={\thisrow{q1}}{0.749:0.751},
] table[x expr={-\thisrow{g_eps}}, y=ratio_pct] {Data/figure2_gradient_value.csv};

\addplot[
  thick, brown!85!black,
  mark=o,
  mark indices={15},
  restrict expr to domain={\thisrow{g_eps}}{-1000:0},
  restrict expr to domain={\thisrow{q1}}{0.899:0.901},
] table[x expr={-\thisrow{g_eps}}, y=ratio_pct] {Data/figure2_gradient_value.csv};

\addplot[
  thick, black,
  mark=x,
  mark indices={10},
  restrict expr to domain={\thisrow{g_eps}}{-1000:0},
  restrict expr to domain={\thisrow{q1}}{0.989:0.991},
] table[x expr={-\thisrow{g_eps}}, y=ratio_pct] {Data/figure2_gradient_value.csv};

\end{axis}
\end{tikzpicture}
\caption{Negative observed gradient.}
\end{subfigure}
\hfill
\begin{subfigure}{0.48\textwidth}
\centering
\begin{tikzpicture}
\begin{axis}[
    xmode=log,
    xmin=1e-6,
    xmax=2,
    ymin=25, ymax=100,
    xlabel={$g_{\epsilon}$},
    grid=both,
    table/col sep=comma,
    legend style={font=\footnotesize},
    legend cell align=left,
    legend pos=south west
]

\addlegendimage{thick, blue!70!black, mark=*}
\addlegendentry{$q_1=0.01$}

\addlegendimage{thick, orange!90!black, mark=square*}
\addlegendentry{$q_1=0.10$}

\addlegendimage{thick, teal!80!black, mark=triangle*}
\addlegendentry{$q_1=0.25$}

\addlegendimage{thick, red!75!black, mark=diamond*}
\addlegendentry{$q_1=0.50$}

\addlegendimage{thick, purple!80!black, mark=star}
\addlegendentry{$q_1=0.75$}

\addlegendimage{thick, brown!85!black, mark=o}
\addlegendentry{$q_1=0.90$}

\addlegendimage{thick, black, mark=x}
\addlegendentry{$q_1=0.99$}

\addplot[
  thick, blue!70!black,
  mark=*,
  mark indices={90},
  restrict expr to domain={\thisrow{g_eps}}{0:1000},
  restrict expr to domain={\thisrow{q1}}{0.009:0.011},
] table[x=g_eps, y=ratio_pct] {Data/figure2_gradient_value.csv};

\addplot[
  thick, orange!90!black,
  mark=square*,
  mark indices={90},
  restrict expr to domain={\thisrow{g_eps}}{0:1000},
  restrict expr to domain={\thisrow{q1}}{0.099:0.101},
] table[x=g_eps, y=ratio_pct] {Data/figure2_gradient_value.csv};

\addplot[
  thick, teal!80!black,
  mark=triangle*,
  mark indices={90},
  restrict expr to domain={\thisrow{g_eps}}{0:1000},
  restrict expr to domain={\thisrow{q1}}{0.249:0.251},
] table[x=g_eps, y=ratio_pct] {Data/figure2_gradient_value.csv};

\addplot[
  thick, red!75!black,
  mark=diamond*,
  mark indices={90},
  restrict expr to domain={\thisrow{g_eps}}{0:1000},
  restrict expr to domain={\thisrow{q1}}{0.499:0.501},
] table[x=g_eps, y=ratio_pct] {Data/figure2_gradient_value.csv};

\addplot[
  thick, purple!80!black,
  mark=star,
  mark indices={90},
  restrict expr to domain={\thisrow{g_eps}}{0:1000},
  restrict expr to domain={\thisrow{q1}}{0.749:0.751},
] table[x=g_eps, y=ratio_pct] {Data/figure2_gradient_value.csv};

\addplot[
  thick, brown!85!black,
  mark=o,
  mark indices={90},
  restrict expr to domain={\thisrow{g_eps}}{0:1000},
  restrict expr to domain={\thisrow{q1}}{0.899:0.901},
] table[x=g_eps, y=ratio_pct] {Data/figure2_gradient_value.csv};

\addplot[
  thick, black,
  mark=x,
  mark indices={95},
  restrict expr to domain={\thisrow{g_eps}}{0:1000},
  restrict expr to domain={\thisrow{q1}}{0.989:0.991},
] table[x=g_eps, y=ratio_pct] {Data/figure2_gradient_value.csv};

\end{axis}
\end{tikzpicture}
\caption{Positive observed gradient.}
\end{subfigure}

\caption{\textbf{Value of a gradient measurement for various $q_1$ ($\Cb = \Fb$).} The maximin ratio is computed by evaluating $\underline{\Lb}(\Cb(\Iset), \IN)$ with a discretization involving $M = 1000$ points.}
\label{fig:grad_value}
\end{figure}

\Cref{fig:grad_value} shows that when the absolute value of the gradient is small, the worst-case ratio is high, as the decision-maker is close to the optimal price. More generally, the results indicate that gradient information is informative even when its magnitude is not negligible. For instance, when $q_1 = 0.5$, the worst-case ratio without gradient information is around $60\%$, while knowing that the absolute value of the gradient is at most $0.1$ increases the ratio to above $80\%$. This illustrates that local information can significantly improve performance guarantees.

\noindent \textbf{Value of the gradient sign.}
In settings where estimating the magnitude of the gradient is noisy and unreliable, the seller may only be able to infer its sign. When demand is regular, the revenue function is unimodal, and the sign of the gradient at $p_1$ determines whether the optimal price $\rs$ satisfies $\rs \geq p_1$ or $\rs \leq p_1$. This corresponds to eliminating half of the feasible set of optimal prices.

We illustrate the value of this information in the same setting as above. As a baseline, we consider the maximin ratio obtained when the information set is $\{(p_1,q_1)\}$. We then evaluate the ratio obtained when the seller additionally observes the sign of the gradient at $p_1$. The results are reported in \Cref{fig:grad_sign}.

\begin{figure}[h!]
\centering
\begin{tikzpicture}
\begin{axis}[
    xmin=0.01, xmax=0.99,
    ymin=20, ymax=100,
    xlabel={$q_1$},
    ylabel={Worst-case ratio (\%)},
    grid=both,
    table/col sep=comma,
    xtick={0.10,0.20,0.30,0.40,0.50,0.60,0.70,0.80,0.90},
    legend style={font=\footnotesize},
    legend cell align=left,
    legend pos=south west,  
]

\addplot[thick, mark=*, black]
  table[x=q1, y=ratio_baseline_pct] {Data/figure3_gradient_sign.csv};
\addlegendentry{Baseline}

\addplot[thick, mark=square*, red!70!black]
  table[x=q1, y=ratio_positive_grad_pct] {Data/figure3_gradient_sign.csv};
\addlegendentry{Positive gradient}

\addplot[thick, mark=triangle*, blue!75!black]
  table[x=q1, y=ratio_negative_grad_pct] {Data/figure3_gradient_sign.csv};
\addlegendentry{Negative gradient}

\end{axis}
\end{tikzpicture}
    \caption{
    \textbf{Value of the gradient-sign for various $q_1$.} The maximin ratio with gradient-sign information is computed by modifying the evaluation of $\underline{\Lb}(\Cb(\{(p_1,q_1)\}), \IN)$ where we only consider constraints for $i$ such that $a_i \leq p_1$ (resp. $a_i \geq p_1$) when the gradient-sign is negative (resp. positive). ($p_1=10$)   
    }
    \label{fig:grad_sign}
\end{figure}

\Cref{fig:grad_sign} shows that the gradient sign can provide substantial information depending on the initial conversion rate. For example, when $q_1 = 0.5$, observing that the gradient is negative increases the worst-case ratio from $60\%$ to $83\%$. In contrast, for values such as $q_1 = 0.25$, the negative sign provides little additional information: the binding baseline worst-case instances already have feasible optimal prices weakly below $p_1$, so the restriction $p^*\le p_1$ does not eliminate them.
Our framework also enables a cost-benefit analysis of local experimentation. For instance, when $q_1 = 0.25$, the negative sign of the gradient does not improve performance, but \Cref{fig:grad_value}-a shows that a more precise estimate of its magnitude can yield significant gains. In that case, knowing that the gradient is equal to $-0.1$ increases the worst-case ratio from approximately $70\%$ to above $80\%$.

These results highlight that the value of local information depends on both the initial conversion rate and the precision of the measurement, and our framework provides a systematic way to quantify the benefit of more accurate, and potentially more costly, local experiments.

\subsection{The value of global experimentation}\label{sec:extra_point}
We now relax the constraint that the decision-maker should explore locally and consider the setting in which the seller has observed a single demand point $(p_1, q_1)$ and has the opportunity to experiment at any second price $p_2$ by collecting the demand level at the latter price, before committing to a final pricing mechanism. 
We model the problem as a game between Nature and the seller, in which the seller commits to a price $p_2$ to experiment at, and Nature counters the seller by choosing the associated feasible conversion rate $q_2$ to yield the worst maximin ratio in the final stage. More specifically, for a fixed $p_2$, Nature is solving the following problem,
\begin{align*}
\inf_{q_2 \in [0,1]}  \sup_{\Psi \in \MechSet} \inf_{F \in \Cb(\{(p_1, p_2), (q_1, q_2)\})}  \Ratio[\Psi, F].
\end{align*}

We illustrate numerically the value of one additional experiment in the setting where the seller knows the demand function is regular and its support is included in $[\lb, \ub] = [0, 100]$. The seller starts with one initial observed demand at the price $p_1 = 10$. For each initial  $q_1$ and candidate experiment price $p_2$ from a fixed grid of $[\lb, \ub]$ of size 200, we numerically evaluate a lower-bound on the above objective using the LP described in \Cref{sec:value_information} (with $M=2500$). We present in \Cref{1p_exp}-a, the improvement in the worst-case ratio allowed with the best experiment price $p_2$ chosen by the decision-maker and show in \Cref{1p_exp}-b what this best price is.
\begin{figure}[h!]
\centering
\begin{subfigure}{0.48\textwidth}
\centering
\begin{tikzpicture}
\begin{axis}[
    xmin=0.01, xmax=0.99,
    ymin=0, ymax=100,
    xlabel={$q_1$},
    ylabel={Worst-case ratio (\%)},
    grid=both,
    table/col sep=comma,
    xtick={0.10,0.20,0.30,0.40,0.50,0.60,0.70,0.80,0.90},
    legend style={font=\footnotesize},
    legend cell align=left,
    legend pos=south east,  
]

\addplot[thick, mark=*, blue!75!black]
  table[x=q1, y=baseline_ratio_pct] {Data/figure4_summary.csv};
\addlegendentry{Baseline $\Iset=(p_1,q_1)$}

\addplot[thick, mark=square*, red!75!black]
  table[x=q1, y=best_additional_exp_ratio_pct] {Data/figure4_summary.csv};
\addlegendentry{$(p_1,q_1)$ + experiment at $p_2$}

\end{axis}
\end{tikzpicture}
\caption{Performance with one additional experiment.}
\end{subfigure}
\hfill
\begin{subfigure}{0.48\textwidth}
\centering
\begin{tikzpicture}
\begin{axis}[
    xmin=0.01, xmax=0.99,
    xlabel={$q_1$},
    ylabel={Best $p_2$},
    grid=both,
    table/col sep=comma,
    xtick={0.10,0.20,0.30,0.40,0.50,0.60,0.70,0.80,0.9},
    legend style={font=\footnotesize},
    legend cell align=left,
    legend pos=south east,  
]

\addplot[thick, mark=*, red!75!black]
  table[x=q1, y=best_p2] {Data/figure4_summary.csv};
\addlegendentry{Best experiment price $p_2$}

\addplot[black, dashed, thick, forget plot]
  coordinates {(0.01,10) (0.99,10)};

\end{axis}
\end{tikzpicture}
\caption{Best candidate experiment price $p_2$.}
\end{subfigure}

\caption{\textbf{Value of one additional global experiment for various $(p_1,q_1)$.}
Panel (a) compares the baseline guarantee to the best robust guarantee achieved after one optimally chosen second price experiment.
Panel (b) reports the corresponding best $p_2$.}
\label{1p_exp}
\end{figure}

\Cref{1p_exp} shows that by allowing the decision-maker to judiciously select a second price at which to collect information, they can considerably improve the performance of the final price selected. Indeed, while a single price already guarantees a ratio of at least $30\%$ for conversion rates between $[0.01,0.99]$, the decision-maker is ensured to obtain more than $50\%$ of the revenue by experimenting correctly at a second price. More generally, across conversion rates at the initial price, the decision-maker can increase the robust performance by approximately $20\%$ through an additional well-selected price experiment.
\Cref{1p_exp}-b also illustrates how our procedure allows us to select a second price to run efficient experimentation. For instance, it provides general guidelines regarding when the price should be deflated or inflated to gather more information: in this case, when the conversion rate at $p_1$ is lower (respectively higher) than approximately $0.3$ the decision-maker should consider experimenting at a lower (resp. higher) price.

\subsection{Stopping criterion for dynamic pricing with deterministic feedback}\label{sec:ternary}

In \Cref{sec:extra_point}, we considered the case of a decision-maker who performs one additional experiment before committing to a pricing decision. We now illustrate how our framework can be used more broadly to guide dynamic pricing procedures. In particular, we show how it can be used to determine when to stop experimenting while guaranteeing a desired level of performance.

Given a target accuracy level $\epsilon$, a valid stopping criterion ensures that the pricing mechanism returned achieves at least a $(1-\epsilon)$ fraction of the optimal revenue. We illustrate how to construct such a stopping rule by leveraging our characterization of the value of information.
To this end, we consider a setting in which the seller dynamically adjusts prices and observes exact conversion rates at each step. We focus on the case where the value distribution is regular, so that the revenue function is unimodal. In this setting, classical approaches rely on procedures such as the ternary search algorithm, which sequentially narrows the interval of candidate optimal prices by comparing revenues at two interior points. We provide a formal description of ternary search in Algorithm~\ref{algo:ts} (see \Cref{sec:apx_algorithms}).

We emphasize that most dynamic pricing algorithms share a similar structure: they consist of an exploration procedure, a stopping criterion, and a final exploitation decision. In what follows, we fix the exploration procedure of a given dynamic pricing method $\theta$ and replace its stopping rule with one derived from our framework.
At each iteration $t$, the decision-maker has access to an information set $\Iset^t$ consisting of observed prices and conversion rates. Using the discretized program introduced in \Cref{sec:value_information}, we compute the quantity $\underline{\Lb}(\Cb(\Iset^t), \IN)$ defined in \eqref{eq:minimax_lp}, which provides a lower bound on the achievable maximin ratio. Therefore, we propose the following stopping rule: stop the exploration phase as soon as
\begin{equation*}
\underline{\Lb}(\Cb(\Iset^t), \IN) \geq 1 - \epsilon.
\end{equation*}
\Cref{thm:lp_discrete_bound} proves that this criterion guarantees that the pricing mechanism returned by solving \eqref{eq:minimax_lp} achieves a worst-case ratio of at least $1-\epsilon$. This provides a valid and data-dependent stopping rule that can be combined with any exploration procedure. Algorithm~\ref{algo:algo_with_stopping} in \Cref{sec:apx_algorithms} formalizes this stopping criterion.

We compare this approach with the standard stopping criterion used in ternary search, which prescribes a fixed number of iterations (see \Cref{lem:ternary_stop} in \Cref{sec:apx_algorithms}) given by
\begin{equation*}
N^{\mathrm{Ternary}}(\epsilon) = 2 \cdot \left \lceil \frac{\log(\frac{\ub-\lb}{\lb \cdot \epsilon})}{\log(3/2)} \right \rceil.
\end{equation*}
This rule is non-adaptive and depends only on the desired accuracy level, not on the observed data.

To evaluate the effectiveness of our approach, we consider the following experimental setup. We fix the exploration procedure to follow the ternary search algorithm, and vary only the stopping criterion. We sample instances from parametric families of regular demand curves, including linear and exponential specifications.

We set $\lb = 0.01$ and $\ub = 1$. For the linear model, we consider demand curves of the form
$\overline{F}(v) = \min\left( \max(a - b \cdot v, 0), 1 \right),$
where $b$ is uniformly sampled from $[1,5]$ and $a$ is uniformly sampled from $[1,b]$. For the exponential model, we consider
$\overline{F}(v) = \min\left( \exp(a - b \cdot v), 1 \right),$
where $b$ is uniformly sampled from $[1,5]$ and $a$ is uniformly sampled from $[-0.2,b]$.

For each instance, we measure the number of price queries required to guarantee a worst-case ratio of at least $99\%$ under three different stopping criteria:
\begin{enumerate}[label=(\roman*)]
    \item the standard ternary search stopping rule,
    \item our stopping rule assuming unimodality of the revenue function,
    \item our stopping rule assuming regularity of the demand distribution.
\end{enumerate}

We report in \Cref{fig:deterministic_queries} the number of price queries needed by each criterion to ensure that the exploitation price will achieve a $99\%$ worst-case ratio. We emphasize that while the stopping criterion changes, the dynamic pricing algorithm used for exploration is always the same and corresponds to the Ternary search approach. 
\begin{figure}[h!]
\centering
\begin{subfigure}{0.48\textwidth}
\centering
\begin{tikzpicture}
\begin{axis}[
    ybar,
    bar width=5pt,
    xmin=4, xmax=50,
    ymin=0, ymax=200,
    xlabel={Number of Price Queries},
    ylabel={Count},
    grid=both,
    table/col sep=comma,
    xtick distance=4,
    legend style={font=\scriptsize},
    legend cell align=left,
    legend pos=north west,  
]

\addplot[fill=red!70, draw=red!70, discard if not={criterion}{queries_regular_stopping}]
  table[x=query_count, y=count] {Data/figure5_linear_histogram.csv};
\addlegendentry{Regular Stopping}

\addplot[fill=blue!60, draw=blue!60, pattern=north east lines,  pattern color=blue!70!black,
 discard if not={criterion}{queries_unimodal_stopping}]
  table[x=query_count, y=count] {Data/figure5_linear_histogram.csv};
\addlegendentry{Unimodal Stopping}

\draw[black, dashed, ultra thick]
    (axis cs:44,0) -- (axis cs:44,210);

\addlegendimage{
  legend image code/.code={
    \draw[black, dashed, ultra thick] (0cm,0cm) -- (0.35cm,0cm);
  }
}
\addlegendentry{$N^{\mathrm{Ternary}}(\varepsilon)$}

\end{axis}
\end{tikzpicture}
\caption{Linear demand.}
\end{subfigure}
\hfill
\begin{subfigure}{0.48\textwidth}
\centering
\begin{tikzpicture}
\begin{axis}[
    ybar,
    bar width=5pt,
    xmin=4, xmax=50,
    ymin=0, ymax=200,
    xlabel={Number of Price Queries},
    ylabel={Count},
    grid=both,
    table/col sep=comma,
    xtick distance=4,
    legend style={font=\scriptsize},
    legend cell align=left,
    legend pos=north west,  
]

\addplot[fill=red!70, draw=red!70, discard if not={criterion}{queries_regular_stopping}]
  table[x=query_count, y=count] {Data/figure5_exponential_histogram.csv};
\addlegendentry{Regular Stopping}

\addplot[fill=blue!60, draw=blue!60,     pattern=north east lines,  pattern color=blue!70!black,
 discard if not={criterion}{queries_unimodal_stopping}]
  table[x=query_count, y=count] {Data/figure5_exponential_histogram.csv};
\addlegendentry{Unimodal Stopping}

\draw[black, dashed, ultra thick]
    (axis cs:44,0) -- (axis cs:44,210);

\addlegendimage{
  legend image code/.code={
    \draw[black, dashed, ultra thick] (0cm,0cm) -- (0.35cm,0cm);
  }
}
\addlegendentry{$N^{\mathrm{Ternary}}(\varepsilon)$}

\end{axis}
\end{tikzpicture}
\caption{Exponential demand.}
\end{subfigure}

\caption{\textbf{Distribution of the number of price queries needed to achieve 99\% worst-case ratio performance.}}
\label{fig:deterministic_queries}
\end{figure}

\Cref{fig:deterministic_queries} shows that using our characterization of the worst-case ratio allows to stop considerably earlier than by using the standard stopping criterion of Ternary search. When using our stopping criterion which only uses the unimodality of the revenue curve, we are already able to divide by two the number of price queries needed on average. This difference can be explained by the data-driven nature of our criterion. Indeed, the Ternary search stopping criterion is non-adaptive and considers the worst-case number of iterations needed across all unimodal curves, whereas our stopping criterion leverages the information gathered at each step to rule out some of the unimodal curves based on the conversion rates observed. 

Furthermore, our stopping criterion can be used to incorporate additional beliefs about the demand distribution. For instance, we can incorporate regularity by setting $\Cb = \Fb$. In that case, we see that our criterion can reduce on average the number of price points at which we experiment by a factor of five compared to the Ternary search criterion. For some of the instances considered our regular stopping criterion proposes to stop after only 5 price queries (as opposed to $44$ with usual stopping rules and ternary search). This significant gain in efficiency considerably changes our understanding of the value of dynamic price experiments.

\section{Handling Noisy Conversion Rates}
\label{sec:noisy_feedback}
In the previous sections, we assumed that conversion rates at observed prices are known exactly. In practice, however, conversion rates must be estimated from finite samples. In this section, we investigate how sampling noise affects the pricing policies computed by our framework.

Fix a value distribution $F$ supported on $[\underline v,\overline v]$ and consider a grid of $K$ prices $(p_i)_{i=1}^K$, equally spaced over $[\underline v,\overline v]$. Let $q_i = \overline{F}(p_i-)$ denote the true conversion rate at price $p_i$. These conversion rates are not directly observed. Instead, at each price $p_i$ the seller observes $T$ independent binary purchase outcomes, yielding an empirical conversion rate $\hat q_i$ defined by
\begin{equation*}
\hat q_i = \frac{1}{T}\sum_{j=1}^T \mathbbm{1}\{v_{ij} \ge p_i\},
\end{equation*}
where $(v_{ij})_{j=1}^T$ are i.i.d.\ draws from $F$. The resulting empirical information set is therefore $\mathcal{I}_{\bm{p},\bm{\hat q}}$. An important issue is that, due to sampling noise, the set $\Cb(\mathcal{I}_{\bm{p},\bm{\hat q}})$ may be empty. For instance, empirical conversion rates may violate monotonicity or fail to satisfy regularity constraints. To address this, we introduce a two-step procedure that combines projection with robust optimization.

\noindent \textbf{Robust policy with noisy observation.}
We propose the following policy.
First, we project the empirical conversion rates onto a structured feasibility set that encodes the desired shape constraints.

In the general case $\Cb=\Gb$, we compute the projected empirical conversion rate $\bm{q^\pi}$ by projecting $\bm{\hat q}$ onto the monotone cone $\{q_0\ge q_1\ge\cdots\ge q_{K+1}\}$, which corresponds to the standard isotonic regression problem \citep{barlow1972statistical, robertson1988order}.

In the regular case $\Cb=\Fb$, the regularity constraints are most conveniently expressed after applying the transformation $\Gamma^{-1}(q)=\frac{1}{q}-1$. Let $\hat z_i=\Gamma^{-1}(\max(\epsilon,\hat q_i))$ for some small\footnote{This $\epsilon$ is needed as $\Gamma^{-1}$ is not defined at $0$. In our implementation we use $\epsilon = 10^{-4}.$} $\epsilon > 0$ and define the projected sequence $\bm{z^\pi}=(z^\pi_i)_{i=0}^{K+1}$ as a solution of the weighted least-squares projection
\begin{equation*}
\bm{z^\pi}\in \arg\min_{\bm{z}\in\mathbb{R}^{K+2}} \sum_{i=0}^{K+1} w_i \cdot (z_i-\hat z_i)^2
\quad \text{s.t.}\quad
z_0 \le z_1 \le \cdots \le z_{K+1},
\quad
\frac{z_{i}-z_{i-1}}{p_{i}-p_{i-1}} \le \frac{z_{i+1}-z_{i}}{p_{i+1}-p_{i}},\ \ i \leq K-1,
\end{equation*}
for some nonnegative weights $(w_i)_{i=0}^{K+1}$ (all equal to $1$ in what follows). The first set of constraints enforces monotone demand, while the second set enforces the discrete convexity condition on $p\mapsto \Gamma^{-1}(q(p))$ that is equivalent to regular feasibility on a finite price grid (see \Cref{lemma:feasible-set}). We then map back to conversion rates by setting
\begin{equation*}
q^\pi_i = \Gamma(z^\pi_i) = \frac{1}{1+z^\pi_i}, \qquad i=0,\ldots,K+1,
\end{equation*}
and obtain a projected information set $\mathcal{I}_{\bm{p},\bm{q^\pi}}$. Importantly, \Cref{lemma:feasible-set} implies that $\Gb(\mathcal{I}_{\bm{p},\bm{q^\pi}})$ and $\Fb(\mathcal{I}_{\bm{p},\bm{q^\pi}})$ are non-empty when using the corresponding projection.

Second, given the projected information set $\mathcal{I}_{\bm{p},\bm{q^\pi}}$, we compute the \emph{discretized maximin policy} by solving the LP described in \Cref{sec:value_information}, which we denote by $\hat\Psi(\bm{q^\pi})$. 

\noindent \textbf{Evaluation and Benchmark.}
To assess how noise affects the \emph{true} robustness of the learned policy, we evaluate its worst-case ratio over the \emph{true} feasible set induced by the true quantiles $\bm{q}$ as,
\begin{equation}
\widehat{\rho}(T,K)  \;=\; \inf_{F\in \Cb(\mathcal{I}_{\bm{p},\bm{q}})} \Ratio[\hat\Psi(\bm{q^\pi}),F].
\label{eq:empirical_perf}
\end{equation}
\Cref{eq:empirical_perf} can be efficiently computed by leveraging the reduction developed in \Cref{thm:nature_reduction}.
We note that this quantity is random because $\bm{q^\pi}$ depends on the sampling noise in $\bm{\hat q}$.

As a benchmark, we also compute the oracle maximin ratio associated with the true quantiles,
\begin{equation*}
\rho^\star(K) \;=\; \sup_{\Psi\in\mathcal{P}} \ \inf_{F\in \Cb(\mathcal{I}_{\bm{p},\bm{q}})} \Ratio,
\end{equation*}
which corresponds to the performance achievable if the decision-maker had access to the exact conversion rates at the $K$ prices. In \Cref{fig:true_ratio_vs_M}, dotted lines report $\rho^\star(K)$, while solid lines report the empirical average of $\widehat{\rho}(T,K)$ across repeated noisy datasets, together with 90\% empirical confidence intervals. The detailed parameters such as prices and conversion rates are presented in \Cref{sec:apx_figure6}.

\begin{figure}[t]
\centering

\begin{subfigure}[t]{0.48\textwidth}
\centering

\def\DATAFILE{Data/figure6_regular_tikz.csv}

\begin{tikzpicture}
\begin{axis}[
    width=\textwidth,
    height=0.8\textwidth,
    xlabel={$T$},
    ylabel={Worst-case ratio},
    grid=both,
    table/col sep=comma,
    ymin=0.2, ymax=1,
    ytick={0,0.2,0.4,0.6,0.8,1.0},
    legend style={font=\footnotesize},
    legend columns=2,
    legend cell align=left,
    legend pos=south east,  
]

\addlegendimage{blue, thick, mark=*}
\addlegendentry{$K=3$}

\addlegendimage{orange, thick, mark=square*}
\addlegendentry{$K=5$}

\addlegendimage{green!60!black, thick, mark=triangle*}
\addlegendentry{$K=9$}

\addlegendimage{dashed, line width = 0.2mm, draw=black, no marks}
\addlegendentry{Known quantile}

\addplot[name path=Kthree_lo, draw=none, forget plot]
    table[x=M, y={P05_K=3_true}] {\DATAFILE};
\addplot[name path=Kthree_hi, draw=none, forget plot]
    table[x=M, y={P95_K=3_true}] {\DATAFILE};
\addplot[blue, fill opacity=0.18, forget plot]
    fill between[of=Kthree_lo and Kthree_hi];

\addplot[blue, thick, mark=*]
    table[x=M, y={avg_K=3_true}] {\DATAFILE};

\addplot[blue, dashed, line width = 0.2mm, forget plot]
    table[x=M, y={known_quantile_ratio_3}] {\DATAFILE};

\addplot[name path=Kfive_lo, draw=none, forget plot]
    table[x=M, y={P05_K=5_true}] {\DATAFILE};
\addplot[name path=Kfive_hi, draw=none, forget plot]
    table[x=M, y={P95_K=5_true}] {\DATAFILE};
\addplot[orange, fill opacity=0.18, forget plot]
    fill between[of=Kfive_lo and Kfive_hi];

\addplot[orange, thick, mark=square*]
    table[x=M, y={avg_K=5_true}] {\DATAFILE};

\addplot[orange, dashed, line width = 0.2mm, forget plot]
    table[x=M, y={known_quantile_ratio_5}] {\DATAFILE};

\addplot[name path=Kten_lo, draw=none, forget plot]
    table[x=M, y={P05_K=9_true}] {\DATAFILE};
\addplot[name path=Kten_hi, draw=none, forget plot]
    table[x=M, y={P95_K=9_true}] {\DATAFILE};
\addplot[green!60!black, fill opacity=0.18, forget plot]
    fill between[of=Kten_lo and Kten_hi];

\addplot[green!60!black, thick, mark=triangle*]
    table[x=M, y={avg_K=9_true}] {\DATAFILE};

\addplot[green!60!black, dashed, line width = 0.2mm, forget plot]
    table[x=M, y={known_quantile_ratio_9}] {\DATAFILE};

\end{axis}
\end{tikzpicture}

\caption{Regular distributions}
\label{fig:true_ratio_vs_M_regular}
\end{subfigure}
\hfill
\begin{subfigure}[t]{0.48\textwidth}
\centering

\def\DATAFILE{Data/figure6_general_tikz.csv}

\begin{tikzpicture}
\begin{axis}[
    width=\textwidth,
    height=0.8\textwidth,
    xlabel={$T$},
    ylabel={Worst-case ratio},
    grid=both,
    table/col sep=comma,
    ymin=0.2, ymax=1,
    ytick={0,0.2,0.4,0.6,0.8,1.0},
    legend style={font=\footnotesize},
    legend columns=2,
    legend cell align=left,
    legend pos=south east,
]

\addlegendimage{blue, thick, mark=*}
\addlegendentry{$K=3$}

\addlegendimage{orange, thick, mark=square*}
\addlegendentry{$K=5$}

\addlegendimage{green!60!black, thick, mark=triangle*}
\addlegendentry{$K=9$}

\addlegendimage{dashed, line width = 0.2mm, draw=black, no marks}
\addlegendentry{Known quantile}

\addplot[name path=Kthree_lo, draw=none, forget plot]
    table[x=M, y={P05_K=3_true}] {\DATAFILE};
\addplot[name path=Kthree_hi, draw=none, forget plot]
    table[x=M, y={P95_K=3_true}] {\DATAFILE};
\addplot[blue, fill opacity=0.18, forget plot]
    fill between[of=Kthree_lo and Kthree_hi];

\addplot[blue, thick, mark=*]
    table[x=M, y={avg_K=3_true}] {\DATAFILE};

\addplot[blue, dashed, line width = 0.2mm, forget plot]
    table[x=M, y={known_quantile_ratio_3}] {\DATAFILE};

\addplot[name path=Kfive_lo, draw=none, forget plot]
    table[x=M, y={P05_K=5_true}] {\DATAFILE};
\addplot[name path=Kfive_hi, draw=none, forget plot]
    table[x=M, y={P95_K=5_true}] {\DATAFILE};
\addplot[orange, fill opacity=0.18, forget plot]
    fill between[of=Kfive_lo and Kfive_hi];

\addplot[orange, thick, mark=square*]
    table[x=M, y={avg_K=5_true}] {\DATAFILE};

\addplot[orange, dashed, line width = 0.2mm, forget plot]
    table[x=M, y={known_quantile_ratio_5}] {\DATAFILE};

\addplot[name path=Kten_lo, draw=none, forget plot]
    table[x=M, y={P05_K=9_true}] {\DATAFILE};
\addplot[name path=Kten_hi, draw=none, forget plot]
    table[x=M, y={P95_K=9_true}] {\DATAFILE};
\addplot[green!60!black, fill opacity=0.18, forget plot]
    fill between[of=Kten_lo and Kten_hi];

\addplot[green!60!black, thick, mark=triangle*]
    table[x=M, y={avg_K=9_true}] {\DATAFILE};

\addplot[green!60!black, dashed, line width = 0.2mm, forget plot]
    table[x=M, y={known_quantile_ratio_9}] {\DATAFILE};

\end{axis}
\end{tikzpicture}

\caption{General distributions}
\label{fig:true_ratio_vs_M_general}
\end{subfigure}

\caption{Average worst-case ratio with shaded 90\% empirical confidence intervals; dotted lines show the worst-case ratio with exact quantiles.}
\label{fig:true_ratio_vs_M}
\end{figure}

\noindent \textbf{Results.} Recall that Figure~\ref{fig:true_ratio_vs_M} evaluates an end-to-end offline pipeline under noisy feedback: we first project empirical conversion rates onto a shape-restricted feasible set and then compute the maximin policy using these \emph{projected} quantiles. Accordingly, the results reflect the joint effect of sampling noise and the stability of the projection--plus--LP mapping from $\bm{\hat q}$ to a robust pricing policy. With this in mind, we highlight the following observations.

While our theoretical analysis applies to the setting in which conversion rates are known exactly, the figure shows that, when combined with the projection step, the resulting robust policies retain strong performance guarantees in the noisy setting. In particular, the expected performance achieved with noisy observations approaches the benchmark suggested by the exact-quantile analysis as the number of samples per price increases.
This convergence occurs relatively quickly. For instance, to be within $10\%$ of the optimal performance with $K=9$, one requires approximately $100$ samples per price under regularity and about $50$ under general distributions, whereas for $K=3$ this level of performance is already achieved with roughly $25$ samples in both cases. These observations suggest that two factors contribute to faster convergence: smaller values of $K$, which reduce the dimensionality of the projection problem, and weaker structural assumptions, which make the projection step less sensitive to estimation error.

All in all, these results indicate that within the projection--plus--maximin framework studied here, the robust pricing policies computed from noisy conversion data using our approach recover near-optimal performance once $T$  is moderately large.

\section{Conclusion}
We present in this work a framework to precisely quantify the robust value of data obtained through price experiments in the form of past offered prices and observed conversion rates. Our main methodological contribution is a reduction of an optimization problem whose value quantifies the worst-case revenue gap incurred by a decision-maker who only has access to certain conversion rates compared to an oracle knowing the full value distribution. Building on this reduction, we develop a tractable procedure to compute near-optimal robust pricing policies and quantify the intrinsic value of information. We leverage these results to derive insights on the value of diverse types of historical data and to prescribe efficient ways of running price experiments, and show that our framework enables rigorous cost-benefit analyses of different types of information.

We also illustrate how our framework can be used to improve price experiments. We propose a procedure to select the next price at which one should experiment before committing to a final price. Furthermore, we show that our framework allows to significantly reduce the length of price experiments, even when committing to standard dynamic algorithms such as Ternary search, by providing a data-driven stopping criterion adaptive to the information collected. Finally, our numerical experiments illustrate that our robust policies can be modified to achieve near-optimal performance even when conversion rates must be estimated from noisy observations. We believe that this work opens up promising directions for more efficient price experimentation, including the design of dynamic pricing algorithms that better trade off exploration and exploitation.

\subsection*{Acknowledgment}
The authors would like to express their gratitude to Amine Allouah for his invaluable comments during the early stages of this work. The authors are also grateful to Victor Araman, Lin Fan and Michael Hamilton for their helpful discussions and valuable insights.

{
\setstretch{1}
\bibliographystyle{agsm}
\bibliography{ref}
}

\newpage

\appendix

\renewcommand{\thefigure}{\thesection-\arabic{figure}}
\renewcommand{\theequation}{\thesection-\arabic{equation}}
\renewcommand{\theproposition}{\thesection-\arabic{proposition}}
\renewcommand{\thelemma}{\thesection-\arabic{lemma}}
\renewcommand{\thetheorem}{\thesection-\arabic{theorem}}
\renewcommand{\thedefinition}{\thesection-\arabic{definition}}
\pagenumbering{arabic}
\renewcommand{\thepage}{App-\arabic{page}}

\setcounter{equation}{0}
\setcounter{proposition}{0}
\setcounter{definition}{0}
\setcounter{lemma}{0}
\setcounter{theorem}{0}
\setcounter{figure}{0}

\section{Proof of Results in \Cref{sec:Noptpricing}}
\label{sec:apx_Proof_Thm1}

\begin{proof}[\textbf{Proof of \Cref{lemma:singlecross}}] 
When $\Cb = \Fb$, the result follows directly  from  \cite[Lemma 2]{{ABBSamples}}, which generalizes \cite[Chapter 4, Theorem 2.18]{barlow1975statistical}, with the following additional identity:
\begin{align*}
q \cdot \Gad{\Gainv{\frac{q'}{q}}\frac{v-s}{s'-s}} &= \frac{1}{\frac{1}{q}\left(1+ \left(\frac{q}{q'}-1\right)\frac{v-s}{s'-s}\right)}\\
&= \frac{1}{\frac{1}{q} + \left(\frac{1}{q'}-\frac{1}{q}\right)\frac{v-s}{s'-s}} 
= \Gad{\Gainv{q} + \frac{\Gainv{q'}-\Gainv{q}}{s'-s}(v-s)} 
\end{align*}
When $\Cb = \Gb$, the result follows directly from the non-increasing property of complementary cumulative distribution functions.
\end{proof}

\subsection{Proof of \Cref{prop:worst_case_revenue}}

In what follows, we prove \Cref{prop:worst_case_revenue}. 
Our first lemma shows that our candidate worst-case distribution provides a lower bound on the revenue associated with any value distribution $F \in \Cb(\Iset)$.
\begin{lemma}
\label{lem:lower_revenue}
Consider an information set $\Iset$, then for any pricing mechanism $\Psi \in \MechSet$ and any value distribution $F \in \Cb(\Iset)$, we have that,
$\Expect_{\Psi}[  \Rev\left(p\vert F\right)] \geq \Expect_{\Psi}[ \Rev\left(p\vert L_{\Cb}( \cdot \vert  \Iset)  \right)]$,
where the ccdf of $L_{\Cb}$ is defined in \eqref{eq:L}.
\end{lemma}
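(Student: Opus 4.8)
The plan is to reduce the claimed inequality between expected revenues to a pointwise comparison of survival functions, and then to apply the Local Bounds lemma (\Cref{lemma:singlecross}) cell by cell. Since $\Rev(p\,|\,F)=p\cdot\mathbb{P}_{v\sim F}(v\ge p)$ and $p\ge\lb\ge 0$ on the support of $\Psi$, it suffices to show that $\mathbb{P}_{v\sim F}(v\ge p)\ge\mathbb{P}_{v\sim L_{\Cb}(\cdot|\Iset)}(v\ge p)$ for every $p\in[\lb,\ub]$ and then integrate this inequality against $\Psi$. In the notation of \Cref{section:Noptpricing}, the left-hand side is $\overline{F}(p-)$ and the right-hand side is the left limit at $p$ of the piecewise function $\overline{L}_{\Cb}(\cdot\,|\,\Iset)$ defined in \eqref{eq:L}.

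To establish this pointwise inequality I would distinguish two cases. If $p=p_j$ is one of the grid points ($j\in\{0,\ldots,N+1\}$), then both sides equal $q_j$: the left one because $F\in\Cb(\Iset)$, and the right one because the gluing in \eqref{eq:L} is consistent with the data (the piece of $\overline{L}_{\Cb}(\cdot\,|\,\Iset)$ immediately to the left of $p_j$, namely $\overline{G}_{\Cb,j-1}$, has left limit $q_j$ at $p_j$ — the constant $q_j$ there when $\Cb=\Gb$, and the continuous value $\Gad{\Gainv{q_j}}=q_j$ when $\Cb=\Fb$). If instead $p$ lies strictly inside a cell $[p_j,p_{j+1})$, then $\overline{G}_{\Cb,j}$ is continuous at $p$, so $\overline{L}_{\Cb}(p-\,|\,\Iset)=\overline{G}_{\Cb,j}(p)$, while $\overline{F}(p-)\ge\overline{F}(p)\ge\overline{G}_{\Cb,j}(p)$ by monotonicity of $\overline{F}$ combined with the first inequality of \Cref{lemma:singlecross} applied with $(s,s')=(p_j,p_{j+1})$, whose hypotheses $0<q_{j+1}=\overline{F}(p_{j+1}-)\le q_j=\overline{F}(p_j-)$ hold because $F\in\Cb(\Iset)$. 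The degenerate top cell requires only a side remark: when $q_{N+1}=0$ one has $\overline{L}_{\Cb}(v\,|\,\Iset)=0$ for $v\in(p_N,\ub)$, so the inequality is trivial there.

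The only substantive ingredient is \Cref{lemma:singlecross}, already proved, so the remainder is bookkeeping. The point that needs care — and where a careless argument would slip — is the distinction between the \emph{value} and the \emph{left limit} of the survival functions at the grid points $p_j$: \Cref{lemma:singlecross} controls the object $\overline{F}(v)$ on the \emph{half-open} cells, whereas revenue is governed by $\overline{F}(p-)$, so one must combine that bound with monotonicity of $\overline{F}$ and the one-sided continuity of the pieces $\overline{G}_{\Cb,i}$ to pass to left limits without losing the inequality. I do not anticipate any further obstacle.
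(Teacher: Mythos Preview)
Your proposal is correct and follows essentially the same route as the paper: reduce to a pointwise survival-function inequality via \Cref{lemma:singlecross} cell by cell, then integrate against $\Psi$. If anything, you are more careful than the paper's own argument—you explicitly distinguish grid points from interior points when passing to left limits, and you flag the top cell where $q_{N+1}=0$ falls outside the strict hypothesis $q_{s'}>0$ of \Cref{lemma:singlecross}, whereas the paper glosses over both issues.
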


Next, we show that our candidate worst-case distribution belongs to $\Cb(\Iset)$.
\begin{lemma}
\label{lem:L_belongs}
If $\Cb(\Iset)$ is non-empty, then $L_{\Cb}(\cdot \vert \Iset) \in \Cb(\Iset)$.
\end{lemma}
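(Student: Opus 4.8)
The plan is to verify the two defining requirements of membership in $\Cb(\Iset)$ separately: (i) that $L_{\Cb}(\cdot\mid\Iset)$ is a bona fide distribution in the class $\Cb$ (i.e., in $\Gb$, and when $\Cb=\Fb$ also regular), and (ii) that it is consistent with the information set, i.e., $\overline{L}_{\Cb}(p_i-\mid\Iset)=q_i$ for every $i\in\{0,\ldots,N+1\}$. The consistency check (ii) is the easy part: on each interval $[p_i,p_{i+1})$ the ccdf coincides with $\overline{G}_{\Cb}(\cdot\mid(p_i,q_i),(p_{i+1},q_{i+1}))$, and by construction both $\overline{G}_{\Gb}$ and $\overline{G}_{\Fb}$ take the value $q_i$ at $v=p_i$ and left-limit $q_{i+1}$ at $v=p_{i+1}$; thus gluing them yields a function whose left limit at each $p_i$ is exactly $q_i$. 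The endpoints $q_0=1$ and $q_{N+1}=0$ are handled by the conventions $p_0=\lb$, $p_{N+1}=\ub$ together with the fact that $\overline{G}_{\Cb}$ evaluated at the left endpoint equals $1$, and by assigning mass so that $\overline{L}_{\Cb}(\ub-\mid\Iset)=q_{N+1}=0$.

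For (i), I would first argue that $\overline{L}_{\Cb}$ is non-increasing and right-continuous with values in $[0,1]$, hence is the ccdf of some distribution on $[\lb,\ub]$. Within each interval this is immediate from the monotonicity of $\overline{G}_{\Cb}$ (for $\Gb$ it is piecewise constant; for $\Fb$ it is of the form $\Gamma(\text{affine})$, which is decreasing since $\Gamma$ is decreasing and the slope $\Gamma^{-1}(q_{i+1})-\Gamma^{-1}(q_i)\ge 0$ because $q_{i+1}\le q_i$). At the junction points $p_i$ one uses $q_i\ge q_{i+1}$ to see that the right limit coming from the interval $[p_i,p_{i+1})$, namely $\overline{G}_{\Cb,i}(p_i)=q_i$, does not exceed the left value $\overline{G}_{\Cb,i-1}(p_i-)=q_i$, so there is no upward jump. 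When $\Cb=\Fb$ one must additionally check regularity: each piece $\overline{G}_{\Fb,i}$ has the form $1/(1+\ell_i(v))$ with $\ell_i$ affine, which is exactly the ccdf of a (shifted, truncated) distribution whose revenue in quantile space is linear — hence the virtual value is constant on the interior of each interval; monotonicity of the global virtual value then follows provided the constants are ordered correctly across consecutive intervals, which is where the hypothesis $\Cb(\Iset)\ne\emptyset$ enters (the existence of one regular distribution consistent with $\Iset$ forces the conversion-rate data to be "regular-feasible," i.e., to satisfy the requisite inequalities on consecutive slopes).

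The main obstacle I anticipate is precisely this last point: showing that the piecewise-$\Gamma$ construction $L_{\Fb}$ is globally regular, not just regular on each piece. Constant virtual value on each open interval is easy, but one must rule out a downward jump of the virtual value at the breakpoints $p_i$, and this is not automatic from the raw data — it must be deduced from the non-emptiness of $\Cb(\Iset)$. I would handle it by taking any $F\in\Cb(\Iset)$ (which exists by hypothesis), applying \Cref{lemma:singlecross} to sandwich $\overline F$ between $\overline{L}_{\Fb}$ and $\overline{U}_{\Fb}$ on each interval, and then translating the regularity of $F$ (non-decreasing virtual value, equivalently concavity of revenue in quantile space) into the needed inequalities among the affine pieces defining $\overline{L}_{\Fb}$; concavity of the revenue curve of $L_{\Fb}$ in quantile space — a piecewise-linear function whose breakpoints are the $q_i$ and whose slopes are inherited from the chords of $F$'s revenue curve — then gives regularity. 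For $\Cb=\Gb$ no such subtlety arises and (i) is fully elementary.
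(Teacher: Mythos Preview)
Your proposal is correct and follows essentially the same route as the paper: verify the interpolation conditions at each $p_i$, check monotonicity using $q_i\ge q_{i+1}$ (which follows from non-emptiness), and for $\Cb=\Fb$ deduce regularity by showing the virtual value is piecewise constant with non-decreasing levels across pieces, the latter being extracted from the existence of a regular $F$ passing through the data. The only cosmetic difference is that the paper factors the step ``non-emptiness of $\Fb(\Iset)$ implies the slopes $(\Gamma^{-1}(q_{i+1})-\Gamma^{-1}(q_i))/(p_{i+1}-p_i)$ are non-decreasing'' into a separate feasibility proposition (\Cref{lemma:feasible-set}) and then computes the piecewise-constant virtual values explicitly via \Cref{lem:const}, whereas you propose to argue this inline via concavity of the revenue curve in quantile space---an equivalent and arguably cleaner framing.
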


We note that from \Cref{lem:lower_revenue} and \Cref{lem:L_belongs} we can conclude that $L_{\Cb}(\cdot \vert \Iset)$ is the distribution which minimizes the absolute revenue among all distributions in $\Cb(\Iset)$. To establish \Cref{prop:worst_case_revenue} we also need to characterize the optimal price of this family of distributions to satisfy the constraint on the optimal revenue and its associated conversion rate.

The next structural result shows that the optimal revenue under our candidate distribution  $L_{\Cb}(\cdot \vert \Iset)$ is equal to the largest revenue obtained with points in $\Iset$.
\begin{lemma}
\label{lem:max_L}
If $\Cb(\Iset)$ is non-empty, then $\sup_{p \in [\lb,\ub]} \Rev\left(p \vert L_{\Cb}( \cdot \vert \Iset) \right) = \max_{i \in \{0,\ldots,N+1\}} p_i \cdot q_i.$
\end{lemma}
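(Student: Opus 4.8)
The plan is to use the explicit piecewise description \eqref{eq:L} of $\overline{L}_{\Cb}(\cdot | \Iset)$ and to show that on each bracket $[p_i,p_{i+1}]$, $i\in\{0,\dots,N\}$, the revenue curve $p\mapsto\Rev(p | L_{\Cb}(\cdot | \Iset))=p\cdot\overline{L}_{\Cb}(p- | \Iset)$ is monotone. Since $[\lb,\ub]=\bigcup_{i=0}^N[p_i,p_{i+1}]$, this would force the supremum of the revenue curve over $[\lb,\ub]$ to be attained at one of the $N+2$ breakpoints $p_0,\dots,p_{N+1}$; at such a breakpoint the revenue equals $p_i\cdot\overline{L}_{\Cb}(p_i- | \Iset)=p_iq_i$, using the identity $\overline{L}_{\Cb}(p_i- | \Iset)=q_i$ (already recorded in the proof of \Cref{lem:L_belongs}, together with $\mathbb{P}_{v\sim L_{\Cb}}(v\ge\lb)=q_0=1$ at the left endpoint). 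Maximising over $i$ would then give exactly $\max_{0\le i\le N+1}p_iq_i$.

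\textbf{The case $\Cb=\Gb$.} Here I would observe that \eqref{eq:GG} gives $\overline{G}_{\Gb,i}(v)=q_{i+1}$ for $v\in[p_i,p_{i+1})$, so $\overline{L}_{\Gb}(\cdot | \Iset)$ is a nonincreasing step function with $\overline{L}_{\Gb}(p- | \Iset)=q_{i+1}$ for every $p\in(p_i,p_{i+1}]$. Hence $\Rev(p | L_{\Gb}(\cdot | \Iset))=p\,q_{i+1}$ is nondecreasing on $(p_i,p_{i+1}]$ with supremum $p_{i+1}q_{i+1}$ there; combining this over $i\in\{0,\dots,N\}$ with $\Rev(\lb | L_{\Gb}(\cdot | \Iset))=\lb=p_0q_0$ (and $[\lb,\ub]=\{\lb\}\cup\bigcup_{i=0}^N(p_i,p_{i+1}]$) yields $\sup_{p\in[\lb,\ub]}\Rev(p | L_{\Gb}(\cdot | \Iset))=\max_{0\le i\le N+1}p_iq_i$.

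\textbf{The case $\Cb=\Fb$.} Writing $a_i:=\Gainv{q_i}\ge0$ and $m_i:=\frac{\Gainv{q_{i+1}}-\Gainv{q_i}}{p_{i+1}-p_i}\ge0$ (nonnegative since $\Gamma^{-1}$ is increasing and $q_{i+1}\le q_i$, the latter holding because $\Cb(\Iset)\neq\emptyset$; degenerate brackets $p_i=p_{i+1}$ contribute only the value $p_iq_i$), formula \eqref{eq:GF} reads $\overline{G}_{\Fb,i}(v)=\Gad{a_i+m_i(v-p_i)}=\bigl(1+a_i+m_i(v-p_i)\bigr)^{-1}$, a continuous function on $[p_i,p_{i+1}]$ whose denominator is at least $1+a_i=1/q_i>0$; thus $\overline{L}_{\Fb}(p- | \Iset)=\overline{G}_{\Fb,i}(p)$ for $p\in(p_i,p_{i+1}]$ and $\Rev(p | L_{\Fb}(\cdot | \Iset))=\frac{p}{1+a_i+m_i(p-p_i)}$ there, extending continuously to the value $\frac{p_i}{1+a_i}=p_iq_i$ at $p=p_i$. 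The key step is the elementary computation that the numerator of the $p$-derivative of this expression equals $1+a_i+m_i(p-p_i)-m_ip=1+a_i-m_ip_i$, which does not depend on $p$; hence the derivative has constant sign and the revenue curve is monotone on each bracket $[p_i,p_{i+1}]$. (Equivalently, reparametrising by the quantile $q=\overline{G}_{\Fb,i}(v)$ turns the revenue into an affine function of $q$.) The supremum over $[\lb,\ub]$ is therefore attained at a breakpoint $p_i$ and equals $p_iq_i$, giving $\sup_{p\in[\lb,\ub]}\Rev(p | L_{\Fb}(\cdot | \Iset))=\max_{0\le i\le N+1}p_iq_i$.

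\textbf{Where the work is.} I do not expect a genuine obstacle here. The only point that needs care is the left-limit and boundary bookkeeping: checking that on $(p_i,p_{i+1}]$ the quantity $\overline{L}_{\Cb}(p- | \Iset)$ is obtained by evaluating $\overline{G}_{\Cb,i}$ at $p$ (left-continuity of $\overline{G}_{\Fb,i}$, constancy of $\overline{G}_{\Gb,i}$), and that the breakpoint values $\overline{L}_{\Cb}(p_i- | \Iset)=q_i$ are consistent, including $p_0=\lb$ with $q_0=1$ and $p_{N+1}=\ub$ with $q_{N+1}=0$. The genuine content is the constant-derivative identity $1+a_i-m_ip_i$ for the regular class (equivalently, the affine-in-quantile structure of $\overline{G}_{\Fb,i}$), which collapses the maximisation on each piece to its two endpoints.
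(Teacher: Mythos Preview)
Your proposal is correct and follows essentially the same approach as the paper: both argue that the revenue curve of $L_{\Cb}(\cdot\mid\Iset)$ is monotone on each bracket $[p_i,p_{i+1}]$, forcing the supremum to be attained at a breakpoint where it equals $p_iq_i$. The only minor difference is in the regular case, where you compute the price-space derivative directly and observe its numerator $1+a_i-m_ip_i$ is constant, whereas the paper passes to quantile space and invokes \Cref{lem:const} to see that the virtual value of $G_{\Fb,i}$ is constant (your parenthetical ``affine in the quantile'' remark is exactly the paper's argument).
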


We next combine these lemmas to formally prove \Cref{prop:worst_case_revenue}.

\begin{proof}[\textbf{Proof of \Cref{prop:worst_case_revenue}}]
Consider $(\rs,\qs) \in {\cal B}(\Iset)$ and consider the extended information set $\IsetPlus = \Iset \cup \{(\rs,\qs)\}$. We note that $\Cb(\IsetPlus)$ is the subset of $\Cb(\Iset)$ where distributions must also satisfy $\bF(\rs-) = \qs$. We also let $\mathcal{A} = \{ F \in \Cb(\Iset) \text{ s.t. } \rs \in \mathcal{O}(F) \text{ and } \bF(\rs-) = \qs \}.$
We first show that,
\begin{equation}
\label{eq:prop1_geq}
\inf_{F \in \mathcal{A}} \Expect_{\Psi}[  \Rev\left(p\vert F\right)] \geq \Expect_{\Psi}[  \Rev\left(p\vert L_{\Cb}( \cdot \vert  \IsetPlus )  \right)]. 
\end{equation}
Indeed by definition, we have that $\mathcal{A} \subset \Cb(\IsetPlus)$. Therefore, \eqref{eq:prop1_geq} follows from \Cref{lem:lower_revenue} applied to the information set $\IsetPlus$.
Next, we show that 
\begin{equation}
\label{eq:prop1_leq}
\inf_{F \in \mathcal{A}} \Expect_{\Psi}[  \Rev\left(p\vert F\right)] \leq \Expect_{\Psi}[  \Rev\left(p\vert L_{\Cb}( \cdot \vert  \IsetPlus )  \right)]. 
\end{equation}
To do so, we will show that $L_{\Cb}( \cdot \vert  \IsetPlus ) \in \mathcal{A}$ by proving that $L_{\Cb}( \cdot \vert  \IsetPlus ) \in \Cb(\Iset)$ and that $(\rs,\qs)$ are optimal prices and associated conversion rates for $L_{\Cb}( \cdot \vert  \IsetPlus )$.

We first show that $\Cb(\IsetPlus)$ is non-empty. Note that since $(\rs,\qs) \in {\cal B}(\Iset)$ there exists $F \in \Cb(\Iset)$ such that $\bF(\rs-) = \qs$. Such a distribution belongs to $\mathcal{A}$ and therefore belongs to $\Cb(\IsetPlus)$. Therefore, this set is not empty.
It then follows from \Cref{lem:L_belongs} that $L_{\Cb}( \cdot \vert  \IsetPlus ) \in \Cb(\IsetPlus) \subset \Cb(\Iset)$. 

In addition, \Cref{lem:max_L} applied to the information set  $\IsetPlus$ implies that, 
\begin{equation*}
\sup_{p \in [\lb,\ub]} \Rev\left(p \vert L_{\Cb}( \cdot \vert \IsetPlus) \right) = \max\left( \max_{i \in \{0,\ldots,N+1\}} p_i \cdot q_i , \rs \cdot \qs \right).
\end{equation*}
Furthermore, we argue that $\rs \cdot \qs \geq \max_{i \in \{0,\ldots,N+1\}} p_i \cdot q_i$. This inequality holds because $\mathcal{A}$ is non-empty and for any distribution $F \in  \mathcal{A}$, we have for every $i \in \{0,\ldots,N+1\}$ that
$\rs \cdot \qs = \rs \cdot \bF(\rs-)  \stackrel{(a)}{\geq} p_i \cdot  \bF(p_i-) = p_i \cdot q_i,$
where $(a)$ holds because $F \in \mathcal{A}$ and hence $\rs$ is an optimal price for this distribution. This implies that $\sup_{p \in [\lb,\ub]} \Rev\left(p \vert L_{\Cb}( \cdot \vert \IsetPlus) \right) = \rs \cdot \qs$ and given that $\overline{L}_{\Cb}( \rs- \vert \IsetPlus) = \qs$, it concludes that $\rs$ is an optimal price for $L_{\Cb}( \cdot \vert \IsetPlus)$ with associated conversion rate $\qs$.

Therefore, $L_{\Cb}( \cdot \vert  \IsetPlus ) \in \mathcal{A}$ and  \eqref{eq:prop1_leq} holds. Finally, \eqref{eq:prop1_geq} and \eqref{eq:prop1_leq} imply \Cref{prop:worst_case_revenue}.
\end{proof}

We finally prove our auxiliary lemmas.

\begin{proof}[\textbf{Proof of \Cref{lem:lower_revenue}}]
Fix $F \in \Cb(\Iset)$ and consider $v \in [\lb,\ub]$. 
We first show that, 
$\bF(v) \geq \overline{L}_{\Cb}(v \vert  \Iset).$
Consider  $i \in \{0,\ldots,N\}$ such that, $v \in [p_i,p_{i+1})$. 
We remark that $\overline{F}(p_i-) = q_i$ and $\overline{F}_0(p_{i+1}-) = q_{i+1}$ because $F \in \Cb(\Iset)$. Therefore \Cref{lemma:singlecross} implies that,
    \begin{equation*}
        \overline{F}(v) \geq \overline{G}_{\Cb}(v \vert  (p_{i},q_{i}),(p_{i+1},q_{i+1})) = \overline{L}_{\Cb}(v \vert  \Iset), 
   \end{equation*}
 where the last equality holds by definition of $\overline{L}_{\Cb}(v \vert  \Iset)$. Hence, $\bF(v) \geq \overline{L}_{\Cb}(v \vert  \Iset)$.
Furthermore, for any pricing mechanism $\Psi \in \MechSet$, we have that,
\begin{equation*}
\Expect_{\Psi}[  \Rev\left(p\vert F\right)] = \int_0^{\infty} p \cdot \bF(p-) \, d\Psi(p) \stackrel{(a)}{\geq} \int_0^{\infty} p \cdot \overline{L}_{\Cb}(p- \vert \Iset)  \, d\Psi(p) =  \Expect_{\Psi}[  \Rev\left(p \vert L_{\Cb}( \cdot \vert \Iset) \right)],
\end{equation*}
where $(a)$ holds because $\bF(v) \geq \overline{L}_{\Cb}(v \vert  \Iset)$ for every $v \in [\lb,\ub]$ and therefore by taking a left limit $\bF(v-) \geq \overline{L}_{\Cb}(v- \vert  \Iset)$ for every $v \in [\lb,\ub]$. This concludes the proof.
\end{proof}

\begin{proof}[\textbf{Proof of \Cref{lem:L_belongs}}]
First, one can verify that for every $i \in \{0,\ldots, N+1\}$, $\overline{L}_{\Cb}(p_i- \vert \Iset) = q_i.$ Therefore, we need to show that, $L_{\Cb}(\cdot \vert \Iset) \in \Cb.$

When $\Cb = \Gb$, we note that $\overline{L}_{\Gb}(\cdot \vert \Iset)$ is non-decreasing because $(q_i)_{i \leq N+1}$ is non-decreasing. Indeed, $(q_i)_{i \leq N+1}$ is non-decreasing because the set $\Gb(\Iset)$ is non-empty. Hence, $L_{\Gb}(\cdot \vert \Iset) \in \Gb(\Iset)$.

When $\Cb = \Fb$, we show in \Cref{lemma:feasible-set} (Case $\Cb = \Fb:$ 
$\Longrightarrow$)  that if $\Cb(\Iset)$ is non-empty then $ \left(\frac{\Gainv{q_{i+1}}-\Gainv{q_{i}}}{p_{i+1}-p_{i}}\right)_{0 \leq i \leq N-1}$ is non-decreasing. We next show that this implies that 
$\overline{L}_{\Fb}(\cdot \vert \Iset)$ is regular. \Cref{lem:const} implies that the associated virtual value function satisfies
\begin{equation*}
\phi_{L_{\Fb}(\cdot \vert  \Iset)}(v) = \pii - (\pip-\pii)\frac{1+\gqi}{\gqip-\gqi}, \quad \mbox{if } v \in [p_{i},p_{i+1}), \mbox{ for } i = 0 \cdots N
\end{equation*}
Hence, the virtual value function is piece-wise constant. Now we need to show that $\phi_L(v)$ is non-decreasing. Fix $1 \leq i \leq N-1$. we evaluate the difference between the two consecutive constant values that the virtual value function is and note that it is equal to,
\begin{equation*}
 \pars{1+\gqi}\acols{\frac{\pii-\pim}{\gqi-\gqim}-\frac{\pip-\pii}{\gqip-\gqi}} \stackrel{(a)}{\geq} 0, 
\end{equation*}
where inequality $(a)$ holds because the sequence $ \left(\frac{\Gainv{q_{i+1}}-\Gainv{q_{i}}}{p_{i+1}-p_{i}}\right)_{0 \leq i \leq N-1}$ is non-decreasing.

Hence, the virtual value function of $\overline{L}_{\Fb}(\cdot \vert  \Iset)$ is non-decreasing and  $L_{\Fb}(\cdot \vert  \Iset) \in \Fb(\Iset)$.
\end{proof}

\begin{proof}[\textbf{Proof of \Cref{lem:max_L}}]
We first note that \Cref{lem:L_belongs} implies that $L_{\Cb}( \cdot \vert \Iset)$ belongs to $\Cb(\Iset)$. Therefore, for every $i \in \{1,\ldots,N+1\}$, we have $\Rev\left(p_i \vert L_{\Cb}( \cdot \vert \Iset) \right) =  p_i \cdot \overline{L}_{\Cb}(p_i- \vert \Iset)  = p_i \cdot q_i$. Hence,  $\sup_{p \in [\lb,\ub]} \Rev\left(p \vert L_{\Cb}( \cdot \vert \Iset) \right) \geq \max_{i \in \{0,\ldots,N+1\}} p_i \cdot q_i.$

To prove the reverse inequality, it suffices to show that the revenue function $p \mapsto \Rev\left(p \vert L_{\Cb}( \cdot \vert \Iset) \right)$ is monotonic on every interval $[p_i,p_{i+1})$ for $i \in \{0,\ldots,N\}$. By the definition in \eqref{eq:L}, we need to analyze the monotonicity of the function $\kappa: p \mapsto p \cdot \overline{G}_{\Cb,i}(p)$.

When $\Cb = \Gb$, we have that $\overline{G}_{\Cb,i}(p)$ is constant on $[p_i,p_{i+1})$. Hence, $\kappa$ is a linear function and is  monotonic on $[p_i,p_{i+1})$.

When $\Cb = \Fb$, we define for every $q \in  [\overline{G}_{\Cb,i}(p_i),\overline{G}_{\Cb,i}(p_{i+1}))$ the function $\tilde{\kappa}(q) = \overline{G}_{\Cb,i}^{-1}(q) \cdot q$. We note that for every $p \in [p_i,p_{i+1})$ we have $ \kappa(p) = \tilde{\kappa}(\overline{G}_{\Cb,i}(p))$ and,  by composition, if $\tilde{\kappa}$ is monotonic then $\kappa$ is monotonic. 
To show that $\tilde{\kappa}$ is monotonic we note that it corresponds to the revenue function in the quantile space. Its derivative satisfies,  $\frac{d}{dq}\tilde{\kappa}(q) = \phi_{G_{\Cb,i}}(\overline{G}_{\Cb,i}^{-1}(q))$, where $\phi_F$ is the virtual value function of a distribution $F$.  To conclude the proof, we show in \Cref{lem:const} that the virtual value function of $G_{\Cb,i}$ is constant. Therefore, $\tilde{\kappa}$ is linear and monotonic on $[q_i,q_{i+1})$. 
\end{proof}

\subsection{Proof of \Cref{thm:nature_reduction}}

We prove a stronger version of the result stated in terms of a
$\lambda$-regret criterion. The reduction for the worst-case ratio in
\Cref{thm:nature_reduction} will then follow from a simple epigraph
argument.

For any $\lambda \in [0,1]$, define the $\lambda$-regret of a mechanism
$\Psi \in \MechSet$ under a distribution $F$ as
\begin{equation*}
\LRegret[\lambda][\Psi,F]
=
\lambda \cdot \opt(F) - \Rev(\Psi,F).
\end{equation*}
The next result establishes a statement similar to \Cref{thm:nature_reduction} but applies to the $\lambda$-regret for every $\lambda$ above the policy-specific worst-case ratio.

\begin{theorem}[Reduction for $\lambda$-regret]
\label{thm:lambda_reduction}
Assume $\Cb$ is equal to $\Gb$ or $\Fb$. Fix $N \ge 1$ and let $\Iset$
be an information set including $N$ historical prices and conversion
rates. Assume $\Cb(\Iset)$ is non-empty.

Then for any $\Psi \in \MechSet$ and any
$\lambda
\in
\left[
\inf_{F \in \Cb(\Iset)} \Ratio[\Psi,F],
1
\right],$
we have
\begin{equation*}
\sup_{F \in \Cb(\Iset)} \LRegret[\lambda][\Psi,F]
=
\sup_{\rs \in \mathcal{S}(\Iset)}
\LRegret[\lambda][\Psi,F_{\Cb}(\cdot \vert \rs,\Iset)].
\end{equation*}
\end{theorem}
\Cref{thm:lambda_reduction} allows us to establish our reduction of Nature's problem for both the absolute regret criterion, by taking $\lambda =1$, and for the ratio. We next formally show how \Cref{thm:lambda_reduction} implies \Cref{thm:nature_reduction}. The key observation is that the worst-case ratio admits an epigraph representation in terms of the $\lambda$-regret family.

\begin{lemma}[Epigraph formulation]
\label{lem:epi_ratio}
Fix a mechanism $\Psi \in \MechSet$ and an information set $\Iset$
such that $\Cb(\Iset)$ is non-empty. Then
\begin{equation*}
\inf_{F \in \Cb(\Iset)} \Ratio[\Psi,F]
=
\sup\left\{
\lambda \in [0,1] :
\sup_{F \in \Cb(\Iset)} \LRegret[\lambda][\Psi,F] \le 0
\right\}.
\end{equation*}
\end{lemma}

\begin{proof}[\textbf{Proof of \Cref{lem:epi_ratio}}]
For any $F \in \Cb(\Iset)$ we have
\begin{equation*}
\LRegret[\lambda][\Psi,F]
=
\lambda \cdot \opt(F) - \Rev(\Psi,F).
\end{equation*}
Thus
\begin{equation*}
\LRegret[\lambda][\Psi,F] \le 0
\quad \Longleftrightarrow \quad
\frac{\Rev(\Psi,F)}{\opt(F)} \ge \lambda.
\end{equation*}
Taking the supremum over $F$ yields
\begin{equation*}
\sup_{F \in \Cb(\Iset)} \LRegret[\lambda][\Psi,F] \le 0
\quad \Longleftrightarrow \quad
\inf_{F \in \Cb(\Iset)} \Ratio[\Psi,F] \ge \lambda,
\end{equation*}
which proves the claim.
\end{proof}

\begin{proof}[\textbf{Proof of \Cref{thm:nature_reduction}}]
Fix $\Psi \in \MechSet$ and let
$\eta = \inf_{F \in \Cb(\Iset)} \Ratio[\Psi,F].$

Since $F_{\Cb}(\cdot \vert \rs,\Iset) \in \Cb(\Iset)$ for every $\rs \in \mathcal{S}(\Iset)$ by \Cref{lem:largest_element}, we have
\begin{equation*}
\eta
\le
\inf_{\rs \in \mathcal{S}(\Iset)}
\Ratio[\Psi,F_{\Cb}(\cdot \vert \rs,\Iset)].
\end{equation*}
It remains to prove the reverse inequality.

If $\eta=1$, then the reverse inequality follows because $\Ratio[\Psi,F]\le 1$ for every distribution $F$, and hence
\begin{equation*}
\inf_{\rs \in \mathcal{S}(\Iset)}
\Ratio[\Psi,F_{\Cb}(\cdot \vert \rs,\Iset)]
\le 1
=
\eta.
\end{equation*}
Assume now that $\eta<1$. Fix $\lambda \in (\eta,1]$. \Cref{lem:epi_ratio} implies that,
$\sup_{F \in \Cb(\Iset)}
\LRegret[\lambda][\Psi,F]
>
0.$
Because $\lambda \in [\eta,1]$, we can apply \Cref{thm:lambda_reduction} to obtain
\begin{equation*}
\sup_{\rs \in \mathcal{S}(\Iset)}
\LRegret[\lambda][\Psi,F_{\Cb}(\cdot \vert \rs,\Iset)]
>
0.
\end{equation*}
Therefore, there exists $\rs \in \mathcal{S}(\Iset)$ such that
$\LRegret[\lambda][\Psi,F_{\Cb}(\cdot \vert \rs,\Iset)]
>
0.$ This implies
\begin{equation*}
\Ratio[\Psi,F_{\Cb}(\cdot \vert \rs,\Iset)]
<
\lambda.
\end{equation*}
Thus,
$\inf_{\rs \in \mathcal{S}(\Iset)}
\Ratio[\Psi,F_{\Cb}(\cdot \vert \rs,\Iset)]
\le
\lambda.$
Since this holds for every $\lambda \in (\eta,1]$, we conclude that
\begin{equation*}
\inf_{\rs \in \mathcal{S}(\Iset)}
\Ratio[\Psi,F_{\Cb}(\cdot \vert \rs,\Iset)]
\le
\eta.
\end{equation*}
Combining the two inequalities gives
\begin{equation*}
\inf_{F \in \Cb(\Iset)} \Ratio[\Psi,F]
=
\inf_{\rs \in \mathcal{S}(\Iset)}
\Ratio[\Psi,F_{\Cb}(\cdot \vert \rs,\Iset)].
\end{equation*}
This proves \Cref{thm:nature_reduction}.
\end{proof}

In what follows, we prove \Cref{thm:lambda_reduction}.

For any $\rs$, we define
${\cal B}_{\rs}(\Iset)
=
\{q^* \in [0,1] \text{ s.t. } (\rs,q^*) \in {\cal B}(\Iset)\}$
as the set of conversion rates $q^*$ such that there exists a distribution which has $(\rs,q^*)$ as an optimal couple of price and conversion rate.
Our next result establishes a monotonicity property that is needed to optimize over the possible conversion rates ${\cal B}_{\rs}(\Iset)$.

\begin{lemma}
\label{lem:monotonic_rho}
Fix $\rs$ such that ${\cal B}_{\rs}(\Iset)$ is non-empty. Then, for every $p \in [\lb,\ub]$, the mapping
\begin{equation*}
\qs
\mapsto
\frac{1}{\qs}
\Rev\left(p\vert L_{\Cb}(\cdot \vert \Iset \cup (\rs,\qs))\right)
\end{equation*}
is non-increasing on ${\cal B}_{\rs}(\Iset)$.
\end{lemma}
We also show that the largest feasible conversion rate is the one of the upper envelope distribution.
\begin{lemma}
\label{lem:largest_element}
For any $\rs$ such that ${\cal B}_{\rs}(\Iset)$ is non-empty,
\begin{equation*}
F_{\Cb}(\cdot \vert \rs,\Iset) \in \Cb(\Iset),
\qquad
\overline{U}_{\Cb}(\rs-\vert \Iset) \in {\cal B}_{\rs}(\Iset),
\qquad \text{and} \qquad
\sup {\cal B}_{\rs}(\Iset)=\overline{U}_{\Cb}(\rs-\vert \Iset).
\end{equation*}
\end{lemma}

We now formally complete the proof of our main technical result.

\begin{proof}[\textbf{Proof of \Cref{thm:lambda_reduction}.}]
Fix a randomized pricing mechanism $\Psi \in \MechSet$. For compactness, let
$\eta
=
\inf_{F \in \Cb(\Iset)} \Ratio[\Psi,F].$ 
Let $\lambda
\in
\left[
\inf_{F \in \Cb(\Iset)} \Ratio[\Psi,F],
1
\right].$

For every $\rs \in \mathcal{S}(\Iset)$, let
$u_{\rs}
=
\overline{U}_{\Cb}(\rs-\vert \Iset)$ and, for every $\qs \in {\cal B}_{\rs}(\Iset)$, define
\begin{equation*}
A_{\rs}(\qs)
=
\Expect_{p \sim \Psi}
\left[
\Rev\left(p\vert L_{\Cb}(\cdot \vert \Iset \cup (\rs,\qs))\right)
\right],
\end{equation*}
and
$h_{\rs}(\qs)
=
\frac{A_{\rs}(\qs)}{\rs \cdot \qs}.$ Finally, define
\begin{equation*}
\rho_{\rs,\lambda}(\qs)
=
\lambda \cdot \rs \cdot \qs
-
A_{\rs}(\qs)
=
\rs \cdot \qs \cdot
\left(
\lambda-h_{\rs}(\qs)
\right).
\end{equation*}
This is the $\lambda$-regret associated with the candidate worst-case distribution corresponding to the pair $(\rs,\qs)$.

\noindent \textit{Step 1:} We first show that
\begin{equation}
\label{eq:Mlambda_nonnegative}
\sup_{\tilde{p} \in \mathcal{S}(\Iset)}
\rho_{\tilde{p},\lambda}(u_{\tilde{p}})
\ge
0.
\end{equation}
If $\lambda>\eta$, then by definition of $\eta$ there exists $F \in \Cb(\Iset)$ such that
$\Ratio[\Psi,F]<\lambda.$ Fix such $F$ and let  $\tilde{p} \in \mathcal{O}(F)$ and $\tilde q=\overline F(\tilde{p}-)$. Then $(\tilde{p},\tilde q)\in{\cal B}(\Iset)$ and $\opt(F)=\tilde{p} \cdot \tilde q$. 
We have,
\begin{equation*}
h_{\tilde{p}}(\tilde q)
=
\frac{A_{\tilde{p}}(\tilde q)}{\tilde{p} \cdot \tilde q}
\stackrel{(a)}{\leq}
\frac{\Rev(\Psi,F)}{\opt(F)}
=
\Ratio[\Psi,F]
<
\lambda,
\end{equation*}
where $(a)$ follows from \Cref{prop:worst_case_revenue}.

Since $h_{\tilde{p}}$ is non-increasing on ${\cal B}_{\rs}(\Iset)$ (\Cref{lem:monotonic_rho}) and $\tilde q \le u_{\tilde{p}}$ by \Cref{lem:largest_element}, we have
\begin{equation*}
h_{\tilde{p}}(u_{\tilde{p}})
\le
h_{\tilde{p}}(\tilde q)
<
\lambda.
\end{equation*}
 Consequently,
\begin{equation*}
\rho_{\tilde{p},\lambda}(u_{\tilde{p}})
=
\tilde{p} \cdot u_{\tilde{p}} \cdot
\left(
\lambda-h_{\tilde{p}}(u_{\tilde{p}})
\right)
>
0,
\end{equation*}
which proves \eqref{eq:Mlambda_nonnegative} in the case $\lambda>\eta$.

If $\lambda=\eta$, take a sequence $(F_n)_{n \ge 1}$ in $\Cb(\Iset)$ such that $\Ratio[\Psi,F_n] \to \eta.$
For every $n$, let $\tilde{p}_n \in \mathcal{O}(F_n)$ and $\tilde q_n=\overline F_n(\tilde{p}_n-)$. The argument above gives
\begin{equation*}
h_{\tilde{p}_n}(u_{\tilde{p}_n})
\le
h_{\tilde{p}_n}(\tilde q_n)
\le
\Ratio[\Psi,F_n].
\end{equation*}
Moreover, by \Cref{lem:largest_element}, $F_{\Cb}(\cdot \vert \tilde{p}_n,\Iset)\in\Cb(\Iset)$, and hence
\begin{equation*}
h_{\tilde{p}_n}(u_{\tilde{p}_n})
=
\Ratio[\Psi,F_{\Cb}(\cdot \vert \tilde{p}_n,\Iset)]
\ge
\eta.
\end{equation*}
Consequently, we have that,
$h_{\tilde{p}_n}(u_{\tilde{p}_n}) \to \eta.$
Moreover, given that, $0\le \tilde{p}_n \cdot u_{\tilde{p}_n}\le \ub$, we conclude that,
\begin{equation*}
\rho_{\tilde{p}_n,\eta}(u_{\tilde{p}_n})
=
\tilde{p}_n \cdot u_{\tilde{p}_n} \cdot 
\left(
\eta-h_{\tilde{p}_n}(u_{\tilde{p}_n})
\right)
\to 0,
\end{equation*}
which proves \eqref{eq:Mlambda_nonnegative} in the case $\lambda=\eta$ as well.

\noindent \textit{Step 2:} We next show that for every $\rs \in \mathcal{S}(\Iset)$ and every $\qs \in {\cal B}_{\rs}(\Iset)$,
\begin{equation}
\label{eq:rho_endpoint_bound}
\rho_{\rs,\lambda}(\qs)
\le
\sup_{\tilde{p} \in \mathcal{S}(\Iset)} \rho_{\tilde{p},\lambda}(u_{\tilde{p}}).
\end{equation}
Fix $\rs \in \mathcal{S}(\Iset)$ and $\qs \in {\cal B}_{\rs}(\Iset)$. Since $u_{\rs}=\sup{\cal B}_{\rs}(\Iset)$ by \Cref{lem:largest_element}, we have $\qs \le u_{\rs}$.

If $h_{\rs}(u_{\rs}) \le \lambda$, then the monotonicity of $h_{\rs}$ (\Cref{lem:monotonic_rho}) gives
$h_{\rs}(\qs) \ge h_{\rs}(u_{\rs}).$
Thus,
\begin{align*}
\rho_{\rs,\lambda}(\qs)
=
\rs \cdot \qs \cdot
\left(
\lambda-h_{\rs}(\qs)
\right) 
&\le
\rs \cdot \qs \cdot
\left(
\lambda-h_{\rs}(u_{\rs})
\right) \\
&\le
\rs \cdot u_{\rs} \cdot
\left(
\lambda-h_{\rs}(u_{\rs})
\right)
=
\rho_{\rs,\lambda}(u_{\rs})
\le
\sup_{\tilde{p} \in \mathcal{S}(\Iset)} \rho_{\tilde{p},\lambda}(u_{\tilde{p}}).
\end{align*}
If $h_{\rs}(u_{\rs})>\lambda$, then the monotonicity of $h_{\rs}$ implies that
$h_{\rs}(\qs) \ge h_{\rs}(u_{\rs})>\lambda.$
Therefore,
\begin{equation*}
\rho_{\rs,\lambda}(\qs)<0 \stackrel{(a)}{\le} \sup_{\tilde{p} \in \mathcal{S}(\Iset)} \rho_{\tilde{p},\lambda}(u_{\tilde{p}}),
\end{equation*}
where $(a)$ follows from \eqref{eq:Mlambda_nonnegative}.
This proves \eqref{eq:rho_endpoint_bound}.

\noindent \textit{Step 3:} We now complete the reduction. By \Cref{prop:worst_case_revenue},
\begin{equation*}
\begin{aligned}
\sup_{F \in \Cb(\Iset)} \LRegret[\lambda][\Psi,F]
&=
\sup_{(\rs,\qs) \in {\cal B}(\Iset)}
\left\{
\lambda \cdot \rs \cdot \qs
-
\Expect_{p \sim \Psi}
\left[
\Rev\left(p\vert L_{\Cb}(\cdot \vert \Iset \cup (\rs,\qs))\right)
\right]
\right\} \\
&=
\sup_{\rs \in \mathcal{S}(\Iset)}
\sup_{\qs \in {\cal B}_{\rs}(\Iset)}
\rho_{\rs,\lambda}(\qs).
\end{aligned}
\end{equation*}
The bound \eqref{eq:rho_endpoint_bound} shows that
\begin{equation*}
\sup_{\rs \in \mathcal{S}(\Iset)}
\sup_{\qs \in {\cal B}_{\rs}(\Iset)}
\rho_{\rs,\lambda}(\qs)
\le
\sup_{\rs \in \mathcal{S}(\Iset)}
\rho_{\rs,\lambda}(u_{\rs}).
\end{equation*}
The reverse inequality follows from \Cref{lem:largest_element}, which implies that $u_{\rs}\in{\cal B}_{\rs}(\Iset)$ for every $\rs \in \mathcal{S}(\Iset)$. Hence,
\begin{equation*}
\sup_{\rs \in \mathcal{S}(\Iset)}
\sup_{\qs \in {\cal B}_{\rs}(\Iset)}
\rho_{\rs,\lambda}(\qs)
=
\sup_{\rs \in \mathcal{S}(\Iset)}
\rho_{\rs,\lambda}(u_{\rs}).
\end{equation*}
Finally, by the definition of $F_{\Cb}(\cdot \vert \rs,\Iset)$ and $u_{\rs}=\overline{U}_{\Cb}(\rs-\vert \Iset)$, we have that
\begin{equation*}
\rho_{\rs,\lambda}(u_{\rs})
=
\LRegret[\lambda][\Psi,F_{\Cb}(\cdot \vert \rs,\Iset)].
\end{equation*}
Hence, we have established that,
\begin{equation*}
\sup_{F \in \Cb(\Iset)} \LRegret[\lambda][\Psi,F]
=
\sup_{\rs \in \mathcal{S}(\Iset)}
\LRegret[\lambda][\Psi,F_{\Cb}(\cdot \vert \rs,\Iset)],
\end{equation*}
which proves the result.
\end{proof}

\begin{proof}[\textbf{Proof of \Cref{lem:monotonic_rho}}]
Fix $p \in [\lb,\ub]$ and define
\begin{equation*}
\tilde{\rho}(\qs)
=
\frac{1}{\qs}
\Rev\left(p\vert L_{\Cb}(\cdot \vert \Iset \cup (\rs,\qs))\right).
\end{equation*}
We show that $\tilde{\rho}$ is non-increasing in $\qs$.

When $\Cb=\Gb$, the lower envelope $L_{\Gb}(\cdot \vert \Iset \cup (\rs,\qs))$ is piecewise constant. For fixed $p$, either the relevant value of the lower envelope is $\qs$, in which case $\tilde{\rho}(\qs)$ is constant, or the relevant value is an element of the original information set and does not depend on $\qs$, in which case $\tilde{\rho}(\qs)$ is proportional to $1/\qs$. In both cases, $\tilde{\rho}$ is non-increasing.

When $\Cb=\Fb$, let $\IsetPlus = \Iset \cup (\rs,\qs)$ be the extended information set and reindex the elements such that
$\IsetPlus= \{(\hat{p}_i,\hat{q}_i); \; i = 0,\ldots,N+2\}.$
Fix $p$ and let $i$ be such that $p \in [\hat{p}_{i},\hat{p}_{i+1})$. By definition of $L_{\Cb}$,
\begin{equation*}
\Rev\left(p\vert L_{\Cb}(\cdot \vert \IsetPlus) \right)
=
p \cdot
\overline{G}_{\Cb}
\left(
p - \vert
(\hat{p}_{i},\hat{q}_{i}),
(\hat{p}_{i+1},\hat{q}_{i+1})
\right).
\end{equation*}
Hence, $\Rev\left(p\vert L_{\Cb}(\cdot \vert \IsetPlus) \right)$ only varies as a function of $\qs$ if $\rs \in \{\hat{p}_{i},\hat{p}_{i+1}\}$. In all other cases, the numerator in $\tilde{\rho}$ is a nonnegative constant that does not depend on $\qs$, and therefore $\tilde{\rho}$ is non-increasing. We next show that when either $\hat{p}_{i}=\rs$ or $\hat{p}_{i+1}=\rs$, $\tilde{\rho}$ is also non-increasing.

\noindent \textbf{Case 1: $\hat{p}_{i}=\rs$.}
In this case, using the alternative form of $\overline{G}_{\Cb}$ derived in the proof of \Cref{lemma:singlecross}, we obtain
\begin{equation*}
\tilde{\rho}(\qs)
=
\frac{1}{\qs}
\cdot
p \cdot \qs
\cdot
\Gad{
\Gainv{\frac{\hat{q}_{i+1}}{\qs}}
\frac{p-\rs}{\hat{p}_{i+1}-\rs}
} 
=
p \cdot
\Gad{
\Gainv{\frac{\hat{q}_{i+1}}{\qs}}
\frac{p-\rs}{\hat{p}_{i+1}-\rs}
}.
\end{equation*}
By composition, $\tilde{\rho}$ is non-increasing in $\qs$.

\noindent \textbf{Case 2: $\hat{p}_{i+1}=\rs$.}
Using again one of the alternative forms of $\overline{G}_{\Cb}$ derived in the proof of \Cref{lemma:singlecross}, we obtain
\begin{equation*}
\tilde{\rho}(\qs)
=
\frac{1}{\qs}
\cdot
\frac{p}{
\frac{1}{\hat{q}_i}
+
\left(
\frac{1}{\qs}
-
\frac{1}{\hat{q}_i}
\right)
\frac{p-\hat{p}_i}{\rs-\hat{p}_i}
} 
=
\frac{p}{
\frac{\qs}{\hat{q}_i}
\cdot
\left(
1-\frac{p-\hat{p}_i}{\rs-\hat{p}_i}
\right)
+
\frac{p-\hat{p}_i}{\rs-\hat{p}_i}
}.
\end{equation*}
The denominator is non-decreasing in $\qs$, and therefore $\tilde{\rho}$ is non-increasing in $\qs$.
\end{proof}

\begin{proof}[\textbf{Proof of \Cref{lem:largest_element}}]
\textit{Step 1:} We first show that
\begin{equation*}
\sup {\cal B}_{\rs}(\Iset)
\le
\overline{U}_{\Cb}(\rs-\vert \Iset).
\end{equation*}
Let $q \in {\cal B}_{\rs}(\Iset)$ and let $F \in \Cb(\Iset)$ be such that $\rs$ is an optimal price for $F$ with associated conversion rate $q$. By the definition of the upper envelope and \Cref{lemma:singlecross},
\begin{equation*}
q
=
\overline{F}(\rs-)
\le
\overline{U}_{\Cb}(\rs-\vert \Iset).
\end{equation*}
Taking the supremum over $q \in {\cal B}_{\rs}(\Iset)$ proves the desired upper bound.

\textit{Step 2:} Next, we show that
\begin{equation*}
\overline{U}_{\Cb}(\rs-\vert \Iset)
\in
{\cal B}_{\rs}(\Iset).
\end{equation*}
Let $u_{\rs}
=
\overline{U}_{\Cb}(\rs-\vert \Iset)$
and consider the extended information set
$\IsetPlus
=
\Iset \cup (\rs,u_{\rs}).$

By construction of the upper envelope, adding the point $(\rs,u_{\rs})$ preserves feasibility of the information set. Indeed, when $\Cb=\Gb$, this follows from monotonicity of the ccdf, and when $\Cb=\Fb$, this follows from the definition of $\overline{U}_{\Fb}$ as the largest value that preserves the monotonicity of the discrete slopes in the $\Gamma^{-1}$ scale. Therefore, $\Cb(\IsetPlus)$ is non-empty.

By \Cref{lem:L_belongs}, this implies that
$ L_{\Cb}(\cdot \vert \IsetPlus) \in \Cb(\IsetPlus) \subset \Cb(\Iset).$
By definition of $F_{\Cb}(\cdot \vert \rs,\Iset)$, we have
$F_{\Cb}(\cdot \vert \rs,\Iset)
=
L_{\Cb}(\cdot \vert \IsetPlus),$
and therefore
\begin{equation*}
F_{\Cb}(\cdot \vert \rs,\Iset) \in \Cb(\Iset)
\qquad \text{and} \qquad
\overline{F}_{\Cb}(\rs-\vert \rs,\Iset)=u_{\rs}.
\end{equation*}

It remains to show that $\rs$ is an optimal price for $F_{\Cb}(\cdot \vert \rs,\Iset)$. By \Cref{lem:max_L}, it suffices to prove that
$\rs \cdot u_{\rs}
\ge
\max_i p_i \cdot q_i.$
Since ${\cal B}_{\rs}(\Iset)$ is non-empty, take $q \in {\cal B}_{\rs}(\Iset)$. Then there exists $F \in \Cb(\Iset)$ such that $\overline F(\rs-)=q$ and $\rs$ is an optimal price for $F$. Hence, for every $i \in \{0,\ldots,N+1\}$,
\begin{equation*}
\rs \cdot u_{\rs}
\ge
\rs \cdot q
=
\rs \cdot \overline F(\rs-)
\ge
p_i \cdot \overline F(p_i-)
=
p_i \cdot q_i,
\end{equation*}
where the first inequality follows from Step 1. Therefore, $\rs$ is an optimal price for $F_{\Cb}(\cdot \vert \rs,\Iset)$ with associated conversion rate $u_{\rs}$. Thus,
\begin{equation*}
u_{\rs}
=
\overline{U}_{\Cb}(\rs-\vert \Iset)
\in
{\cal B}_{\rs}(\Iset).
\end{equation*}
Together with Step 1, this proves
\begin{equation*}
\sup {\cal B}_{\rs}(\Iset)
=
\overline{U}_{\Cb}(\rs-\vert \Iset),
\end{equation*}
and concludes the proof.
\end{proof}

\subsection{Proof of \Cref{prop:r_certificate} and Additional Properties on $\mathcal{S}(\Iset)$} \label{sec:apx_S}

\begin{proof}[\textbf{Proof of \Cref{prop:r_certificate}}]
First, if $\rs \in \mathcal{S}(\Iset)$, then by definition ${\cal B}_{\rs}(\Iset)$ is non-empty. Hence, \Cref{lem:largest_element} implies that $\overline{U}_{\Cb}(\rs-\vert \Iset) \in {\cal B}_{\rs}(\Iset)$. By definition of ${\cal B}_{\rs}(\Iset)$, this implies that $i)$ and $ii)$ are satisfied.

Reciprocally, if $\Cb(\Iset \cup (\rs,\overline{U}_{\Cb}(\rs-\vert \Iset)))$ is non-empty, then \Cref{lem:L_belongs} implies that the distribution $\overline{L}_{\Cb}\left(v \vert \Iset \cup (\rs,\overline{U}_{\Cb}(\rs-\vert \Iset))\right) \in \Cb(\Iset \cup (\rs,\overline{U}_{\Cb}(\rs-\vert \Iset))) \subset \Cb(\Iset)$. Furthermore, \Cref{lem:max_L}, together with the condition that $\rs \cdot \overline{U}_{\Cb}(\rs-\vert \Iset) \geq \max_{i \in \{0,\ldots,N+1\}} p_i \cdot q_i$, implies that $\rs$ is an optimal price for $\overline{L}_{\Cb}\left(v \vert \Iset \cup (\rs,\overline{U}_{\Cb}(\rs-\vert \Iset))\right)$ with associated conversion rate $\overline{U}_{\Cb}(\rs-\vert \Iset)$. This proves that $\overline{U}_{\Cb}(\rs-\vert \Iset) \in {\cal B}_{\rs}(\Iset)$. Hence, ${\cal B}_{\rs}(\Iset)$ is non-empty, and therefore $\rs \in \mathcal{S}(\Iset)$.
\end{proof}

In what follows, we derive explicit expressions for the set $\mathcal{S}(\Iset)$ in the two classes $\Cb=\Gb$ and $\Cb=\Fb$.

For the regular case, it is convenient to introduce the inverse revenue map associated with one of the
building-block distributions $\overline G_{\Fb}$. For any $v>0$ and any two points $(s,q)$ and $(s',q')$
such that $0 \le s < s'$ and $1 \ge q \ge q' > 0$, define
\begin{equation}
\label{eq:R_inv}
\mathcal{R}^{-1}\!\left(v \vert (s,q),(s',q')\right)
=
\frac{1+\Gamma^{-1}(q)- \frac{\Gamma^{-1}(q')-\Gamma^{-1}(q)}{s'-s} s}{\frac{1}{v}-\frac{\Gamma^{-1}(q')-\Gamma^{-1}(q)}{s'-s}}.
\end{equation}
A direct calculation shows that $\mathcal{R}^{-1}\!\left(v \vert (s,q),(s',q')\right)$ is the unique solution
$r$ to
\begin{equation*}
r \cdot \overline G_{\Fb}\!\left(r \vert (s,q),(s',q')\right)=v.
\end{equation*}

\begin{proposition}[Structure of $\mathcal{S}(\Iset)$ for $\Cb=\Gb$]
\label{prop:S_general}
Assume that $\Gb(\Iset)$ is non-empty and that $\max_{i \in \{0,\ldots,N+1\}} p_i \cdot q_i > 0$.
Then
\begin{equation*}
\mathcal{S}(\Iset)
=
\bigcup_{i=0}^{N}
\left[
\max\left\{
p_i,\,
\frac{\max_{j \in \{0,\ldots,N+1\}} p_j q_j}{q_i}
\right\},
\,
p_{i+1}
\right),
\end{equation*}
where empty intervals are omitted.
\end{proposition}

\begin{proof}[\textbf{Proof of \Cref{prop:S_general}}]
Fix $i \in \{0,\ldots,N\}$ and $\rs \in [p_i,p_{i+1})$. By definition of $\overline U_{\Gb}$,
we have that, $\overline U_{\Gb}(\rs-\vert \Iset)=q_i.$

Furthermore, \Cref{lemma:feasible-set} implies that $(q_i)_{0 \le i \le N+1}$ is non-increasing.
Therefore, adding the point $(\rs,q_i)$ preserves feasibility in the general class, and hence
\begin{equation*}
\Gb\bigl(\Iset \cup (\rs,\overline U_{\Gb}(\rs-\vert \Iset))\bigr)
=
\Gb\bigl(\Iset \cup (\rs,q_i)\bigr)
\neq \emptyset.
\end{equation*}
Applying \Cref{prop:r_certificate}, we conclude that $\rs \in \mathcal{S}(\Iset)$ if and only if
$\rs \cdot q_i \ge \max_{j \in \{0,\ldots,N+1\}} p_j \cdot q_j,$ or equivalently, if and only if
\begin{equation*}
\rs \ge
\frac{\max_{j \in \{0,\ldots,N+1\}} p_j \cdot q_j}{q_i}.
\end{equation*}
Intersecting this condition with the interval $[p_i,p_{i+1})$ yields exactly
\begin{equation*}
\mathcal{S}(\Iset)\cap [p_i,p_{i+1})
=
\left[
\max\left\{
p_i,\,
\frac{\max_{j \in \{0,\ldots,N+1\}} p_j q_j}{q_i}
\right\},
\,
p_{i+1}
\right),
\end{equation*}
with the understanding that the interval is empty when the left endpoint exceeds $p_{i+1}$.
Taking the union over $i \in \{0,\ldots,N\}$ proves the result.
\end{proof}

\begin{proposition}[Structure of $\mathcal{S}(\Iset)$ for $\Cb=\Fb$]
\label{prop:S_regular}
Assume that $\Fb(\Iset)$ is non-empty and that $\max_{j \in \{0,\ldots,N+1\}} p_j \cdot q_j > 0$.
Define
\begin{equation*}
i_\ell = \min \argmax_{j \in \{0,\ldots,N+1\}} p_j \cdot q_j,
\qquad
i_r = \max \argmax_{j \in \{0,\ldots,N+1\}} p_j \cdot q_j.
\end{equation*}
Then the set $\mathcal{S}(\Iset)$ is a single interval
$\mathcal{S}(\Iset)= \left[\underline r_{\Fb}(\Iset), \overline r_{\Fb}(\Iset)\right] \cap [\lb,\ub),$
where
\begin{equation*}
\underline r_{\Fb}(\Iset)
=
\begin{cases}
\lb,
& \text{if } i_\ell = 0,\\[4pt]
p_1 \cdot q_1,
& \text{if } i_\ell = 1,\\[4pt]
\mathcal{R}^{-1}\!\left(
p_{i_\ell} \cdot q_{i_\ell}
\vert
(p_{i_\ell -2},q_{i_\ell -2}),
(p_{i_\ell -1},q_{i_\ell -1})
\right),
& \text{if } i_\ell \ge 2,
\end{cases}
\end{equation*}
and
\begin{equation*}
\overline r_{\Fb}(\Iset)
=
\begin{cases}
\mathcal{R}^{-1}\!\left(
p_{i_r} \cdot q_{i_r}
\vert
(p_{i_r +1},q_{i_r +1}),
(p_{i_r +2},q_{i_r +2})
\right),
& \text{if } i_r \le N-2,\\[4pt]
p_N,
& \text{if } i_r = N-1,\\[4pt]
\ub,
& \text{if } i_r = N.
\end{cases}
\end{equation*}
\end{proposition}

\begin{proof}[\textbf{Proof of \Cref{prop:S_regular}}]
Let
$\rho = \max_{i \in \{0,\ldots,N+1\}} p_i \cdot q_i.$
We first note that, for every $\rs \in [\lb,\ub)$, the set
$\Fb(\Iset \cup (\rs,\overline U_{\Fb}(\rs-\vert \Iset)))$ is non-empty. Indeed, in the
$\Gamma^{-1}$ scale, the quantity $\overline U_{\Fb}(\rs-\vert \Iset)$ is constructed precisely so
as to preserve the monotonicity of the discrete slopes in \Cref{lemma:feasible-set}. Therefore,
\Cref{prop:r_certificate} implies that for every $\rs \in [\lb,\ub)$
\begin{equation}
\label{eq:S_regular_reduction_new}
\rs \in \mathcal{S}(\Iset)
\qquad \Longleftrightarrow \qquad
\rs \cdot \overline U_{\Fb}(\rs-\vert \Iset) \ge \rho.
\end{equation}

Next, since $\Fb(\Iset)$ is non-empty, there exists $F \in \Fb(\Iset)$. The revenue curve
$p \mapsto p \cdot \overline F(p-)$ is unimodal for every regular distribution. Hence, the sequence
$(p_i \cdot q_i)_{i=0}^{N+1}$ is unimodal as it corresponds to the sequence $( \Rev(p_i \vert  F) )_{i=0}^{N+1} $ for some $F \in \Fb(\Iset)$. In particular, the maximizing indices form a contiguous block, namely
\begin{equation*}
\argmax_{j \in \{0,\ldots,N+1\}} p_j \cdot q_j = \{i_\ell,\ldots,i_r\}.
\end{equation*}

For every $k \in \{0,\ldots,N-1\}$, define
\begin{equation*}
R_k(r)
=
r \cdot \overline G_{\Fb}\!\left(r \vert (p_k,q_k),(p_{k+1},q_{k+1})\right).
\end{equation*}
By \Cref{lem:const}, the function $R_k$ is monotone. Moreover,
\begin{equation*}
R_k(p_k)=p_k \cdot q_k,
\qquad
R_k(p_{k+1})=p_{k+1} \cdot q_{k+1},
\end{equation*}
so $R_k$ is non-decreasing when $p_k \cdot q_k \le p_{k+1} \cdot q_{k+1}$ and non-increasing otherwise.

We next make explicit the form of the quantity $r \cdot \overline U_{\Fb}(r-\vert \Iset)$ on each interval.
By definition of $\overline U_{\Fb}$,
\begin{equation}
\label{eq:explicit_upper_revenue_regular}
r \cdot \overline U_{\Fb}(r-\vert \Iset)
=
\begin{cases}
\min\{r, R_1(r)\},
& \text{if } r \in [p_0,p_1),\\[4pt]
\min\{R_{i-1}(r),R_{i+1}(r)\},
& \text{if } r \in [p_i,p_{i+1}), \text{ for } i \in \{1,\ldots,N-2\},\\[4pt]
R_{N-2}(r),
& \text{if } r \in [p_{N-1},p_N),\\[4pt]
R_{N-1}(r),
& \text{if } r \in [p_N,p_{N+1}).
\end{cases}
\end{equation}
We will refer to the functions appearing on the right-hand side of \eqref{eq:explicit_upper_revenue_regular}
as the candidate revenue curves on the corresponding interval.

We now identify the set of prices satisfying \eqref{eq:S_regular_reduction_new}.

\smallskip
\noindent \textit{Step 1:} We first show that $\mathcal{S}(\Iset) \subset [p_{i_\ell-1},p_{i_r+1}).$

Consider an interval $[p_i,p_{i+1})$ with $i \le i_\ell -2$.
If $i=0$, then the candidate revenue curves on $[p_0,p_1)$ are $r$ and $R_1(r)$. Since $i_\ell \ge 2$ in that case, we have
$p_1 \cdot q_1 \le p_2 \cdot q_2$, and therefore $R_1$ is non-decreasing.
If $i \in \{1,\ldots,i_\ell-2\}$, then the candidate revenue curves on $[p_i,p_{i+1})$ are
$R_{i-1}$ and $R_{i+1}$. Since the sequence $(p_j \cdot q_j)_j$ is non-decreasing up to index $i_\ell$,
both of these functions are non-decreasing.
Hence, in all cases, the mapping
$r \mapsto r \cdot \overline U_{\Fb}(r-\vert \Iset)$ is non-decreasing on $[p_i,p_{i+1})$.
Moreover, by \eqref{eq:explicit_upper_revenue_regular},
\begin{equation*}
r \cdot \overline U_{\Fb}(r-\vert \Iset)
\le p_{i+1} \cdot q_{i+1}
< \rho
\qquad \text{for all } r \in [p_i,p_{i+1}).
\end{equation*}

Similarly, if $i \ge i_r +1$, then every candidate revenue curve entering
\eqref{eq:explicit_upper_revenue_regular} on $[p_i,p_{i+1})$ is non-increasing, and thus
$r \mapsto r \cdot \overline U_{\Fb}(r-\vert \Iset)$ is non-increasing on $[p_i,p_{i+1})$.
Again by \eqref{eq:explicit_upper_revenue_regular},
\begin{equation*}
r \cdot \overline U_{\Fb}(r-\vert \Iset)
\le p_i \cdot q_i
< \rho
\qquad \text{for all } r \in [p_i,p_{i+1}).
\end{equation*}
Therefore, \eqref{eq:S_regular_reduction_new} implies that $\mathcal{S}(\Iset) \subset [p_{i_\ell-1},p_{i_r+1})$ 
with the convention that $p_{i_\ell-1} = \lb$ when $i_\ell=0$.

\smallskip
\noindent \textit{Step 2:} We next show that $[p_{i_\ell},p_{i_r}] \subseteq \mathcal{S}(\Iset).$

If $i_\ell = i_r$, we note that $p_{i_\ell} \in \mathcal{S}(\Iset)$ by
\Cref{prop:r_certificate}, since
\begin{equation*}
\Fb(\Iset \cup (p_{i_\ell},q_{i_\ell})) = \Fb(\Iset)
\qquad \text{and} \qquad
p_{i_\ell} \cdot q_{i_\ell} = \rho.
\end{equation*}
This implies that $[p_{i_\ell},p_{i_r}] \subseteq \mathcal{S}(\Iset)$ in that case.

Next, assume that $i_\ell < i_r$ and, fix $i \in \{i_\ell,\ldots,i_r-1\}$. By \eqref{eq:S_regular_reduction_new} it is sufficient to prove that 
\begin{equation}
\label{eq:condition_over_i}
r \cdot \overline U_{\Fb}(r-\vert \Iset) \ge \rho
\qquad \text{for all } r \in [p_i,p_{i+1}).
\end{equation}

If $i=0$, then the candidate revenue curves on $[p_0,p_1)$ are $r$ and $R_1(r)$.
Since $p_0 \cdot q_0 = p_1 \cdot q_1 = \rho$, the function $r \mapsto r$ is non-decreasing and equals $\rho$ at $p_0$, while
$R_1$ is non-increasing and equals $\rho$ at $p_1$. Hence both $r \geq \rho$ and $R_1(r) \geq \rho$ for every $r \in [p_0,p_{1})$ and the definition \eqref{eq:explicit_upper_revenue_regular} implies that \eqref{eq:condition_over_i} holds in that case.

If $i \in \{1,\ldots,i_r-1\}$, then the candidate revenue curves on $[p_i,p_{i+1})$ are
$R_{i-1}$ and $R_{i+1}$.
Because $i \ge i_\ell$, the function $R_{i-1}$ is non-decreasing and satisfies
$R_{i-1}(p_i)=p_i \cdot q_i=\rho$.
Because $i+1 \le i_r$, the function $R_{i+1}$ is non-increasing and satisfies
$R_{i+1}(p_{i+1})=p_{i+1} \cdot q_{i+1}=\rho$.
Hence, in all cases, both candidate revenue curves are at least $\rho$ throughout the interval, and  \eqref{eq:explicit_upper_revenue_regular} implies that \eqref{eq:condition_over_i} holds in that case.

Hence, we conclude from \eqref{eq:S_regular_reduction_new} that
\begin{equation*}
[p_{i_\ell},p_{i_r}] \subseteq \mathcal{S}(\Iset).
\end{equation*}

Next, define $\underline r_{\Fb}(\Iset) = \inf \mathcal{S}(\Iset)$ and $\overline r_{\Fb}(\Iset) \sup \mathcal{S}(\Iset)$. We note that these quantities are well-defined and finite because $\mathcal{S}(\Iset)$ is non-empty and bounded. 

Combining Steps 1 and 2, we have established that
\begin{equation*}
[p_{i_\ell},p_{i_r}] \subseteq \mathcal{S}(\Iset) \subseteq [p_{i_\ell-1},p_{i_r+1}).
\end{equation*}
Consequently,
\begin{equation}
\label{eq:range_boundary}
\underline r_{\Fb}(\Iset) \in [p_{i_\ell-1},p_{i_\ell}]
\qquad \mbox{and} \qquad 
\overline r_{\Fb}(\Iset) \in [p_{i_r},p_{i_r+1}).
\end{equation}
In what follows, it remains to characterize these two boundary values and prove that $\mathcal{S}(\Iset)$ is an interval.

\smallskip
\noindent \textit{Step 3:} We now characterize $\underline r_{\Fb}(\Iset)$.

\underline{Case (a):} If $i_\ell = 0$, then Step 1 already implies that $\underline r_{\Fb}(\Iset)=p_0$.

\underline{Case (b):} Assume $i_\ell = 1$. On $[p_0,p_1)$, we have
\begin{equation*}
r \cdot \overline U_{\Fb}(r-\vert \Iset) = \min\{r,R_1(r)\}.
\end{equation*}
Since $p_1 \cdot q_1=\rho \ge p_2 \cdot q_2$, the function $R_1$ is non-increasing, and therefore
\begin{equation*}
R_1(r) \ge R_1(p_1)=\rho
\qquad \text{for all } r \in [p_0,p_1).
\end{equation*}
Hence, on this interval,
\begin{equation*}
r \cdot \overline U_{\Fb}(r-\vert \Iset) \ge \rho
\qquad \Longleftrightarrow \qquad
r \ge \rho.
\end{equation*}
This implies that $\underline r_{\Fb}(\Iset) = \rho = p_1 \cdot q_1$ and that,
\begin{equation}
\label{eq:interval_S_left_1}
\mathcal{S}(\Iset) \cap [p_{0},p_{1})
=
[\underline r_{\Fb}(\Iset),p_{1}).
\end{equation}

\underline{Case (c):} Assume now that $i_\ell \ge 2$. On the interval $[p_{i_\ell-1},p_{i_\ell})$, we have
\begin{equation*}
r \cdot \overline U_{\Fb}(r-\vert \Iset)
=
\min\{R_{i_\ell-2}(r),R_{i_\ell}(r)\}.
\end{equation*}
Because $i_\ell$ is the first maximizing index, the function $R_{i_\ell-2}$ is non-decreasing, while
$R_{i_\ell}$ is non-increasing. Moreover, $R_{i_\ell}(p_{i_\ell}) = p_{i_\ell} \cdot q_{i_\ell} = \rho$,
hence for ever $r \in [p_{i_\ell-1},p_{i_\ell})$, we have that $R_{i_\ell}(r)\ge \rho$.
Therefore,
\begin{equation*}
r \cdot \overline U_{\Fb}(r-\vert \Iset) \ge \rho
\qquad \Longleftrightarrow \qquad
R_{i_\ell-2}(r)\ge \rho.
\end{equation*}
Next, remark that $R_{i_\ell-2}$ is continuous and monotonic on $[p_{i_\ell-1},p_{i_\ell}]$ and it satisfies, that
$R_{i_\ell-2}(p_{i_\ell-1}) = p_{i_\ell-1} \cdot q_{i_\ell-1} < \rho$
and $R_{i_\ell-2}(p_{i_\ell}) \ge \rho.$
Consequently, the intermediate value theorem implies that there exists a unique threshold
$\underline r_{\Fb}(\Iset) \in [p_{i_\ell-1},p_{i_\ell}]$ such that
\begin{equation*}
R_{i_\ell-2}(r)\ge \rho
\qquad \Longleftrightarrow \qquad
r \ge \underline r_{\Fb}(\Iset),
\end{equation*}
and, by definition of $\mathcal{R}^{-1}$ (see \eqref{eq:R_inv}), such a threshold satisfies
\begin{equation*}
\underline r_{\Fb}(\Iset)
=
\mathcal{R}^{-1}\!\left(
\rho
\vert
(p_{i_\ell -2},q_{i_\ell -2}),
(p_{i_\ell -1},q_{i_\ell -1})
\right).
\end{equation*}
In particular, we have also established that
\begin{equation}
\label{eq:interval_S_left_2}
\mathcal{S}(\Iset) \cap [p_{i_\ell-1},p_{i_\ell})
=
[\underline r_{\Fb}(\Iset),p_{i_\ell}).
\end{equation}

\smallskip
\noindent \textit{Step 4:}  We now characterize $\overline r_{\Fb}(\Iset)$.

\underline{Case (a):} If $i_r = N$, then on $[p_N,\ub)$ we have that $r \cdot \overline U_{\Fb}(r-\vert \Iset)=R_{N-1}(r).$
Since $p_{N-1} \cdot q_{N-1} \le p_N \cdot q_N=\rho$, the function $R_{N-1}$ is non-decreasing, and
$R_{N-1}(p_N)=\rho.$
Therefore,
\begin{equation*}
r \cdot \overline U_{\Fb}(r-\vert \Iset)\ge \rho
\qquad \text{for all } r \in [p_N,\ub),
\end{equation*}
which proves that $\overline r_{\Fb}(\Iset)=\ub.$

\underline{Case (b):} If $i_r = N-1$, then on $[p_{N-1},p_N)$ we have that
$r \cdot \overline U_{\Fb}(r-\vert \Iset)=R_{N-2}(r).$

Since $p_{N-2} \cdot q_{N-2} \le p_{N-1} \cdot q_{N-1}=\rho$, the function $R_{N-2}$ is non-decreasing, and
$R_{N-2}(p_{N-1})=\rho.$
Hence,
\begin{equation*}
[p_{N-1},p_N) \subseteq \mathcal{S}(\Iset).
\end{equation*}
Together with \eqref{eq:range_boundary}, we conclude that $r_{\Fb}(\Iset)=p_N$.

\underline{Case (c):} Assume finally that $i_r \le N-2$. On the interval $[p_{i_r},p_{i_r+1})$, we have
\begin{equation*}
r \cdot \overline U_{\Fb}(r-\vert \Iset)
=
\min\{R_{i_r-1}(r),R_{i_r+1}(r)\}.
\end{equation*}
Because $i_r$ is the last maximizing index, the function $R_{i_r-1}$ is non-decreasing, while $R_{i_r+1}$ is non-increasing. Moreover, $R_{i_r-1}(p_{i_r})=p_{i_r} \cdot q_{i_r}=\rho,$ which implies that for every $r \in [p_{i_r},p_{i_r+1})$, we have that $R_{i_r-1}(r)\ge \rho.$

Therefore,
\begin{equation*}
r \cdot \overline U_{\Fb}(r-\vert \Iset) \ge \rho
\qquad \Longleftrightarrow \qquad
R_{i_r+1}(r)\ge \rho.
\end{equation*}

Next, remark that $R_{i_r+1}$ is continuous and monotonic on $[p_{i_r},p_{i_r+1})$, and it satisfies, $R_{i_r+1}(p_{i_r}) \ge \rho$ and $R_{i_r+1}(p_{i_r+1})=p_{i_r+1} \cdot q_{i_r+1}<\rho$. Consequently, the intermediate value theorem implies  that there exists a unique threshold
$\overline r_{\Fb}(\Iset) \in [p_{i_r},p_{i_r+1})$ such that
\begin{equation*}
R_{i_r+1}(r)\ge \rho
\qquad \Longleftrightarrow \qquad
r \le \overline r_{\Fb}(\Iset)
\end{equation*}
and, by definition of $\mathcal{R}^{-1}$ (see \eqref{eq:R_inv}), such a threshold satisfies
\begin{equation*}
\overline r_{\Fb}(\Iset)
=
\mathcal{R}^{-1}\!\left(
\rho
\vert
(p_{i_r +1},q_{i_r +1}),
(p_{i_r +2},q_{i_r +2})
\right).
\end{equation*}
In particular, we have also established that
\begin{equation}
\label{eq:interval_S_right}
\mathcal{S}(\Iset) \cap [p_{i_r},p_{i_r+1})
=
[p_{i_r},\overline r_{\Fb}(\Iset)] \cap [\lb,\ub).
\end{equation}

\textit{Step 5:}  We finally establish that $\mathcal{S}(\Iset)$ is an interval.
Combining Steps 2, with \eqref{eq:interval_S_left_1}, \eqref{eq:interval_S_left_2} and \eqref{eq:interval_S_right}, we obtain that
\begin{equation*}
\mathcal{S}(\Iset)
=
[\underline r_{\Fb}(\Iset),p_{i_\ell})
\cup
[p_{i_\ell},p_{i_r}]
\cup
[p_{i_r},\overline r_{\Fb}(\Iset)] \cap [\lb,\ub)
=
[\underline r_{\Fb}(\Iset),\overline r_{\Fb}(\Iset)] \cap [\lb,\ub).
\end{equation*}
This concludes the proof.
\end{proof}

\subsection{Auxiliary Results and their Proofs}

\begin{proposition}[Feasible information sets]\label{lemma:feasible-set}
Let $\Iset = \{(p_i, q_i): i = 0 \ldots, N+1\}$ be a set of sorted increasing prices and associated ccdf values with $(p_0, q_0)=(\lb,1)$ and $(p_{N+1}, q_{N+1}) = (\ub, 0)$. Then:
\begin{itemize}
    \item $\Gb(\Iset)$ is non-empty if and only if the sequence $(q_i)_{0 \leq i \leq N+1}$ is non-increasing.
    \item  $\Fb(\Iset)$ is non-empty if and only if the sequence $(q_i)_{0 \leq i \leq N+1}$ is non-increasing and the sequence of slopes $ \left(\frac{\Gainv{q_{i+1}}-\Gainv{q_{i}}}{p_{i+1}-p_{i}}\right)_{0 \leq i \leq N-1}$ is non-decreasing.
\end{itemize}
\end{proposition}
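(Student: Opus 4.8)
The plan is to prove both equivalences by, in each direction, producing a witness distribution for the ``if'' part and reading off a monotonicity obstruction from an arbitrary consistent distribution for the ``only if'' part. Throughout I will assume the prices are strictly increasing and $q_i>0$ for $i\le N$ (if some $q_i$ vanishes then $\Gamma^{-1}(q_i)=+\infty$, and one reduces to the subfamily of prices lying below the support). For $\Gb$ the argument is short: if $F\in\Gb(\Iset)$ then $q_i=\mathbb{P}_{v\sim F}(v\ge p_i)$ and $p\mapsto\mathbb{P}_{v\sim F}(v\ge p)$ is non-increasing, so $(q_i)$ is non-increasing; conversely, given a non-increasing $(q_i)$ with $q_0=1$, $q_{N+1}=0$, the distribution placing an atom of mass $q_i-q_{i+1}\ge 0$ at $p_i$ for $i=0,\dots,N$ (equivalently $L_{\Gb}(\cdot\mid\Iset)$ from \eqref{eq:L}) has total mass $q_0-q_{N+1}=1$, is supported in $[\lb,\ub]$, and satisfies $\mathbb{P}(v\ge p_k)=\sum_{i\ge k}(q_i-q_{i+1})=q_k$ for every $k$, hence lies in $\Gb(\Iset)$.

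For $\Fb$ the structural fact driving everything is that a distribution with a density is regular if and only if $v\mapsto\Gamma^{-1}(\bar F(v))=\frac{1}{\bar F(v)}-1$ is convex on its support: a short computation gives $\phi_F'=2+\bar F f'/f^2$ and $(\Gamma^{-1}\circ\bar F)''=(\bar F f'+2 f^2)/\bar F^3$, which have the same sign, and this is exactly what powers the single-crossing bound of \Cref{lemma:singlecross} for $\Cb=\Fb$. For the ``only if'' direction, take $F\in\Fb(\Iset)$; its ccdf is continuous (there is no atom at $\ub$ since $q_{N+1}=0$), so $\Gamma^{-1}(\bar F(p_i))=\Gamma^{-1}(q_i)$, and applying \Cref{lemma:singlecross} to the triple $(p_i,p_{i+1},p_{i+2})$, i.e.\ with $s=p_i$, $s'=p_{i+2}$ evaluated at $p_{i+1}$, yields $\Gamma^{-1}(q_{i+1})\le\Gamma^{-1}(q_i)+\frac{\Gamma^{-1}(q_{i+2})-\Gamma^{-1}(q_i)}{p_{i+2}-p_i}(p_{i+1}-p_i)$, i.e.\ each point $(p_{i+1},\Gamma^{-1}(q_{i+1}))$ lies on or below the chord joining its neighbours; rearranging is precisely $\frac{\Gamma^{-1}(q_{i+1})-\Gamma^{-1}(q_i)}{p_{i+1}-p_i}\le\frac{\Gamma^{-1}(q_{i+2})-\Gamma^{-1}(q_{i+1})}{p_{i+2}-p_{i+1}}$, so the slope sequence is non-decreasing (and $\Fb(\Iset)\subseteq\Gb(\Iset)$ already forces $(q_i)$ non-increasing). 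For the ``if'' direction, assume $(q_i)$ non-increasing and the slopes non-decreasing; then for $L_{\Fb}(\cdot\mid\Iset)$ from \eqref{eq:L} the transformed ccdf $\Gamma^{-1}(\bar L_{\Fb}(\cdot\mid\Iset))$ is the piecewise-linear interpolant of the points $(p_i,\Gamma^{-1}(q_i))$, which is convex exactly because the slopes are non-decreasing; by the same computation as in the proof of \Cref{lem:L_belongs} its virtual value is piecewise constant with values $p_i-(p_{i+1}-p_i)\frac{1+\Gamma^{-1}(q_i)}{\Gamma^{-1}(q_{i+1})-\Gamma^{-1}(q_i)}$ and consecutive increment $(1+\Gamma^{-1}(q_i))\left[\frac{p_i-p_{i-1}}{\Gamma^{-1}(q_i)-\Gamma^{-1}(q_{i-1})}-\frac{p_{i+1}-p_i}{\Gamma^{-1}(q_{i+1})-\Gamma^{-1}(q_i)}\right]\ge 0$, so the virtual value is non-decreasing; since also $\bar L_{\Fb}(p_i-\mid\Iset)=q_i$ for all $i$, it lies in $\Fb(\Iset)$.

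I expect the main difficulty to be the ``only if'' direction for $\Fb$: turning the qualitative hypothesis ``$F$ is regular'' into a data-only inequality is what forces the $\Gamma$-transform into the picture, and one must either invoke the Barlow--Proschan-type bound behind \Cref{lemma:singlecross} or carry out the density computation above. The remaining work is bookkeeping around degenerate configurations --- prices lying outside the support (where $q_i\in\{0,1\}$ and the associated slopes are $0$ or $+\infty$), runs of equal consecutive $q_i$'s (which the slope condition itself rules out unless they occur in the initial block of $1$'s), and the last interval $[p_N,\ub)$ where $q_{N+1}=0$ --- each of which is routine but needs to be checked so that the stated equivalences hold verbatim.
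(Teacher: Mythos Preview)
Your proposal is correct and follows essentially the same approach as the paper: for the ``only if'' direction in the regular case you apply \Cref{lemma:singlecross} to three consecutive data points to extract the chord inequality (the paper does the same, using the triple $(p_{i-1},p_i,p_{i+1})$ and an additional convexity step that your direct rearrangement avoids), and for the ``if'' direction you exhibit $L_{\Cb}(\cdot\mid\Iset)$ as the witness and verify regularity via the piecewise-constant virtual value computation, exactly as in the proof of \Cref{lem:L_belongs}. Your presentation is slightly more self-contained in that it spells out the virtual-value check directly rather than citing \Cref{lem:L_belongs} (whose hypothesis is ``$\Cb(\Iset)$ non-empty'', creating an apparent circularity in the paper's proof that is harmless only because the relevant computation there uses just the slope condition).
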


\begin{proof}[\textbf{\underline{Proof of \Cref{lemma:feasible-set}}}]
Let $\Iset = \{(p_i, q_i): i = 0 \ldots, N+1\}$ a set of sorted increasing prices and associated ccdf values with $(p_0, q_0)=(\lb,1)$ and $(p_{N+1}, q_{N+1}) = (\ub, 0)$.\\
\underline{Case $\Cb = \Fb:$} \\
$\Longrightarrow$) If $\Iset$ is $\Fb$-feasible then there exists a distribution characterized by its cdf $F$ in $\aDistSetqa{\Iset}$. Then by definition of a cdf, the sequence $(q_i = \bF(p_i))_{0 \leq i \leq N+1}$ is non-increasing. Fix $1 \leq i \leq N-1$, Using \Cref{lemma:singlecross} on $\bF$ with $(s,q) = (p_{i-1}, q_{i-1})$ and $(s',q') = (p_{i+1},q_{i+1})$, we have
\begin{equation*}
\bF(p_i) = q_i \geq \Gad{\Gainv{q_{i-1}} + \frac{\Gainv{q_{i+1}}-\Gainv{q_{i-1}}}{p_{i+1}-p_{i-1}} (p_{i}-p_{i-1})}
\end{equation*}
which implies that
$\frac{\gqi-\gqim}{\pii-\pim} \leq \frac{\gqip-\gqim}{\pip -\pim}.$ Furthermore, note that
\begin{align*}
\frac{\gqip-\gqim}{\pip -\pim} = \frac{\Gamma^{-1}(\bF(p_{i+1})) -\Gamma^{-1}(\bF(p_{i-1}))}{\pip -\pim} \leq \frac{\Gamma^{-1}(\bF(p_{i+1})) -\Gamma^{-1}(F(p_{i}))}{\pip -p_{i}} = \frac{\gqip-\gqi}{\pip -\pii},
\end{align*}
where the inequality holds because $\Gamma^{-1}(\bF(\cdot))$ is convex (see proof of Lemma 2 in \cite{ABBSamples}) and the slope function of a convex function is non-decreasing.

Therefore, $\frac{\gqi-\gqim}{\pii-\pim} \leq \frac{\gqip-\gqi}{\pip -\pii}$, which proves the implication.

$\Longleftarrow$) Assume that the sequence $(q_i)_{0 \leq i \leq N+1}$ is non-increasing and $\left(\frac{\Gainv{q_{i+1}}-\Gainv{q_{i}}}{p_{i+1}-p_{i}}\right)_{0 \leq i \leq N-1}$ is non-decreasing.
Hence, \Cref{lem:L_belongs} implies that $\overline{L}_{\Fb}(\cdot \vert  \Iset)$ defined in \eqref{eq:L} belongs to $\Fb(\Iset)$. \\
\underline{Case $\Cb = \Gb:$} \\
$\Longrightarrow$) If $\Iset$ is $\Gb$-feasible then there exists a distribution characterized by its cdf $F$ in $\Gb(\Iset)$. Then by definition of a cdf, the sequence $(q_i = \bF(p_i))_{0 \leq i \leq N+1}$ is non-increasing. 

$\Longleftarrow$) It follows from \Cref{lem:L_belongs} that $\overline{L}_{\Gb}(\cdot \vert  \Iset)$ defined in \eqref{eq:L} belongs to $\Gb(\Iset)$
\end{proof}

\begin{lemma}\label{lem:const} 
Fix three scalars $\alpha, \beta > 0$ and $s\ge0$ and a cumulative distribution function $F$ such that $\overline{F} = \Gad{\alpha + \beta(v-s)}$  for $v \ge s$. Then $F$ admits a constant virtual value  function equal to  $s - \frac{1+ \alpha}{\beta}.$
\end{lemma}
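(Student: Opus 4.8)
The plan is to compute the virtual value function directly from its definition $\phi_F(v) = v - \frac{\bar F(v)}{f(v)}$, using the given functional form $\bar F(v) = \Gamma(\alpha + \beta(v-s))$. First I would recall that $\Gamma(x) = \frac{1}{1+x}$, so that $\bar F(v) = \frac{1}{1 + \alpha + \beta(v-s)}$ for $v \ge s$. Differentiating, the density is $f(v) = -\frac{d}{dv}\bar F(v) = \frac{\beta}{(1 + \alpha + \beta(v-s))^2}$, which is positive since $\alpha, \beta > 0$, so the distribution is well-defined on $[s, \infty)$ (or on the relevant truncation of it).

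Next I would form the ratio $\frac{\bar F(v)}{f(v)}$. We have
\begin{equation*}
\frac{\bar F(v)}{f(v)} = \frac{(1+\alpha+\beta(v-s))^{-1}}{\beta (1+\alpha+\beta(v-s))^{-2}} = \frac{1 + \alpha + \beta(v-s)}{\beta}.
\end{equation*}
Therefore
\begin{equation*}
\phi_F(v) = v - \frac{1 + \alpha + \beta(v-s)}{\beta} = v - \frac{1+\alpha}{\beta} - (v - s) = s - \frac{1+\alpha}{\beta},
\end{equation*}
which is a constant independent of $v$, as claimed.

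There is essentially no obstacle here: the result is a one-line computation once the algebra of $\Gamma$ and its derivative is carried out. The only point requiring a word of care is that $\Gamma^{-1}$-affine ccdfs of this type — which is exactly the form of $\overline{G}_{\Fb}$ in \eqref{eq:GF} — correspond to (truncated) distributions with positive density, so that the virtual value is well-defined; this follows immediately from $\beta > 0$. If one wishes to be fully careful about the domain of definition and about boundary effects at the top of the support, one notes that the formula for $\phi_F$ only uses the local behavior of $\bar F$ and $f$ at interior points, so the constant value $s - \frac{1+\alpha}{\beta}$ holds throughout the interval on which $\bar F$ takes the stated form.
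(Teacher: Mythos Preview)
Your proof is correct and follows essentially the same direct computation as the paper: differentiate $\bar F(v)=\Gamma(\alpha+\beta(v-s))$, form the ratio $\bar F/f$, and simplify to obtain the constant $s-\frac{1+\alpha}{\beta}$. The only cosmetic difference is that the paper keeps the calculation in $\Gamma$-notation while you expand $\Gamma(x)=\frac{1}{1+x}$ at the outset.
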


\begin{proof}[\textbf{\underline{Proof of \Cref{lem:const}}}]
Let us first explicitly compute the virtual value function.  The derivative of $\Gad{\alpha + \beta(v-s)}$, for any $v \ge s$, is equal to $- \beta \Gad{\alpha + \beta(v-s)}^2.$
Therefore, the virtual value function evaluated at $v \ge s$ is equal to,
\begin{equation*}
 v - \frac{ \Gad{\alpha + \beta(v-s)}}{ \beta \Gad{\alpha + \beta(v-s)}^2 } 
=  v - \frac{1}{\beta \Gad{\alpha + \beta(v-s)}}  
= v - \frac{1+ \alpha + \beta (v-s)}{\beta} 
= s - \frac{1+ \alpha}{\beta}.\qedhere
\end{equation*}	
\end{proof}

\section{Proof of Results in \Cref{sec:value_information}}\label{sec:apx_value_information}

In this appendix, we prove \Cref{thm:lp_discrete_bound}. We proceed in five steps. First, we quantify the loss incurred by restricting attention to  mechanisms supported on a grid. Second, we analyze the structure of the upper envelope and show that, on an admissible grid, the continuum constraints can be reduced to finitely many endpoint constraints. Third, we prove that the resulting linear program provides a certified lower bound. Fourth, we derive a complementary upper bound and quantify the discretization error. Finally, we combine these results to complete the proof.

\subsection{Discretization of the Mechanism Space}
We first establish a structural result on the support of the optimal randomized pricing policy.

\begin{proposition}[Reduction to mechanisms supported on $\mathcal{S}(\Iset)$]
\label{prop:support_reduction}
Fix an information set $\Iset=\{(p_i,q_i)\}_{i=0}^{N+1}$ and a class $\Cb\in\{\Gb,\Fb\}$, and assume that $\Cb(\Iset)$ is non-empty. Then for every mechanism $\Psi \in \MechSet$, there exists a mechanism $\widetilde{\Psi} \in \MechSet$ supported on $\mathcal{S}(\Iset)$ such that
\begin{equation*}
\inf_{F\in\Cb(\Iset)} \frac{\Rev(\widetilde{\Psi},F)}{\opt(F)}
\;\ge\;
\inf_{F\in\Cb(\Iset)} \frac{\Rev(\Psi,F)}{\opt(F)}.
\end{equation*}
\end{proposition}

\begin{proof}[\textbf{Proof of \Cref{prop:support_reduction}}]
Let $i^{\mathrm{LB}} \in \arg\max_{0 \le i \le N+1} p_i \cdot q_i$ and define $p^{\mathrm{LB}} = p_{i^{\mathrm{LB}}}$. By Proposition~\ref{prop:r_certificate}, we have $p^{\mathrm{LB}} \in \mathcal{S}(\Iset)$.
We first show that for every $p \notin \mathcal{S}(\Iset)$ and every $F \in \Cb(\Iset)$,
\begin{equation}\label{eq:outside_S_domination}
\Rev(p\vert  F)\le p^{\mathrm{LB}} \cdot q_{i^{\mathrm{LB}}}.
\end{equation}
Assume for the sake of  contradiction that there exist $p\notin\mathcal{S}(\Iset)$ and $F\in\Cb(\Iset)$ such that
$\Rev(p\vert  F) > p^{\mathrm{LB}} \cdot q_{i^{\mathrm{LB}}}$.
Let $q=\overline{F}(p-)$ and consider the extended information set $\IsetPlus = \Iset \cup (p,q)$.
Since $F\in\Cb(\IsetPlus)$, the set $\Cb(\IsetPlus)$ is non-empty.
Therefore, by \Cref{lem:L_belongs}, we have $L_{\Cb}(\cdot\vert  \IsetPlus)\in\Cb(\IsetPlus)\subset\Cb(\Iset)$.
Moreover, applying \Cref{lem:max_L} to $\IsetPlus$ yields
\begin{equation*}
\opt\!\left(L_{\Cb}(\cdot\vert  \IsetPlus)\right)
=
\max\!\left\{ p^{\mathrm{LB}} \cdot q_{i^{\mathrm{LB}}},\, p \cdot q\right\}
=
p \cdot q,
\end{equation*}
where the last equality holds because $p \cdot q=\Rev(p\vert  F)> p^{\mathrm{LB}} \cdot q_{i^{\mathrm{LB}}}$.

Therefore, we have
$\Rev\!\left(p\vert  L_{\Cb}(\cdot\vert  \IsetPlus)\right)=p \cdot q=\opt\!\left(L_{\Cb}(\cdot\vert  \IsetPlus)\right)$,
so $p$ is an optimal price for a distribution in $\Cb(\Iset)$, which implies $p\in\mathcal{S}(\Iset)$.
This contradicts $p\notin\mathcal{S}(\Iset)$, establishing \eqref{eq:outside_S_domination}.
\begin{equation*}
\Rev(p \vert F) \le p^{\mathrm{LB}} \cdot q_{i^{\mathrm{LB}}}.
\end{equation*}

Next, define the mapping $\pi:[\lb,\ub]\to \mathcal{S}(\Iset)$ by
\begin{equation*}
\pi(p) =
\begin{cases}
p, & \text{if } p \in \mathcal{S}(\Iset),\\
p^{\mathrm{LB}}, & \text{otherwise.}
\end{cases}
\end{equation*}
Let $\widetilde{\Psi}$ be the pushforward of $\Psi$ through $\pi$, i.e., $\tilde{\Psi}(B)=\Psi(\pi^{-1}(B))$ for all Borel sets $B\subseteq[\lb,\ub]$.

Fix any $F \in \Cb(\Iset)$. If $p \in \mathcal{S}(\Iset)$, then $\pi(p)=p$. If $p \notin \mathcal{S}(\Iset)$, then
\begin{equation*}
\Rev(\pi(p)\vert F) = \Rev(p^{\mathrm{LB}}\vert F) \ge \Rev(p\vert F).
\end{equation*}
Furthermore, we have
\begin{equation*}
\Rev(\widetilde{\Psi},F)
=
\int_{\lb}^{\ub} \Rev(p\vert F)\, d\widetilde{\Psi}(p)
=
\int_{\lb}^{\ub} \Rev(\pi(p)\vert F)\, d\Psi(p),
\end{equation*}
where the second equality follows from the definition of the pushforward measure.
Therefore,
\begin{equation*}
\Rev(\widetilde{\Psi},F)
=
\int_{\lb}^{\ub} \Rev(\pi(p)\vert F)\, d\Psi(p)
\ge
\int_{\lb}^{\ub} \Rev(p\vert F)\, d\Psi(p)
=
\Rev(\Psi,F).
\end{equation*}
Dividing by $\opt(F)$ and taking the infimum over $F \in \Cb(\Iset)$ yields the result.
\end{proof}

Next, we formally define the notion of partition grids for the set $\mathcal{S}(\Iset)$. We now from \Cref{prop:S_general} and \Cref{prop:S_regular} that this set is a union of intervals. We therefore define a partition as follows.

\begin{definition}[Partition grid of $\mathcal{S}(\Iset)$]
\label{def:partition_grid}
Assume that $\mathcal{S}(\Iset)$ admits a decomposition into finitely many pairwise disjoint intervals
\begin{equation*}
\mathcal{S}(\Iset)=\bigcup_{k=1}^{K} I_k,
\qquad
\sup I_k < \inf I_{k+1}
\quad \text{for all } k\in\{1,\ldots,K-1\}.
\end{equation*}
We say that a finite set $\IN$ is a \emph{partition grid} of $\mathcal{S}(\Iset)$ if, for every $k\in\{1,\ldots,K\}$, there exist points
\begin{equation*}
\inf I_k = a_{k,0} < a_{k,1} < \cdots < a_{k,M_k} = \sup I_k
\end{equation*}
such that $\IN=\bigcup_{k=1}^{K}\{a_{k,0},\ldots,a_{k,M_k}\}.$

\end{definition}

\begin{proposition}[Discretization gap for admissible grids]
\label{prop:mech_discretization_gap_admissible}
Fix an information set $\Iset=\{(p_i,q_i)\}_{i=0}^{N+1}$ and a class $\Cb\in\{\Gb,\Fb\}$, and assume that $\Cb(\Iset)$ is non-empty.
Let $\IN=\{a_j\}_{j=0}^{M}$ be an admissible grid (as in \Cref{def:admissible}) that partitions $\mathcal{S}(\Iset)$, and define its mesh size as $\Delta(\IN)=\max_{0\le j\le M-1}\big(a_{j+1}-a_j\big).$
Assume that
\begin{equation*}
R_{\Iset} =\max_{0\le i\le N+1} p_i \cdot q_i > 0.
\end{equation*}
Then for every mechanism $\Psi\in\MechSet$, there exists a grid-supported mechanism
$\Psi_{\IN}\in\MechSet_{\IN}$ such that
\begin{equation*}
\inf_{F\in\Cb(\Iset)}
\frac{\Rev(\Psi_{\IN},F)}{\opt(F)}
\ \ge\
\inf_{F\in\Cb(\Iset)}
\frac{\Rev(\Psi,F)}{\opt(F)}
\ -\
\frac{\Delta(\IN)}{R_{\Iset}}.
\end{equation*}
Consequently,
\begin{equation*}
\sup_{\Psi\in\MechSet}
\inf_{F\in\Cb(\Iset)}
\frac{\Rev(\Psi,F)}{\opt(F)}
\ \le\
\sup_{\Psi\in\MechSet_{\IN}}
\inf_{F\in\Cb(\Iset)}
\frac{\Rev(\Psi,F)}{\opt(F)}
\ +\
\frac{\Delta(\IN)}{R_{\Iset}}.
\end{equation*}
\end{proposition}

\begin{proof}[\textbf{Proof of \Cref{prop:mech_discretization_gap_admissible}}]

Fix $\Psi \in \MechSet$. By Proposition~\ref{prop:support_reduction}, there exists a mechanism $\widetilde{\Psi}$ supported on $\mathcal{S}(\Iset)$ such that
\begin{equation}
\label{eq:reduction_to_S}
\inf_{F \in \Cb(\Iset)} \frac{\Rev(\widetilde{\Psi},F)}{\opt(F)}
\;\ge\;
\inf_{F \in \Cb(\Iset)} \frac{\Rev(\Psi,F)}{\opt(F)}.
\end{equation}
It therefore suffices to construct a discretized mechanism from $\widetilde{\Psi}$.

Let $\IN=\{a_j\}_{j=0}^{M}$ be an admissible grid that partitions $\mathcal{S}(\Iset)$, and define the rounding map $\pi:\mathcal{S}(\Iset)\to \IN$ by
\begin{equation*}
\pi(p) = \max\{a_j \in \IN \;\text{s.t.}\; a_j \le p\}.
\end{equation*}
Let $\Psi_{\IN}$ be the pushforward of $\widetilde{\Psi}$ through $\pi$, i.e.,
$\Psi_{\IN}(B)=\tilde{\Psi}(\pi^{-1}(B))$ for all Borel sets $B\subseteq[\lb,\ub]$.
By construction, $\Psi_{\IN}\in\MechSet_{\IN}$.

Fix any $F \in \Cb(\Iset)$. Since $\widetilde{\Psi}$ is supported on $\mathcal{S}(\Iset)$, we have
\begin{equation*}
\Rev(\Psi_{\IN},F)
=
\int_{\mathcal{S}(\Iset)} \Rev(\pi(p)\vert F)\, d\widetilde{\Psi}(p).
\end{equation*}
For every $p \in \mathcal{S}(\Iset)$, we have $\pi(p) \le p$ and $\overline{F}(\cdot)$ is non-increasing, hence
\begin{equation*}
\Rev(p\vert F) - \Rev(\pi(p)\vert F)
=
p \cdot \overline{F}(p-) - \pi(p) \cdot \overline{F}(\pi(p)-)
\le
(p - \pi(p)) \cdot \overline{F}(p-)
\le
p - \pi(p).
\end{equation*}
By definition of the mesh size, $p - \pi(p) \le \Delta(\IN)$, which implies
\begin{equation*}
\Rev(\pi(p)\vert F) \ge \Rev(p\vert F) - \Delta(\IN).
\end{equation*}
Integrating yields
\begin{equation*}
\Rev(\Psi_{\IN},F)
\ge
\Rev(\widetilde{\Psi},F) - \Delta(\IN).
\end{equation*}

Finally, since $\opt(F) \ge R_{\Iset}$ for all $F \in \Cb(\Iset)$, we obtain
\begin{equation*}
\frac{\Rev(\Psi_{\IN},F)}{\opt(F)}
\ge
\frac{\Rev(\widetilde{\Psi},F)}{\opt(F)}
-
\frac{\Delta(\IN)}{R_{\Iset}}.
\end{equation*}
Taking the infimum over $F \in \Cb(\Iset)$ yields
\begin{equation*}
\inf_{F \in \Cb(\Iset)}
\frac{\Rev(\Psi_{\IN},F)}{\opt(F)}
\ge
\inf_{F \in \Cb(\Iset)}
\frac{\Rev(\widetilde{\Psi},F)}{\opt(F)}
-
\frac{\Delta(\IN)}{R_{\Iset}}.
\end{equation*}
Combining with \eqref{eq:reduction_to_S} completes the proof.
\end{proof}

\subsection{Structural Results for Constraints Discretization}\label{app:envelope_switching}

The discretization in \Cref{sec:value_information} relies on the fact that, once we refine the grid to include
all envelope switching points, the identity of the active branch of the upper envelope does not change within
a grid cell. We record this formally next.

\begin{lemma}\label{lem:unique_switching_cutoff}
Assume $\Fb(\Iset)\neq\emptyset$ and let $\Iset=\{(p_i,q_i)\}_{i=0}^{N+1}$ be sorted by increasing prices.
Fix any $i\in\{1,\ldots,N-2\}$ and consider the two candidate upper bounds
$\overline{G}_{\Fb,i-1}(\cdot)$ and $\overline{G}_{\Fb,i+1}(\cdot)$ in \eqref{eq:GF}. Then the
equation
\begin{equation*}
\overline{G}_{\Fb,i-1}(v)=\overline{G}_{\Fb,i+1}(v)
\end{equation*}
has either no solution in $[p_i,p_{i+1}]$, or exactly one solution in $[p_i,p_{i+1}]$, or holds for all
$v\in[p_i,p_{i+1}]$. Consequently, there exists a cutoff $c_i\in[p_i,p_{i+1}]$ such that the identity of the
minimizer in
\begin{equation*}
\overline{U}_{\Fb}(v\vert \Iset)=\min\!\left\{\overline{G}_{\Fb,i-1}(v),\overline{G}_{\Fb,i+1}(v)\right\},
\qquad v\in[p_i,p_{i+1}),
\end{equation*}
does not switch on each of the subintervals $[p_i,c_i)$ and $[c_i,p_{i+1})$.
\end{lemma}

\begin{proof}[\textbf{Proof of \Cref{lem:unique_switching_cutoff}}]
Using $\Gamma^{-1}(q)=\frac{1}{q}-1$, the definition of $\overline{G}_{\Fb}$ in \eqref{eq:GF}  can be rewritten
in the reciprocal-linear form
\begin{equation*}
\overline{G}_{\Fb}(v\vert (s,q),(s',q'))
=
\left(
\frac{s'-v}{s'-s}\cdot \frac{1}{q}
+
\frac{v-s}{s'-s}\cdot \frac{1}{q'}
\right)^{-1},
\end{equation*}
hence $\frac{1}{\overline{G}_{\Fb}(v\vert (s,q),(s',q'))}$ is affine in $v$. In particular, both mappings
$v\mapsto \frac{1}{\overline{G}_{\Fb,i-1}(v)}$ and $v\mapsto \frac{1}{\overline{G}_{\Fb,i+1}(v)}$ are affine
on $[p_i,p_{i+1}]$. Therefore, the difference
\begin{equation*}
v\;\mapsto\;\frac{1}{\overline{G}_{\Fb,i-1}(v)}-\frac{1}{\overline{G}_{\Fb,i+1}(v)}
\end{equation*}
is affine on $[p_i,p_{i+1}]$, and its zero set in $[p_i,p_{i+1}]$ is either empty, a singleton, or the entire
interval. Since both candidates are strictly positive, the equality
$\overline{G}_{\Fb,i-1}(v)=\overline{G}_{\Fb,i+1}(v)$ is equivalent to equality of reciprocals, so it has the
same solution set. If the two candidates coincide on all of $[p_i,p_{i+1}]$, then the minimizer in the upper
envelope does not switch. Otherwise there is at most one intersection point; taking $c_i$ equal to that point
(if it exists) yields the stated two-region structure.
\end{proof}

\begin{lemma}\label{lem:admissible_fixed_branch}
Fix an information set $\Iset$, a class $\Cb\in\{\Gb,\Fb\}$, and an admissible grid
$\IN=\{a_k\}_{k=0}^{M}$ as in \Cref{sec:value_information}. Then for every cell index $k\in\{0,\ldots,M\}$
there exists an index $m(k)$ such that
\begin{equation*}
\overline{U}_{\Cb}(\rs-\vert  \Iset)=\overline{G}_{\Cb,m(k)}(\rs)
\qquad \text{for all } \rs\in[a_k,a_{k+1}).
\end{equation*}
\end{lemma}

\begin{proof}[\textbf{Proof of \Cref{lem:admissible_fixed_branch}}]
Fix a cell $[a_k,a_{k+1})$. Since $\IN$ contains all information prices $\{p_i\}\cap\mathcal{S}(\Iset)$, there
exists $t\in\{0,\ldots,N\}$ such that $[a_k,a_{k+1})\subseteq[p_t,p_{t+1})$.

If $\Cb=\Gb$, then by the explicit formula for $\overline{U}_{\Gb}(\cdot\vert \Iset)$ in \Cref{sec:shape_distribution}
the upper envelope is constant on $[p_t,p_{t+1})$, hence also on $[a_k,a_{k+1})$, and can be written as a
single candidate $\overline{G}_{\Gb,m(k)}$ on that cell.

If $\Cb=\Fb$, then for interior information intervals the envelope on $[p_t,p_{t+1})$ is the pointwise minimum
of two candidates (again \Cref{sec:shape_distribution}), and by \Cref{lem:unique_switching_cutoff} there is at most
one switching point $c_t$ in $[p_t,p_{t+1}]$. Since $\IN$ is admissible, it contains every such switching point
that lies in $\mathcal{S}(\Iset)$. Therefore the cell $[a_k,a_{k+1})$ cannot contain a switching point in its
interior, and the identity of the minimizing candidate is constant on the cell. Denote that candidate by
$\overline{G}_{\Fb,m(k)}$.
\end{proof}

\begin{lemma}[Cellwise monotonicity]\label{lem:cellwise_monotonicity}
Fix $\Iset$, $\Cb\in\{\Gb,\Fb\}$, and a grid $\IN=\{a_i\}_{i=0}^{M}$ that partitions
$\mathcal{S}(\Iset)$. Fix a cell index $i\in\{0,\ldots,M\}$ and suppose that
$\overline{U}_{\Cb}(\rs-\vert  \Iset)=\overline{G}_{\Cb,m(i)}(\rs)$ for all
$\rs\in[a_i,a_{i+1})$. Then:
\begin{enumerate}
\item The mapping $\rs \mapsto \opt(F_{\Cb}(\cdot\vert  \rs,\Iset))$ is monotone on $[a_i,a_{i+1})$, and
\begin{equation*}
\sup_{\rs\in[a_i,a_{i+1})}\opt(F_{\Cb}(\cdot\vert  \rs,\Iset))
=
\max\!\left\{
\opt(F_{\Cb}(\cdot\vert  a_i,\Iset)),
\opt(F_{\Cb}(\cdot\vert  a_{i+1}^-,\Iset))
\right\}.
\end{equation*}

\item For every fixed grid price $a_j\in\IN$, the mapping
$\rs \mapsto \Rev(a_j\vert  F_{\Cb}(\cdot\vert  \rs,\Iset))$ is monotone on $[a_i,a_{i+1})$, and
\begin{equation*}
\inf_{\rs\in[a_i,a_{i+1})}\Rev(a_j\vert  F_{\Cb}(\cdot\vert  \rs,\Iset))
=
\min\!\left\{
\Rev(a_j\vert  F_{\Cb}(\cdot\vert  a_i,\Iset)),
\Rev(a_j\vert  F_{\Cb}(\cdot\vert  a_{i+1}^-,\Iset))
\right\}.
\end{equation*}
\end{enumerate}
\end{lemma}

\begin{proof}[\textbf{Proof of \Cref{lem:cellwise_monotonicity}}]
Fix $i$ and assume $\overline{U}_{\Cb}(\rs-\vert  \Iset)=\overline{G}_{\Cb,m(i)}(\rs)$ for all
$\rs\in[a_i,a_{i+1})$.

\smallskip
\noindent\emph{Step 1: the benchmark term.}
For every $\rs\in\mathcal{S}(\Iset)$, the definition of $F_{\Cb}(\cdot\vert  \rs,\Iset)$ in
\eqref{eq:worst_case_F} together with \Cref{prop:r_certificate} implies that $\rs$ is an optimal posted price
under $F_{\Cb}(\cdot\vert  \rs,\Iset)$, with conversion rate
$\overline{F}_{\Cb}(\rs-\vert  \rs,\Iset)=\overline{U}_{\Cb}(\rs-\vert  \Iset)$. Hence,
\begin{equation*}
\opt(F_{\Cb}(\cdot\vert  \rs,\Iset))
=
\rs\cdot \overline{U}_{\Cb}(\rs-\vert  \Iset)
=
\rs\cdot \overline{G}_{\Cb,m(i)}(\rs),
\qquad \rs\in[a_i,a_{i+1}).
\end{equation*}

If $\Cb=\Gb$, then $\overline{G}_{\Gb,m(i)}(\rs)$ is constant in $\rs$ on the cell, so the benchmark
is affine in $\rs$ and therefore monotone.

If $\Cb=\Fb$, then rewriting $\overline{G}_{\Fb}$ using $\Gamma^{-1}(q)=\frac{1}{q}-1$ yields the
reciprocal-linear representation
\begin{equation*}
\overline{G}_{\Fb}(v\vert (s,q),(s',q'))
=
\left(
\frac{s'-v}{s'-s}\cdot \frac{1}{q}
+
\frac{v-s}{s'-s}\cdot \frac{1}{q'}
\right)^{-1},
\end{equation*}
so that $\frac{1}{\overline{G}_{\Fb,m(i)}(\rs)}=\alpha_i+\beta_i \rs$ for some constants $\alpha_i>0$ and
$\beta_i\ge 0$, and therefore
\begin{equation*}
\opt(F_{\Fb}(\cdot\vert  \rs,\Iset))=\rs\cdot \overline{G}_{\Fb,m(i)}(\rs)=\frac{\rs}{\alpha_i+\beta_i \cdot \rs},
\end{equation*}
which is monotone on $[a_i,a_{i+1})$. In both cases, the supremum over the cell is attained at an endpoint,
which yields the desired endpoint reduction.

\smallskip
\noindent\emph{Step 2: the revenue coefficients.}
Fix a grid price $a_j\in\IN$ and define
\begin{equation*}
R_j(\rs)
=
\Rev(a_j\vert  F_{\Cb}(\cdot\vert  \rs,\Iset))
=
a_j\cdot \overline{F}_{\Cb}(a_j-\vert  \rs,\Iset).
\end{equation*}
Let $k$ be such that $[a_i,a_{i+1})\subseteq[p_k,p_{k+1})$, and write
$q_\rs=\overline{U}_{\Cb}(\rs-\vert \Iset)$. Since $\rs\in[p_k,p_{k+1})$, the extended information set
$\Iset\cup(\rs,q_\rs)$ inserts a single anchor point between $(p_k,q_k)$ and $(p_{k+1},q_{k+1})$. Therefore,
by the definition of the lower-envelope operator in \eqref{eq:L},
\begin{equation*}
\overline{F}_{\Cb}(v-\vert  \rs,\Iset)
=
\overline{L}_{\Cb}\!\left(v-\ \vert  \ \Iset\cup(\rs,q_\rs)\right)
=
\begin{cases}
\overline{G}_{\Cb}(v\vert (p_k,q_k),(\rs,q_\rs)), & v\in[p_k,\rs),\\
\overline{G}_{\Cb}(v\vert (\rs,q_\rs),(p_{k+1},q_{k+1})), & v\in[\rs,p_{k+1}),
\end{cases}
\end{equation*}
while $\overline{F}_{\Cb}(v-\vert  \rs,\Iset)=\overline{L}_{\Cb}(v-\vert \Iset)$ for $v\notin[p_k,p_{k+1})$.
Hence, if $a_j\notin[p_k,p_{k+1})$, then $\overline{F}_{\Cb}(a_j-\vert  \rs,\Iset)$ is independent of $\rs$,
and $R_j(\rs)$ is constant on $[a_i,a_{i+1})$.

Assume now that $a_j\in[p_k,p_{k+1})$. Since $a_j$ is a grid point and $\rs$ ranges over the open cell
$[a_i,a_{i+1})$, we have $a_j\notin(a_i,a_{i+1})$. Therefore, on the entire cell either
$a_j\le a_i\le \rs$ for all $\rs\in[a_i,a_{i+1})$, or $a_j\ge a_{i+1}>\rs$ for all $\rs\in[a_i,a_{i+1})$.
In the first case,
\begin{equation*}
\overline{F}_{\Cb}(a_j-\vert  \rs,\Iset)=\overline{G}_{\Cb}(a_j\vert (p_k,q_k),(\rs,q_\rs)),
\qquad \rs\in[a_i,a_{i+1}),
\end{equation*}
and in the second case,
\begin{equation*}
\overline{F}_{\Cb}(a_j-\vert  \rs,\Iset)=\overline{G}_{\Cb}(a_j\vert (\rs,q_\rs),(p_{k+1},q_{k+1})),
\qquad \rs\in[a_i,a_{i+1}).
\end{equation*}
Thus the functional form of $\rs\mapsto \overline{F}_{\Cb}(a_j-\vert  \rs,\Iset)$ does not change within the
cell.

If $\Cb=\Gb$, then $\overline{U}_{\Gb}(v\vert \Iset)$ is constant on $[p_k,p_{k+1})$ (see \Cref{sec:shape_distribution}),
so $q_\rs=q_k$ for all $\rs\in[a_i,a_{i+1})$. Using \eqref{eq:GG}, in the first case we obtain
$\overline{F}_{\Gb}(a_j-\vert  \rs,\Iset)=q_\rs=q_k$, while in the second case we obtain
$\overline{F}_{\Gb}(a_j-\vert  \rs,\Iset)=q_{k+1}$. In either case $R_j(\rs)$ is constant, and thus monotone, on
$[a_i,a_{i+1})$.

If $\Cb=\Fb$, then by \eqref{eq:GF} and the reciprocal-linear representation, in the first case we have
\begin{equation*}
\frac{1}{\overline{F}_{\Fb}(a_j-\vert  \rs,\Iset)}
=
\frac{\rs-a_j}{\rs-p_k}\cdot \frac{1}{q_k}
+
\frac{a_j-p_k}{\rs-p_k}\cdot \frac{1}{q_\rs},
\qquad \rs\in[a_i,a_{i+1}),
\end{equation*}
and in the second case,
\begin{equation*}
\frac{1}{\overline{F}_{\Fb}(a_j-\vert  \rs,\Iset)}
=
\frac{p_{k+1}-a_j}{p_{k+1}-\rs}\cdot \frac{1}{q_\rs}
+
\frac{a_j-\rs}{p_{k+1}-\rs}\cdot \frac{1}{q_{k+1}},
\qquad \rs\in[a_i,a_{i+1}).
\end{equation*}
Furthermore we note that, $q_\rs=\overline{U}_{\Fb}(\rs-\vert \Iset)=\overline{G}_{\Fb,m(i)}(\rs)$ where the second equality follows from the assumption in the Lemma. Hence
$\frac{1}{q_\rs}=\alpha_i+\beta_i\rs$ for some constants $\alpha_i>0$ and $\beta_i\ge 0$ as in Step~1.
Substituting $\frac{1}{q_\rs}=\alpha_i+\beta_i\rs$ and collecting terms yields
\begin{equation*}
\frac{1}{\overline{F}_{\Fb}(a_j-\vert  \rs,\Iset)}
=
\frac{A_{ij}+B_{ij}\rs}{\rs-p_k}
\quad\text{or}\quad
\frac{1}{\overline{F}_{\Fb}(a_j-\vert  \rs,\Iset)}
=
\frac{A'_{ij}+B'_{ij}\rs}{p_{k+1}-\rs},
\end{equation*}
where the coefficients $A_{ij},B_{ij},A'_{ij},B'_{ij}$ are constant on the cell. Differentiating shows that
these ratios have derivatives of constant sign on $[a_i,a_{i+1})$:
\begin{equation*}
\frac{\mathrm{d}}{\mathrm{d}\rs}\left(\frac{A_{ij}+B_{ij}\rs}{\rs-p_k}\right)
=
-\frac{A_{ij}+B_{ij}p_k}{(\rs-p_k)^2},
\qquad
\frac{\mathrm{d}}{\mathrm{d}\rs}\left(\frac{A'_{ij}+B'_{ij}\rs}{p_{k+1}-\rs}\right)
=
\frac{A'_{ij}+B'_{ij}p_{k+1}}{(p_{k+1}-\rs)^2}.
\end{equation*}
It follows that
$\rs\mapsto \overline{F}_{\Fb}(a_j-\vert  \rs,\Iset)$ is monotone on the cell, and therefore so is
$\rs\mapsto R_j(\rs)$.

In all cases, $R_j(\rs)$ is monotone on $[a_i,a_{i+1})$, hence its infimum over the cell is attained at one of
the endpoints. This yields the endpoint reduction in part~(ii).
\end{proof}

\subsection{Lower bound Guarantee of the LP in \eqref{eq:minimax_lp}}

\begin{lemma}[Certified lower bound]
\label{lem:lp_lower_bound}
Fix an information set $\Iset$, a class $\Cb\in\{\Gb,\Fb\}$, and an admissible grid 
$\IN=\{a_k\}_{k=0}^{M}$ that partitions $\mathcal{S}(\Iset)$. 
Let $(\boldsymbol{\psi},\lambda)$ be any feasible solution of the linear program 
\eqref{eq:minimax_lp}, and let $\Psi_{\IN}$ denote the randomized posted-price 
mechanism supported on $\IN$ with weights $\boldsymbol{\psi}$.

Then,
\begin{equation*}
\inf_{F\in\Cb(\Iset)}
\frac{\Rev(\Psi_{\IN},F)}{\opt(F)}
\ge
\lambda.
\end{equation*}

In particular, if $(\boldsymbol{\psi}^*,\lambda^*)$ is an optimal solution of 
\eqref{eq:minimax_lp}, then
\begin{equation*}
\underline{\Lb}(\Cb(\Iset),\IN)
=
\lambda^*
\le
\inf_{F\in\Cb(\Iset)}
\frac{\Rev(\Psi_{\IN}^*,F)}{\opt(F)} \leq  \sup_{\Psi \in \MechSet } \inf_{F\in\Cb(\Iset)}
\frac{\Rev(\Psi,F)}{\opt(F)} 
\end{equation*}
\end{lemma}

\begin{proof}[\textbf{Proof of \Cref{lem:lp_lower_bound}}]
The second inequality is immediate because $\Psi_{\IN}^*\in\MechSet$.

We prove the first inequality. Let $(\boldsymbol{\psi},\lambda)$ be any feasible solution of
\eqref{eq:minimax_lp} and let $\Psi_{\IN}\in\MechSet_{\IN}$ be the corresponding discrete mechanism.
By \Cref{thm:nature_reduction},
\begin{equation*}
\inf_{F\in\Cb(\Iset)}\frac{\Rev(\Psi_{\IN},F)}{\opt(F)}
=
\inf_{\rs\in\mathcal{S}(\Iset)}
\frac{\Rev(\Psi_{\IN},F_{\Cb}(\cdot\vert  \rs,\Iset))}
{\opt(F_{\Cb}(\cdot\vert  \rs,\Iset))}.
\end{equation*}
Fix $\rs\in\mathcal{S}(\Iset)$ and choose $i\in\{0,\ldots,M\}$ such that $\rs\in[a_i,a_{i+1})$.
Since $\IN$ is admissible, \Cref{lem:admissible_fixed_branch} ensures that the hypothesis of
\Cref{lem:cellwise_monotonicity} holds on the cell, so \Cref{lem:cellwise_monotonicity} applies. Using the
definitions of $\widehat{\opt}^{\Cb}_i$ and $\widehat{\Rev}^{\Cb}_{ij}$ from \Cref{sec:value_information}, we
obtain
\begin{equation*}
\opt(F_{\Cb}(\cdot\vert  \rs,\Iset)) \le \widehat{\opt}^{\Cb}_i,
\qquad
\Rev(a_j\vert  F_{\Cb}(\cdot\vert  \rs,\Iset)) \ge \widehat{\Rev}^{\Cb}_{ij}
\quad \text{for all } j \in \{0,\ldots,M\}.
\end{equation*}
Taking expectations under $\Psi_{\IN}$ and using linearity,
\begin{equation*}
\Rev(\Psi_{\IN},F_{\Cb}(\cdot\vert  \rs,\Iset))
=
\sum_{j=0}^{M}\psi_j \cdot \Rev(a_j\vert  F_{\Cb}(\cdot\vert  \rs,\Iset))
\ge
\sum_{j=0}^{M}\psi_j \cdot \widehat{\Rev}^{\Cb}_{ij}.
\end{equation*}
Furthermore, feasibility of $(\boldsymbol{\psi},\lambda)$ in \eqref{eq:minimax_lp} implies
that $\sum_{j=0}^{M}\psi_j \cdot \widehat{\Rev}^{\Cb}_{ij}\ge \lambda \cdot \widehat{\opt}^{\Cb}_i$, hence
\begin{equation*}
\Rev(\Psi_{\IN},F_{\Cb}(\cdot\vert  \rs,\Iset))
\ge
\lambda \cdot \widehat{\opt}^{\Cb}_i
\ge
\lambda \cdot \opt(F_{\Cb}(\cdot\vert  \rs,\Iset)).
\end{equation*}
Therefore,
\begin{equation*}
\frac{\Rev(\Psi_{\IN},F_{\Cb}(\cdot\vert  \rs,\Iset))}
{\opt(F_{\Cb}(\cdot\vert  \rs,\Iset))}
\ge \lambda
\qquad \text{for all } \rs\in\mathcal{S}(\Iset).
\end{equation*}
Taking the infimum over $\rs$ yields
$\inf_{F\in\Cb(\Iset)}\frac{\Rev(\Psi_{\IN},F)}{\opt(F)}\ge \lambda$.
Applying this inequality to an optimal solution $(\boldsymbol{\psi}^*,\lambda^*)$ proves
\begin{equation*}
\inf_{F\in\Cb(\Iset)}\frac{\Rev(\Psi_{\IN}^*,F)}{\opt(F)}
\ge
\lambda^*
=
\underline{\Lb}(\Cb(\Iset),\IN),
\end{equation*}
which is the desired certified lower bound.
\end{proof}

\subsection{Gap Analysis}
\label{app:lp_upper_bound}

The LP in \Cref{sec:value_information} yields a certified lower bound by replacing, within each cell,
the true benchmark $\opt(F_{\Cb}(\cdot\vert \rs,\Iset))$ by its cellwise supremum and each revenue coefficient
$\Rev(a_j\vert F_{\Cb}(\cdot\vert \rs,\Iset))$ by its cellwise infimum. In this subsection, we prove the
corresponding upper approximation guarantee. The argument differs across the two distribution
classes. For the general class $\Gb$, the certified LP is exact for mechanisms supported on the grid.
For the regular class $\Fb$, we compare the certified LP with a complementary optimistic LP and
bound the gap between the two programs.

\subsubsection{Exactness for the general class}

We first show that, under the general class, no additional approximation is introduced by replacing
the continuum of nature constraints with the certified cell constraints. Thus, once the mechanism is
restricted to be supported on the grid, the LP in \eqref{eq:minimax_lp} exactly characterizes the
grid-supported maximin value.

\begin{lemma}[Exactness on the grid for the general class]
\label{lem:general_grid_exactness}
Fix an information set $\Iset$ and an admissible grid $\IN=\{a_k\}_{k=0}^{M}$ that partitions
$\mathcal{S}(\Iset)$. Assume that $\max_{0\le \ell\le N+1}p_\ell \cdot q_\ell>0.$
Then
\begin{equation*}
\underline{\Lb}(\Gb(\Iset),\IN)
=
\sup_{\Psi\in\MechSet_{\IN}}
\inf_{F\in\Gb(\Iset)}
\frac{\Rev(\Psi,F)}{\opt(F)}.
\end{equation*}
\end{lemma}

\begin{proof}[\textbf{Proof of \Cref{lem:general_grid_exactness}}]
The inequality
\begin{equation*}
\underline{\Lb}(\Gb(\Iset),\IN)
\le
\sup_{\Psi\in\MechSet_{\IN}}
\inf_{F\in\Gb(\Iset)}
\frac{\Rev(\Psi,F)}{\opt(F)}
\end{equation*}
follows from \Cref{lem:lp_lower_bound}. We prove the reverse inequality.

Fix a mechanism $\Psi_{\IN}\in\MechSet_{\IN}$ with weights
$\boldsymbol{\psi}=(\psi_0,\ldots,\psi_M)$, and define
\begin{equation*}
\rho(\Psi_{\IN})
=
\inf_{F\in\Gb(\Iset)}
\frac{\Rev(\Psi_{\IN},F)}{\opt(F)}.
\end{equation*}
We show that $(\boldsymbol{\psi},\rho(\Psi_{\IN}))$ is feasible for the certified LP
\eqref{eq:minimax_lp} with $\Cb=\Gb$.

Fix a cell $[a_i,a_{i+1})$. Since the grid is admissible, there exists $k$ such that
$[a_i,a_{i+1})\subseteq[p_k,p_{k+1})$. For every $\rs\in[a_i,a_{i+1})$, we have $\overline U_{\Gb}(\rs-\vert\Iset)=q_k,
$ and  $\opt(F_{\Gb}(\cdot\vert \rs,\Iset))=q_k \cdot \rs.
$

Moreover, the proof of \Cref{lem:cellwise_monotonicity}, specialized to $\Cb=\Gb$, shows that
for every grid price $a_j\in\IN$, the mapping 
$\rs \mapsto \Rev(a_j\vert F_{\Gb}(\cdot\vert \rs,\Iset))
$
is constant on the cell $[a_i,a_{i+1})$. Hence, by the definition of the certified revenue
coefficient,
\begin{equation*}
\Rev(a_j\vert F_{\Gb}(\cdot\vert \rs,\Iset))
=
\widehat{\Rev}^{\Gb}_{ij},
\qquad
\forall \rs\in[a_i,a_{i+1}),\ \forall j.
\end{equation*}

By \Cref{lem:largest_element}, $F_{\Gb}(\cdot\vert \rs,\Iset)\in\Gb(\Iset)$ for every
$\rs\in\mathcal{S}(\Iset)$ and, by definition of $\rho(\Psi_{\IN})$,
\begin{equation*}
\sum_{j=0}^{M}\psi_j
\cdot \Rev(a_j\vert F_{\Gb}(\cdot\vert \rs,\Iset))
\ge
\rho(\Psi_{\IN}) \cdot 
\opt(F_{\Gb}(\cdot\vert \rs,\Iset)),
\qquad
\forall \rs\in[a_i,a_{i+1}).
\end{equation*}
Using the two identities above, this becomes
\begin{equation*}
\sum_{j=0}^{M}\psi_j \cdot \widehat{\Rev}^{\Gb}_{ij}
\ge
\rho(\Psi_{\IN}) \cdot  q_k \cdot \rs,
\qquad
\forall \rs\in[a_i,a_{i+1}).
\end{equation*}
Taking the supremum over $\rs\in[a_i,a_{i+1})$ yields
\begin{equation*}
\sum_{j=0}^{M}\psi_j \cdot \widehat{\Rev}^{\Gb}_{ij}
\ge
\rho(\Psi_{\IN}) \cdot  \widehat{\opt}^{\Gb}_i,
\end{equation*}
Thus all constraints of \eqref{eq:minimax_lp} are satisfied by
$(\boldsymbol{\psi},\rho(\Psi_{\IN}))$, and so
\begin{equation*}
\rho(\Psi_{\IN})
\le
\underline{\Lb}(\Gb(\Iset),\IN).
\end{equation*}
Taking the supremum over $\Psi_{\IN}\in\MechSet_{\IN}$ gives
\begin{equation*}
\sup_{\Psi\in\MechSet_{\IN}}
\inf_{F\in\Gb(\Iset)}
\frac{\Rev(\Psi,F)}{\opt(F)}
\le
\underline{\Lb}(\Gb(\Iset),\IN).
\end{equation*}
Combining the two inequalities proves the claim.
\end{proof}

\subsubsection{Definition of the LP upper bound for the regular class}

We now introduce an optimistic LP for the regular class. This LP is used only as an analytical
device to upper bound the best robust guarantee achievable by mechanisms supported on the grid.

Fix an admissible grid $\IN=\{a_i\}_{i=0}^{M}$. For each cell $i\in\{0,\ldots,M-1\}$ and each grid
price $a_j\in\IN$, define the optimistic coefficients
\begin{equation*}
\check{\opt}^{\Fb}_i
=
\inf_{\rs\in[a_i,a_{i+1})}
\opt(F_{\Fb}(\cdot\vert \rs,\Iset)),
\qquad
\check{\Rev}^{\Fb}_{ij}
=
\sup_{\rs\in[a_i,a_{i+1})}
\Rev(a_j\vert F_{\Fb}(\cdot\vert \rs,\Iset)).
\end{equation*}
By \Cref{lem:cellwise_monotonicity}, both extrema are attained at cell endpoints, hence
\begin{align*}
\check{\opt}^{\Fb}_i
&=
\min\!\left\{
\opt(F_{\Fb}(\cdot\vert a_i,\Iset)),
\opt(F_{\Fb}(\cdot\vert a_{i+1}^-,\Iset))
\right\},\\
\check{\Rev}^{\Fb}_{ij}
&=
\max\!\left\{
\Rev(a_j\vert F_{\Fb}(\cdot\vert a_i,\Iset)),
\Rev(a_j\vert F_{\Fb}(\cdot\vert a_{i+1}^-,\Iset))
\right\}.
\end{align*}

Next, consider mechanisms supported on $\IN$, represented by probability vectors
$\boldsymbol{\psi}=(\psi_0,\ldots,\psi_M)$ with $\psi_j\ge 0$ and $\sum_{j=0}^{M}\psi_j=1$.
Define the optimistic LP value for the regular class as
\begin{subequations}
\begin{alignat}{2}
\overline{\Lb}(\Fb(\Iset), \IN)
= \;& \sup_{\boldsymbol{\psi},\, \lambda}
&\quad& \lambda  \\[2pt]
& \text{s.t.}
& & \lambda \cdot \check{\opt}^{\Fb}_i
-
\sum_{j=0}^{M}
\psi_j \cdot \check{\Rev}^{\Fb}_{ij}
\le 0,
\qquad i=0,\ldots,M-1, \\[4pt]
& &
& \sum_{j=0}^{M} \psi_j = 1, \qquad
\psi_j \ge 0,\qquad
0 \le \lambda \le 1.
\end{alignat}
\label{eq:optimitic_LP}
\end{subequations}

\begin{lemma}[Optimistic upper bound for the regular class]
\label{prop:optimistic_ub_discrete}
Fix an information set $\Iset=\{(p_i,q_i)\}_{i=0}^{N+1}$ and an admissible grid
$\IN=\{a_i\}_{i=0}^{M}$ that partitions $\mathcal{S}(\Iset)$ as in \Cref{def:admissible}. Then
\begin{equation*}
\sup_{\Psi\in \MechSet_{\IN}}
\inf_{F\in\Fb(\Iset)}
\frac{\Rev(\Psi,F)}{\opt(F)}
\le
\overline{\Lb}(\Fb(\Iset),\IN).
\end{equation*}
\end{lemma}

\begin{proof}[\textbf{Proof of \Cref{prop:optimistic_ub_discrete}}]
Within a fixed cell $[a_i,a_{i+1})$, the definitions above and
\Cref{lem:cellwise_monotonicity} ensure the pointwise comparisons
\begin{equation*}
\opt(F_{\Fb}(\cdot\vert \rs,\Iset))
\ge
\check{\opt}^{\Fb}_i,
\qquad
\Rev(a_j\vert F_{\Fb}(\cdot\vert \rs,\Iset))
\le
\check{\Rev}^{\Fb}_{ij},
\qquad
\forall \rs\in[a_i,a_{i+1}),\ \forall j.
\end{equation*}
Therefore, if a discrete mechanism $\Psi_{\IN}$, with associated weights
$\boldsymbol{\psi}$, achieves a worst-case ratio $\lambda$, meaning that
\begin{equation*}
\Rev(\Psi_{\IN},F_{\Fb}(\cdot\vert \rs,\Iset))
\ge
\lambda \cdot \opt(F_{\Fb}(\cdot\vert \rs,\Iset))
\qquad
\forall \rs\in\mathcal{S}(\Iset),
\end{equation*}
then, for each cell $i$ and each $\rs\in[a_i,a_{i+1})$,
\begin{equation*}
\sum_{j=0}^{M}\psi_j \cdot \check{\Rev}^{\Fb}_{ij}
\ge
\sum_{j=0}^{M}\psi_j \cdot
\Rev(a_j\vert F_{\Fb}(\cdot\vert \rs,\Iset))
\ge
\lambda \cdot \opt(F_{\Fb}(\cdot\vert \rs,\Iset))
\ge
\lambda \cdot \check{\opt}^{\Fb}_i.
\end{equation*}
Hence every feasible pair $(\boldsymbol{\psi},\lambda)$ for the maximin ratio problem over
grid-supported mechanisms is also feasible for the optimistic LP. This inclusion shows that the
optimistic LP is a relaxation of the robust design problem over $\MechSet_{\IN}$, and its optimal
value upper bounds the true robust guarantee on the grid:
\begin{equation*}
\sup_{\Psi\in \MechSet_{\IN}}
\inf_{F\in\Fb(\Iset)}
\frac{\Rev(\Psi,F)}{\opt(F)}
\le
\overline{\Lb}(\Fb(\Iset),\IN).
\end{equation*}
\end{proof}

\subsubsection{Quantifying the approximation gap for the regular class}
\label{app:regular_lp_gap}

We now quantify the gap between the optimistic and certified LPs for the regular class. The
argument relies on two grid-independent moduli. The first controls the variation of the oracle
benchmark, while the second controls the variation of revenue coefficients only in the configurations
that can arise inside a grid cell.

Fix an information set $\Iset=\{(p_\ell,q_\ell)\}_{\ell=0}^{N+1}$ and assume
\begin{equation*}
R_{\Iset}
=
\max_{0\le \ell\le N+1}p_\ell \cdot q_\ell
>
0.
\end{equation*}
Let $\mathcal J^{\Fb}(\Iset)$ denote the finite collection of nonempty intervals obtained as follows.
On each information interval $[p_\ell,p_{\ell+1})$, split at the envelope-switching cutoff, if such a
cutoff lies in the interval; if no switch occurs, keep the whole interval. Intersect the resulting
intervals with $\mathcal S(\Iset)$ and discard empty intervals. By construction, on each
$J\in\mathcal J^{\Fb}(\Iset)$ the identity of the candidate function defining
$\overline U_{\Fb}(\cdot-\vert\Iset)$ is fixed.

For each $J\in\mathcal J^{\Fb}(\Iset)$, define the local benchmark modulus
\begin{equation*}
\mathrm{Lip}_{\opt}^{\Fb}(\Iset;J)
=
\sup_{\substack{\rs,\rs'\in J\\ \rs\ne \rs'}}
\frac{
|\opt(F_{\Fb}(\cdot\vert \rs,\Iset))-\opt(F_{\Fb}(\cdot\vert \rs',\Iset))|
}{
|\rs-\rs'|
},
\end{equation*}
and set
\begin{equation*}
L_{\opt}^{\Fb}(\Iset)
=
\max_{J\in\mathcal J^{\Fb}(\Iset)}
\mathrm{Lip}_{\opt}^{\Fb}(\Iset;J).
\end{equation*}

Next, for $J\in\mathcal J^{\Fb}(\Iset)$, define the set of cell-compatible triples
\begin{equation*}
\mathcal D_{\Rev}^{\Fb}(\Iset;J)
=
\left\{
(p,\rs,\rs')\in[\lb,\ub]\times J\times J:
\rs\ne\rs',
\;
p\le \min\{\rs,\rs'\}
\text{ or }
p\ge \max\{\rs,\rs'\}
\right\}.
\end{equation*}
These are precisely the relative positions that can occur when $\rs$ and $\rs'$ belong to the same
grid cell and $p$ is a grid price. Define
\begin{equation*}
\mathrm{Lip}_{\Rev}^{\Fb}(\Iset;J)
=
\sup_{(p,\rs,\rs')\in\mathcal D_{\Rev}^{\Fb}(\Iset;J)}
\frac{
\left|
\Rev(p\vert F_{\Fb}(\cdot\vert \rs,\Iset))
-
\Rev(p\vert F_{\Fb}(\cdot\vert \rs',\Iset))
\right|
}{
|\rs-\rs'|
},
\end{equation*}
and set
\begin{equation*}
L_{\Rev}^{\Fb}(\Iset)
=
\max_{J\in\mathcal J^{\Fb}(\Iset)}
\mathrm{Lip}_{\Rev}^{\Fb}(\Iset;J).
\end{equation*}
Both constants depend only on the information set and on the coarse decomposition induced by the
information prices and switching cutoffs. In particular, they do not depend on the  grid
$\IN$.

\begin{lemma}[Cellwise variation bounds for the regular class]
\label{lem:regular_cellwise_variation_bounds}
Fix an information set $\Iset$ such that $R_{\Iset}>0$, and let $\IN=\{a_i\}_{i=0}^{M}$ be an
admissible grid that partitions $\mathcal S(\Iset)$. Then, for every cell $[a_i,a_{i+1})$,
\begin{equation*}
0
\le
\widehat{\opt}^{\Fb}_i
-
\check{\opt}^{\Fb}_i
\le
L_{\opt}^{\Fb}(\Iset) \cdot \Delta(\IN).
\end{equation*}
Moreover, for every grid price $a_j\in\IN$,
\begin{equation*}
0
\le
\check{\Rev}^{\Fb}_{ij}
-
\widehat{\Rev}^{\Fb}_{ij}
\le
L_{\Rev}^{\Fb}(\Iset) \cdot \Delta(\IN).
\end{equation*}
\end{lemma}

\begin{proof}[\textbf{Proof of \Cref{lem:regular_cellwise_variation_bounds}}]
We first note that the two moduli are finite. Fix $J\in\mathcal J^{\Fb}(\Iset)$. Since the active
candidate defining $\overline U_{\Fb}(\cdot-\vert\Iset)$ is fixed on $J$, there exist constants
$\alpha_J,\beta_J$ such that
\begin{equation*}
\frac{1}{\overline U_{\Fb}(\rs-\vert\Iset)}
=
\alpha_J+\beta_J \cdot \rs,
\qquad
\rs\in J.
\end{equation*}
Therefore,
\begin{equation*}
O(\rs)
=
\opt(F_{\Fb}(\cdot\vert \rs,\Iset))
=
\rs \cdot \overline U_{\Fb}(\rs-\vert\Iset)
=
\frac{\rs}{\alpha_J+\beta_J \cdot \rs},
\qquad
\rs\in J.
\end{equation*}
The denominator is strictly positive on the relevant set of feasible oracle prices, so
$O(\cdot)$ is Lipschitz on $J$.

We next consider the revenue coefficients. Let $[p_k,p_{k+1})$ be the information interval
containing $J$, and write
\begin{equation*}
z_k=\frac{1}{q_k},
\qquad
z_{k+1}=\frac{1}{q_{k+1}},
\qquad
z(\rs)=\frac{1}{\overline U_{\Fb}(\rs-\vert\Iset)}.
\end{equation*}
If $p\notin[p_k,p_{k+1})$, then the value of
$\Rev(p\vert F_{\Fb}(\cdot\vert \rs,\Iset))$ is constant in $\rs$ on $J$. If
$p\in[p_k,p_{k+1})$, the lower-envelope construction gives the following expressions. When
$p\le \rs$,
\begin{equation*}
\overline F_{\Fb}(p-\vert \rs,\Iset)
=
\frac{\rs-p_k}
{
z_k \cdot (\rs-p)+z(\rs) \cdot (p-p_k)
}.
\end{equation*}
When $p\ge \rs$,
\begin{equation*}
\overline F_{\Fb}(p-\vert \rs,\Iset)
=
\frac{p_{k+1}-\rs}
{
z(\rs)\cdot (p_{k+1}-p)+z_{k+1}\cdot(p-\rs)
}.
\end{equation*}

It remains to verify that these expressions have uniformly bounded derivatives with respect to
$\rs$ on the cell-compatible domains. Consider first the region $p\le \rs$. Differentiating the
first expression gives
\begin{equation*}
\frac{\partial}{\partial \rs}
\overline F_{\Fb}(p-\vert \rs,\Iset)
=
\frac{
(p-p_k)\bigl(z(p_k)-z_k\bigr)
}{
\left[z_k(\rs-p)+z(\rs)(p-p_k)\right]^2
},
\end{equation*}
where $z(p_k)$ denotes the value at $p_k$ of the affine branch $z(\cdot)$ active on $J$. If $J$ is
bounded away from $p_k$, then $\rs-p_k$ is uniformly bounded away from zero. Since $z_k>0$ and
$z(\rs)>0$ on $J$, the denominator is uniformly bounded away from zero over all
$p\in[p_k,\rs]$ and $\rs\in J$. Hence the derivative is uniformly bounded in this case.
It remains to consider the case in which the closure of $J$ contains $p_k$. By the construction of
the coarse partition, this can happen only on the leftmost subinterval of $[p_k,p_{k+1})$, where
the active branch is the one passing through $(p_k,q_k)$. Thus $z(p_k)=z_k$, and the numerator
in the derivative above is identically zero. The derivative therefore admits a bounded continuous
extension to the boundary $p=\rs=p_k$.

The same argument applies on the region $p\ge \rs$. 
Thus, on each $J\in\mathcal J^{\Fb}(\Iset)$, the derivatives of the sale-probability coefficients
with respect to $\rs$ are uniformly bounded on all cell-compatible configurations. Moreover, since posted
prices are bounded by $\ub$, the same conclusion holds for the revenue coefficients
$p\,\overline F_{\Fb}(p-\vert \rs,\Iset)$. Therefore $L_{\Rev}^{\Fb}(\Iset)<\infty$.

Now fix a grid cell $[a_i,a_{i+1})$. Since $\IN$ is admissible, this cell is contained in some
$J\in\mathcal J^{\Fb}(\Iset)$. By the definition of $L_{\opt}^{\Fb}(\Iset)$,
\begin{equation*}
\widehat{\opt}^{\Fb}_i-\check{\opt}^{\Fb}_i
=
\sup_{\rs\in[a_i,a_{i+1})}O(\rs)
-
\inf_{\rs\in[a_i,a_{i+1})}O(\rs)
\le
L_{\opt}^{\Fb}(\Iset) \cdot (a_{i+1}-a_i)
\le
L_{\opt}^{\Fb}(\Iset) \cdot \Delta(\IN).
\end{equation*}
This proves the first inequality.

For the revenue coefficients, fix a grid price $a_j\in\IN$. For any
$\rs,\rs'\in[a_i,a_{i+1})$, the point $a_j$ cannot lie strictly between $\rs$ and $\rs'$, because
$a_j$ is a grid point and $\rs,\rs'$ lie in the same grid cell. Therefore
$(a_j,\rs,\rs')\in\mathcal D_{\Rev}^{\Fb}(\Iset;J)$. By the definition of
$L_{\Rev}^{\Fb}(\Iset)$,
\begin{align*}
\check{\Rev}^{\Fb}_{ij}
-
\widehat{\Rev}^{\Fb}_{ij}
&=
\sup_{\rs\in[a_i,a_{i+1})}
\Rev(a_j\vert F_{\Fb}(\cdot\vert \rs,\Iset))
-
\inf_{\rs\in[a_i,a_{i+1})}
\Rev(a_j\vert F_{\Fb}(\cdot\vert \rs,\Iset))\\
&\le
L_{\Rev}^{\Fb}(\Iset) \cdot (a_{i+1}-a_i)
\le
L_{\Rev}^{\Fb}(\Iset) \cdot \Delta(\IN).
\end{align*}
This proves the second inequality.
\end{proof}

\begin{theorem}[Quantified gap between the optimistic and certified LPs for the regular class]
\label{thm:regular_lp_gap}
Fix an information set $\Iset=\{(p_\ell,q_\ell)\}_{\ell=0}^{N+1}$ and an admissible grid
$\IN=\{a_i\}_{i=0}^{M}$ that partitions $\mathcal S(\Iset)$. Assume
\begin{equation*}
R_{\Iset}
=
\max_{0\le \ell\le N+1}p_\ell \cdot q_\ell
>
0.
\end{equation*}
Then
\begin{equation*}
0
\le
\overline{\Lb}(\Fb(\Iset),\IN)
-
\underline{\Lb}(\Fb(\Iset),\IN)
\le
\frac{
L_{\Rev}^{\Fb}(\Iset)+L_{\opt}^{\Fb}(\Iset)
}{
R_{\Iset}
}
 \cdot \Delta(\IN).
\end{equation*}
\end{theorem}

\begin{proof}[\textbf{Proof of \Cref{thm:regular_lp_gap}}]
The first inequality follows because the optimistic LP is a relaxation of the certified LP:
$\check{\opt}^{\Fb}_i\le \widehat{\opt}^{\Fb}_i$ and
$\check{\Rev}^{\Fb}_{ij}\ge \widehat{\Rev}^{\Fb}_{ij}$ for all $i,j$.

We prove the upper bound. Let $(\overline{\boldsymbol{\psi}},\overline{\lambda})$ be an optimal
solution of the optimistic LP \eqref{eq:optimitic_LP}. Define
\begin{equation*}
\delta
=
\frac{
L_{\Rev}^{\Fb}(\Iset)+L_{\opt}^{\Fb}(\Iset)
}{
R_{\Iset}
}
\Delta(\IN).
\end{equation*}
If $\overline{\lambda}\le \delta$, then
\begin{equation*}
\overline{\Lb}(\Fb(\Iset),\IN)
-
\underline{\Lb}(\Fb(\Iset),\IN)
\le
\overline{\lambda}
\le
\delta,
\end{equation*}
because the certified LP value is nonnegative. Hence assume $\overline{\lambda}>\delta$, and set $\lambda^-=\overline{\lambda}-\delta$.
We show that $(\overline{\boldsymbol{\psi}},\lambda^-)$ is feasible for the certified LP.

Fix a cell $i$. Since $F_{\Fb}(\cdot\vert \rs,\Iset)\in\Fb(\Iset)$ for every
$\rs\in\mathcal S(\Iset)$, the benchmark is uniformly bounded below by the best observed revenue:
\begin{equation*}
\opt(F_{\Fb}(\cdot\vert \rs,\Iset))
\ge
R_{\Iset},
\qquad
\forall \rs\in\mathcal S(\Iset).
\end{equation*}
In particular, this implies that  $\widehat{\opt}^{\Fb}_i\ge R_{\Iset}.
$ Furthermore, by \Cref{lem:regular_cellwise_variation_bounds}, we have 
\begin{equation*} 
\check{\opt}^{\Fb}_i
\ge
\widehat{\opt}^{\Fb}_i
-
L_{\opt}^{\Fb}(\Iset) \cdot \Delta(\IN) \quad \mbox{and} \quad \widehat{\Rev}^{\Fb}_{ij}
\ge
\check{\Rev}^{\Fb}_{ij}
-
L_{\Rev}^{\Fb}(\Iset) \cdot \Delta(\IN).
\end{equation*}
Using $\sum_{j=0}^{M}\overline{\psi}_j=1$, we obtain
\begin{align*}
\sum_{j=0}^{M}\overline{\psi}_j \cdot \widehat{\Rev}^{\Fb}_{ij}
&\ge
\sum_{j=0}^{M}\overline{\psi}_j \cdot \check{\Rev}^{\Fb}_{ij}
-
L_{\Rev}^{\Fb}(\Iset) \cdot \Delta(\IN)\\
&\ge
\overline{\lambda} \cdot \check{\opt}^{\Fb}_i
-
L_{\Rev}^{\Fb}(\Iset) \cdot \Delta(\IN)\\
&\ge
\overline{\lambda} \cdot \widehat{\opt}^{\Fb}_i
-
\left(
L_{\opt}^{\Fb}(\Iset)+L_{\Rev}^{\Fb}(\Iset)
\right) \cdot \Delta(\IN),
\end{align*}
where the second inequality uses feasibility of
$(\overline{\boldsymbol{\psi}},\overline{\lambda})$ for the optimistic LP, and the last inequality uses
$\overline{\lambda}\le 1$. Since $\widehat{\opt}^{\Fb}_i\ge R_{\Iset}$, the definition of $\delta$ implies
\begin{equation*}
\left(
L_{\opt}^{\Fb}(\Iset)+L_{\Rev}^{\Fb}(\Iset)
\right) \cdot \Delta(\IN)
\le
\delta \cdot \widehat{\opt}^{\Fb}_i.
\end{equation*}
Therefore,
\begin{equation*}
\sum_{j=0}^{M}\overline{\psi}_j \cdot \widehat{\Rev}^{\Fb}_{ij}
\ge
(\overline{\lambda}-\delta) \cdot \widehat{\opt}^{\Fb}_i
=
\lambda^- \cdot \widehat{\opt}^{\Fb}_i.
\end{equation*}
This is exactly the certified LP constraint for cell $i$. Since $i$ was arbitrary,
$(\overline{\boldsymbol{\psi}},\lambda^-)$ is feasible for the certified LP, and hence
\begin{equation*}
\underline{\Lb}(\Fb(\Iset),\IN)
\ge
\lambda^-
=
\overline{\Lb}(\Fb(\Iset),\IN)-\delta.
\end{equation*}
Rearranging proves the result.
\end{proof}

\subsection{Proof of \Cref{thm:lp_discrete_bound}}
\label{sec:apx_proof_thm2}

\begin{proof}[\textbf{Proof of \Cref{thm:lp_discrete_bound}}]
The first two inequalities in the statement follow directly from \Cref{lem:lp_lower_bound}. Indeed,
if $(\boldsymbol{\psi}^*,\lambda^*)$ is an optimal solution of \eqref{eq:minimax_lp}, and
$\Psi_{\IN}^*$ denotes the corresponding mechanism supported on $\IN$, then
\begin{equation*}
\underline{\Lb}(\Cb(\Iset),\IN)
=
\lambda^*
\le
\inf_{F\in\Cb(\Iset)}
\frac{\Rev(\Psi_{\IN}^*,F)}{\opt(F)}
\le
\sup_{\Psi\in\MechSet}
\inf_{F\in\Cb(\Iset)}
\frac{\Rev(\Psi,F)}{\opt(F)}.
\end{equation*}

It remains to prove the final upper bound. Let $R_{\Iset}
=
\max_{0\le i\le N+1}p_i \cdot q_i,$
which is strictly positive by assumption. Define the unrestricted and grid-supported maximin values
\begin{equation*}
V_{\Cb}
=
\sup_{\Psi\in\MechSet}
\inf_{F\in\Cb(\Iset)}
\frac{\Rev(\Psi,F)}{\opt(F)},
\qquad
V_{\Cb,\IN}
=
\sup_{\Psi\in\MechSet_{\IN}}
\inf_{F\in\Cb(\Iset)}
\frac{\Rev(\Psi,F)}{\opt(F)}.
\end{equation*}
By \Cref{prop:mech_discretization_gap_admissible},
\begin{equation*}
V_{\Cb}
\le
V_{\Cb,\IN}
+
\frac{\Delta(\IN)}{R_{\Iset}}.
\end{equation*}

We now distinguish the two distribution classes.

First, suppose $\Cb=\Gb$.  \Cref{lem:general_grid_exactness} implies that, $V_{\Gb,\IN}
=
\underline{\Lb}(\Gb(\Iset),\IN).$
Therefore,
\begin{equation*}
V_{\Gb}
\le
\underline{\Lb}(\Gb(\Iset),\IN)
+
\frac{\Delta(\IN)}{R_{\Iset}}.
\end{equation*}
Thus the desired upper bound holds for the general class with
\begin{equation*}
C_{\Gb}
=
\frac{1}{R_{\Iset}}.
\end{equation*}

Next, suppose $\Cb=\Fb$. We have
\begin{equation*}
V_{\Fb,\IN} \stackrel{(a)}{\leq} \overline{\Lb}(\Fb(\Iset),\IN)
\stackrel{(b)}{\le}
\underline{\Lb}(\Fb(\Iset),\IN)
+
\frac{
L_{\Rev}^{\Fb}(\Iset)+L_{\opt}^{\Fb}(\Iset)
}{
R_{\Iset}
}
\Delta(\IN),
\end{equation*}
where $(a)$ follows from \Cref{prop:optimistic_ub_discrete} and $(b)$ is a consequence of \Cref{thm:regular_lp_gap}.
Hence,
\begin{equation*}
V_{\Fb}
\le
\underline{\Lb}(\Fb(\Iset),\IN)
+
\frac{
L_{\Rev}^{\Fb}(\Iset)+L_{\opt}^{\Fb}(\Iset)+1
}{
R_{\Iset}
}
\Delta(\IN).
\end{equation*}
Thus the desired upper bound holds for the regular class with
\begin{equation*}
C_{\Fb}
=
\frac{
L_{\Rev}^{\Fb}(\Iset)+L_{\opt}^{\Fb}(\Iset)+1
}{
R_{\Iset}
}.
\end{equation*}

Combining the two cases proves the theorem. In particular, the constant in the statement can be
chosen as
\begin{equation*}
C
=
\begin{cases}
\dfrac{1}{R_{\Iset}}, & \text{if } \Cb=\Gb,\\[8pt]
\dfrac{
L_{\Rev}^{\Fb}(\Iset)+L_{\opt}^{\Fb}(\Iset)+1
}{
R_{\Iset}
}, & \text{if } \Cb=\Fb.
\end{cases}
\end{equation*}
Both constants depend only on the information set and the distribution class, and are independent
of the admissible grid $\IN$.
\end{proof}

\subsection{Near Maximin-Optimal Policy}

Recall the information set in \Cref{fig_feasible} defined as 
\begin{equation*}
    \{(p_i,q_i), \, i \in \{1,\ldots, 4\} \} = \{ (0.47,0.25), (0.70,0.055), (0.89,0.017), (0.98,0.006)   \}
\end{equation*}
and with $\lb = 0, \, \ub =1$. In what follows, we illustrate the shape of the set of candidate optimal prices $\mathcal{S}(\Iset)$ and the structure of the near-optimal maximin policy defined in \Cref{sec:value_information} for this information set under both the assumptions that the set of distributions are general and regular. 

\definecolor{regularpolicycolor}{HTML}{0F766E}
\definecolor{generalpolicycolor}{HTML}{C2410C}
\definecolor{pricepolicycolor}{HTML}{DC2626}
\definecolor{policygridcolor}{HTML}{D9E2F0}

\pgfplotsset{
  maximinlegend/.style={
    legend style={
      font=\small,
      row sep=2pt,
      draw=none,
      fill=white,
      fill opacity=0.9,
      text opacity=1,
      legend pos=south east
    }
  },
  maximinpolicyaxis/.style={
    width=0.8\linewidth,
    height=0.24\textheight,
    xmin=0,
    xmax=1,
    ymin=0,
    ymax=1.02,
    ylabel={CDF},
    ylabel style={font=\scriptsize},
    xtick={0,0.2,0.4,0.6,0.8,1},
    ytick={0,0.25,0.5,0.75,1},
    grid=both,
    major grid style={draw=policygridcolor, line width=0.4pt},
    minor grid style={draw=policygridcolor, line width=0.2pt},
    table/col sep=comma,
    clip=false,
    maximinlegend,
  }
}

\pgfplotstableread[col sep=comma]{Data/figure1_minimax_policy_feasible_set_regular.csv}\regularpolicyset
\pgfplotstableread[col sep=comma]{Data/figure1_minimax_policy_feasible_set_general.csv}\generalpolicyset

\newcommand{\DrawPolicyFeasibleBand}[2]{%
  \pgfplotstablegetrowsof{#1}%
  \pgfmathtruncatemacro{\PolicyLastRow}{\pgfplotsretval-1}%
  \pgfplotsinvokeforeach{0,...,\PolicyLastRow}{%
    \pgfplotstablegetelem{##1}{left}\of{#1}\edef\PolicyLeft{\pgfplotsretval}%
    \pgfplotstablegetelem{##1}{right}\of{#1}\edef\PolicyRight{\pgfplotsretval}%
    \pgfplotstablegetelem{##1}{y_bottom}\of{#1}\edef\PolicyBottom{\pgfplotsretval}%
    \pgfplotstablegetelem{##1}{y_top}\of{#1}\edef\PolicyTop{\pgfplotsretval}%
    \addplot[
      draw=none,
      fill=#2,
      fill opacity=0.10,
      forget plot,
    ] coordinates {
      (\PolicyLeft,\PolicyBottom)
      (\PolicyLeft,\PolicyTop)
      (\PolicyRight,\PolicyTop)
      (\PolicyRight,\PolicyBottom)
    } \closedcycle;
  }%
}

\begin{figure}[h!]
\centering

\begin{subfigure}{0.97\textwidth}
\centering
\begin{tikzpicture}
\begin{axis}[
  maximinpolicyaxis,
  title={Regular assumption},
  xlabel={},
]
\DrawPolicyFeasibleBand{\regularpolicyset}{regularpolicycolor}

\addplot[
  very thick,
  regularpolicycolor,
  forget plot,
] table[x=x, y=cdf] {Data/figure1_minimax_policy_cdf_regular.csv};

\addplot[
  only marks,
  mark=triangle*,
  mark size=4pt,
  draw=pricepolicycolor,
  fill=pricepolicycolor,
  forget plot,
] table[x=price, y=y_dot] {Data/figure1_minimax_policy_price_marks.csv};

\addlegendimage{area legend, fill=regularpolicycolor, fill opacity=0.14, draw=none}
\addlegendentry{Regular $S(I_{p,q})$}
\addlegendimage{only marks, mark=triangle*, mark options={draw=pricepolicycolor, fill=pricepolicycolor}}
\addlegendentry{Observed prices $p_i$}
\end{axis}
\end{tikzpicture}
\end{subfigure}

\vspace{0.8em}

\begin{subfigure}{0.97\textwidth}
\centering
\begin{tikzpicture}
\begin{axis}[
  maximinpolicyaxis,
  title={General assumption},
  xlabel={$p$},
]
\DrawPolicyFeasibleBand{\generalpolicyset}{generalpolicycolor}

\addlegendimage{area legend, fill=generalpolicycolor, fill opacity=0.14, draw=none}
\addlegendentry{General $S(I_{p,q})$}
\addlegendimage{only marks, mark=triangle*, mark options={draw=pricepolicycolor, fill=pricepolicycolor}}
\addlegendentry{Observed prices $p_i$}

\addplot[
  very thick,
  generalpolicycolor,
  forget plot,
] table[x=x, y=cdf] {Data/figure1_minimax_policy_cdf_general.csv};

\addplot[
  only marks,
  mark=triangle*,
  mark size=4pt,
  draw=pricepolicycolor,
  fill=pricepolicycolor,
  forget plot,
] table[x=price, y=y_dot] {Data/figure1_minimax_policy_price_marks.csv};

\end{axis}
\end{tikzpicture}
\end{subfigure}

\caption{\textbf{CDF of the near maximin-optimal policy.}
Each panel shows the cumulative distribution function of the near-optimal maximin pricing
policy under the regular or general assumption. The red markers on the x-axis indicate the
observed prices $p_i$, and the shaded region indicates the set $S(I_{p,q})$. (M=400)}
\label{fig:maximin-optimal-policy}
\end{figure}

\section{Algorithmic Details and Additional Results for \Cref{sec:applications}}
\label{sec:apx_algorithms}

In this appendix, we provide the algorithmic details for the dynamic pricing procedures discussed in \Cref{sec:ternary}. We describe both the classical ternary search algorithm and the meta-algorithm based on our value-of-information framework.

\subsection{Ternary Search Procedure}

We first recall the ternary search algorithm used as the baseline exploration procedure. This algorithm applies when the revenue function is unimodal, as is the case under regular demand. It is presented in Algorithm~\ref{algo:ts}.

\begin{algorithm}[h]
\caption{Ternary Search}\label{algo:ts}
\texttt{Input:} Acceptable sub-optimality $\epsilon$\;
{Initialize $a=\lb$, $b=\ub$, $\Iset^{1} = \{(\lb, 1), (\ub, 0)\}$\;
\While {$t < \frac{\log(\frac{\ub-\lb}{\lb \cdot \epsilon})}{\log(3/2)}$ \tcp*{Ternary search usual stopping criterion}} {
\tcc{Exploration procedure of the Ternary search algorithm}
    Price at $p_1 = a + \frac{b-a}{3}$, observe $q_1$ \;
    Price at $p_2 = a + \frac{2(b-a)}{3}$, observe $q_2$ \;
    Record observed conversion rates $\Iset^{t+1} = \Iset^{t} \cup \{(p_1,q_1), (p_2, q_2)\} $ \;
    
    \If {$p_1 \cdot q_1 < p_2 \cdot q_2$} {
        Set $a = p_1$ \;
    }
    \Else {
        Set $b = p_2$ \;
    }
    Increment $t$
}
\textrm{Return} $a$ if $a \cdot q_a \geq b \cdot q_b$ else return $b$ \tcp*{Exploitation price}
}
\end{algorithm}

The stopping criterion used in this procedure is non-adaptive and depends only on the target accuracy level $\epsilon$. It ensures that the final price lies in an interval of sufficiently small length, which implies near-optimal performance under unimodality. The next result characterizes the number of samples required to guarantee a given performance.

\begin{lemma}
\label{lem:ternary_stop}
Assume  $\lb > 0$. The ternary stopping criterion is a valid stopping criterion.
\end{lemma}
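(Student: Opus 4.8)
The plan is to combine the textbook correctness argument for ternary search with the structural properties of revenue curves of regular distributions. Recall that a stopping criterion is \emph{valid} if, whenever the algorithm halts, the exploitation price it returns earns at least a $1-\epsilon$ fraction of $\opt(F)$; so the goal is to show that the price returned by Algorithm~\ref{algo:ts} after it exits the while loop satisfies $\Rev(p\,|\,F)\ge (1-\epsilon)\,\opt(F)$.

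First I would prove, by induction on the iteration counter $t$, that the current bracket $[a,b]$ always contains some optimal price $\rs\in\mathcal{O}(F)$. The base case is immediate since $\mathcal{O}(F)\subseteq[\lb,\ub]=[a,b]$. For the inductive step, I use that $\Cb=\Fb$ implies the revenue curve $p\mapsto\Rev(p\,|\,F)$ is unimodal: if $\Rev(p_1\,|\,F)<\Rev(p_2\,|\,F)$ with $p_1<p_2$, unimodality forces every maximizer of $\Rev(\cdot\,|\,F)$ to lie in $(p_1,b]$, so replacing $a$ by $p_1$ cannot discard an optimal price, and symmetrically when $b$ is replaced by $p_2$. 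Next I would track the bracket length: each iteration replaces $[a,b]$ by a subinterval of length $\tfrac23(b-a)$, so after $T$ iterations the bracket has length $(\tfrac23)^{T}(\ub-\lb)$; the while-condition exits the loop as soon as $T\ge \log\!\big(\tfrac{\ub-\lb}{\ub\epsilon}\big)/\log(3/2)$, which gives an explicit upper bound on the length of the final bracket.

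The final step is to turn this length bound into a revenue ratio. Let $[a,b]$ be the final bracket and $\rs\in[a,b]$ an optimal price. Since $\bF$ is non-increasing, $\Rev(a\,|\,F)=a\,\bF(a-)\ge a\,\bF(\rs-)=\tfrac{a}{\rs}\,\opt(F)$, and likewise $\Rev(b\,|\,F)=b\,\bF(b-)\ge \rs\,\bF(b-)=\tfrac{\bF(b-)}{\bF(\rs-)}\,\opt(F)$; the returned exploitation price is whichever of $a,b$ has the larger value of $p\cdot\bF(p-)$, so it earns at least the maximum of these two lower bounds. Using $a\ge\lb$ and the fact that $\rs-a$ is at most the (small) final bracket length, the first bound already gives $\tfrac a\rs\ge 1-\epsilon$ whenever $\rs$ is not much larger than $a$, and when $\rs$ sits strictly inside $[a,b]$ one falls back on the second bound, invoking regularity of $F$ (concavity of the revenue curve in quantile space) to control the demand drop $\bF(\rs-)-\bF(b-)$ on $(\rs,b]$, hence $\opt(F)-\Rev(b\,|\,F)$.

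The step I expect to be the main obstacle is precisely this last one: passing from an \emph{absolute} control on the bracket length to a \emph{multiplicative} $(1-\epsilon)$ guarantee on revenue, since the one-sided estimates degrade exactly when the optimal price is small or sits in the interior of the bracket, and it is there that the regularity assumption has to be used essentially rather than cosmetically. A handful of boundary situations should also be dispatched separately — $\rs$ coinciding with $a$ or $b$ (where the bound is trivial), a possible atom of $F$ at $\ub$, and the exact handling of the ceiling in the iteration count.
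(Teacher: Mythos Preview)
Your first two steps --- the bracket always contains an optimal price, and its length shrinks by a factor $2/3$ per iteration --- are exactly the paper's. Your first inequality in step three, $\Rev(a\,|\,F)\ge \tfrac{a}{\rs}\,\opt(F)$, is also the paper's estimate, and by itself it already finishes the argument with no case distinction: since $\rs\in[a,b]$ and $\rs\ge a\ge\lb$,
\[
1-\frac{\Rev(a\,|\,F)}{\opt(F)}\;\le\;1-\frac{a}{\rs}\;=\;\frac{\rs-a}{\rs}\;\le\;\frac{b-a}{\lb}\;\le\;\frac{\ub-\lb}{\lb}\Bigl(\tfrac{2}{3}\Bigr)^{T},
\]
and the returned price has revenue at least $\Rev(a\,|\,F)$. (The paper runs the same chain in a slightly different form, bounding $(\rs-a)\,\bF(\rs-)\le b-a$ and then dividing by $\opt(F)\ge\lb$.) The obstacle you anticipate does not materialize: the bound is uniform over all $\rs\in[a,b]$, so the second estimate on $\Rev(b\,|\,F)$, the case split ``$\rs$ close to $a$'' versus ``$\rs$ in the interior'', and the appeal to concavity in quantile space are all unnecessary. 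Regularity enters only through unimodality, which you already used in step one.
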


\begin{proof}[\textbf{Proof of \Cref{lem:ternary_stop}.}]
Denote by $T_N = [a_N, b_N]$, the interval obtained at iteration $N$ in Algorithm \ref{algo:ts}, by uni-modality of $\Rev\left(\cdot\vert F\right)$ we know that $\rs$ is in $[a_N, b_N]$, therefore we have:
\begin{align*}
    1-\frac{\Rev\left(a_N\vert F\right)}{\Rev\left(\rs\vert F\right)} &= \frac{\rs \cdot \bF(\rs-) - a_N \cdot \bF(a_N-)}{\Rev\left(\rs\vert F\right)} \\
    &\leq \frac{(\rs - a_N) \cdot \bF(\rs-)}{\Rev\left(\rs\vert F\right)}\leq \frac{b_N -a_N}{\Rev\left(\rs\vert F\right)} = \frac{\ub - \lb}{\Rev\left(\rs\vert F\right)} \left(\frac{2}{3}\right)^N
    \leq \frac{\ub - \lb}{\lb} \left(\frac{2}{3}\right)^N
\end{align*}
Therefore for $N \geq N(\epsilon) = \frac{\log(\frac{\ub-\lb}{\lb \cdot \epsilon})}{\log(3/2)}$, we obtain the desired level of the maximin ratio.
\end{proof}

\subsection{Meta-Algorithm with Data-Driven Stopping}

We now describe in Algorithm~\ref{algo:algo_with_stopping} the meta-algorithm introduced in \Cref{sec:ternary}, which replaces the stopping criterion of a generic exploration procedure with a data-driven rule based on the value of information.

\begin{algorithm}[h]
\caption{Meta-algorithm for dynamic pricing} \label{algo:algo_with_stopping}
\texttt{Input:} Dynamic pricing algorithm $\theta$; Acceptable sub-optimality $\epsilon$; parameter $M$ for discretization in \eqref{eq:minimax_lp}\;
Initialize $\Iset^{1}$\;
\While {$\underline{\Lb}(\Cb(\Iset^{t}), \IN) < 1-\epsilon$} {
    Use the exploration procedure of $\theta$\;
    Update $\Iset^{t+1}$ and increment $t$\;
}
 \textrm{Return} $(\psi^*_i)_{i \in \{1,\ldots,M\}}$ solution of the discretized problem \eqref{eq:minimax_lp} \;
\end{algorithm}

At each iteration, the algorithm evaluates the quantity $\underline{\Lb}(\Cb(\Iset^{t}), \IN)$, which provides a lower bound on the maximin ratio achievable given the current information set. The exploration phase stops as soon as this bound exceeds $1-\epsilon$, ensuring that the returned mechanism satisfies the desired performance guarantee.

\section{Experimental setting for Figure 6}
\label{sec:apx_figure6}

We now describe in more details the experimental setup used to generate \Cref{fig:true_ratio_vs_M}.

We consider grids of $K \in \{3,5,9\}$ prices, equally spaced over $[\lb,\ub]$, together with their associated true conversion rates $q_i = q(p_i)$. The resulting price-quantile pairs are:
\begin{itemize}
    \item $K = 3$: 
    \[
    (20, 0.599), \quad (60, 0.057), \quad (100, 0.002)
    \]
    \item $K = 5$: 
    \[
    (20, 0.599), \quad (40, 0.231), \quad (60, 0.057), \quad (80, 0.012), \quad (100, 0.002)
    \]
    \item $K = 9$: 
    \[
    (20, 0.599), \quad (30, 0.401), \quad (40, 0.231), \quad (50, 0.119), \quad (60, 0.057), 
    \]
    \[
    (70, 0.027), \quad (80, 0.012), \quad (90, 0.006), \quad (100, 0.002)
    \]
\end{itemize}

At each price $p_i$, we observe $T$ independent binary outcomes, where the number of purchases follows a Binomial distribution with parameters $(T, q_i)$. We vary the sample size across $T \in \{25, 50, 100, 250, 500, 750, 1000\}$.

For each configuration $(K,T)$, we generate $1000$ independent datasets. In each dataset, empirical conversion rates are first projected onto a feasible set consistent with the assumed distribution class, using isotonic regression in the general case and a convexity-constrained projection in the regular case. We then compute the corresponding pricing policy by solving the discretized linear program described in \Cref{sec:value_information} with $M=100$ support points.
\end{document}